\documentclass[a4paper,11pt]{article}
\usepackage{mathpazo}
\usepackage{multirow}
\usepackage{nicefrac}
\usepackage{setspace}
\usepackage[margin=1in]{geometry}
\usepackage{xcolor}
\definecolor{xlinkcolor}{cmyk}{1,1,0,0}
\usepackage{hyperref}
\hypersetup{colorlinks=true,linkcolor=black,citecolor=xlinkcolor}
\usepackage{color}
\usepackage{amsthm}
\usepackage{graphicx}
\usepackage{tablefootnote}
\usepackage{amsmath}
\usepackage{amssymb}
\usepackage{bbm}
\usepackage{dsfont}

\usepackage{diagbox}
\usepackage{makecell}
\usepackage{float}

\newtheorem{theorem}{Theorem}[section]

\newtheorem{corollary}[theorem]{Corollary}
\newtheorem{lemma}[theorem]{Lemma}

\newtheorem{claim}[theorem]{Claim}
\newtheorem{hypothesis}[theorem]{Hypothesis}
\newtheorem{proposition}[theorem]{Proposition}
\newtheorem{definition}[theorem]{Definition}

\newtheorem{remark}[theorem]{Remark}
\newtheorem{fact}[theorem]{Fact}
\newtheorem{open}[theorem]{Open Problem}

\usepackage{xspace}
\newcommand{\cost}{\mathsf{cost}}

\newcommand{\kmed}{$k$-$\mathsf{median}$\xspace}
\newcommand{\kmean}{$k$-$\mathsf{means}$\xspace}
\newcommand{\minsum}{$k$-$\mathsf{minsum}$\xspace}

\newcommand{\gapeth}{$\mathsf{Gap}$-$\mathsf{ETH}$\xspace}
\newcommand{\NP}{$\mathsf{NP}$\xspace}
\newcommand{\NPH}{$\mathsf{NP}\neq \mathsf{P}$}

\newcommand{\Good}{\mathsf{Good}}

\newcommand{\poly}{\text{poly}}

\newcommand{\con}{\mathsf{cov}}
\newcommand{\Po}{\mathcal{P}}
\newcommand{\C}{\mathcal{C}}

\newcommand{\tA}{\tilde{A}}

\renewcommand{\S}{\mathcal{S}}

\newcommand{\R}{\mathbb{R}}
\newcommand{\met}{\mathbb{R}_{\ge 1}}
\newcommand{\half}{\left(\vec{\frac{1}{2}}\right)}
\newcommand{\eps}{\varepsilon}

\renewcommand{\tilde}{\widetilde}

\newcommand{\NN}{\mathbb{N}}      

\newcommand{\AG}{\mathsf{AG}}
\newcommand{\sat}{\mathsf{sat}}
\newcommand{\wsat}{\mathsf{weak}\text{-}\mathsf{sat}}
\usepackage{enumerate}

\newcommand{\APX}{$\mathsf{APX}$\xspace}

\newcommand{\UGC}{$\mathsf{UGC}$}

\newcommand{\mkc}{Max $k$-Coverage\xspace}

\newcommand{\JCH}{$\mathsf{JCH}$\xspace}
\newcommand{\JCHD}{$\mathsf{JCH}^*$\xspace}
\newcommand{\Tu}{\mathsf{Tu}}

\usepackage{footnote}
\makesavenoteenv{tabular}
\makeatletter
\newcommand\footnoteref[1]{\protected@xdef\@thefnmark{\ref{#1}}\@footnotemark}
\makeatother

\newcommand\anote[2]{{\color{blue} **{\bf #1:} { #2}**}}

\newcommand\enote[1]{\anote{Euiwoong}{#1}}


\newcommand{\Turan}{Tur\'an\xspace}

\newcommand{\jc}{Johnson Coverage\xspace}
\newcommand{\jcd}{Johnson Coverage$^*$\xspace}

\newcommand{\hastad}{H\aa stad\xspace}
\newcommand{\lc}{Label Cover\xspace}

\newcommand{\call}{\mathcal{L}}
\newcommand{\calm}{\mathcal{M}}
\newcommand{\calh}{\mathcal{H}}
\newcommand{\cald}{\mathcal{D}}
\newcommand{\calp}{\mathcal{P}}
\newcommand{\Ex}{\mathbb{E}}
\newcommand{\wh}{\widehat}


\setlength{\parindent}{2em}
\setlength{\parskip}{0.45em}
\setstretch{1.08}

\makeatletter
\renewcommand\paragraph{\@startsection{paragraph}{4}{\z@}%
                                      {\parskip}
                                      {-1em}%
                                      {\normalfont\normalsize\bfseries}}
\makeatother

\title{\vspace{-1.5cm}\textbf{
Johnson Coverage Hypothesis:\\ Inapproximability of   $k$-means and $k$-median in $\ell_p$-metrics}}
\date{}
\author{Vincent Cohen-Addad\thanks{
   Google Research, Switzerland.
   \texttt{vcohenad@gmail.com}.
  }
\and 
Karthik C.\ S.\footnote{Rutgers University, USA.
   \texttt{karthik.cs@rutgers.edu}. 
   }
\and 
Euiwoong Lee\footnote{ University of Michigan, USA.
   \texttt{euiwoong@umich.edu}. 
   }
}
\begin{document} 
\maketitle  \vspace{-0.5cm} \thispagestyle{empty} 

\begin{abstract}
 \small\baselineskip=9pt
  \kmed and \kmean are the two most popular objectives for clustering algorithms. Despite intensive effort, a good understanding of the  approximability of these
  objectives, particularly in $\ell_p$-metrics, remains a major open problem. In this paper, we significantly improve upon the hardness of approximation factors known in literature
  for these objectives in $\ell_p$-metrics.\vspace{0.1cm}

We introduce a new hypothesis called the \emph{Johnson Coverage Hypothesis} (\JCH), which roughly asserts that the well-studied \mkc problem on set systems is hard to approximate  to a factor greater than $\left(1-\nicefrac{1}{e}\right)$, even when the membership graph of the set system is a subgraph of the Johnson graph. We then show that together with generalizations of the embedding techniques introduced by Cohen-Addad and Karthik (FOCS '19),
  \JCH implies hardness of approximation results for \kmed and \kmean in $\ell_p$-metrics for factors which are close to the ones obtained for
  general metrics. 
  In particular, assuming \JCH we show that it is hard to
  approximate the \kmean objective:
  \begin{itemize}
  \item Discrete case: To a factor of 3.94 in the $\ell_1$-metric  and to a factor of 1.73 in the $\ell_2$-metric;
    this improves upon the previous factor of 1.56 and 1.17 respectively,  obtained under the Unique Games Conjecture (\UGC).
  \item Continuous case: To a factor of 2.10 in the $\ell_1$-metric  and to a factor of 1.36 in the $\ell_2$-metric;
    this improves upon the previous factor of   1.07 in the $\ell_2$-metric   obtained under  \UGC\ (and to the best of our knowledge, the continuous case of \kmean in $\ell_1$-metric was not previously analyzed in literature).
  \end{itemize}
  We also obtain similar improvements under \JCH for the \kmed objective.\vspace{0.1cm}
  
 Additionally, we prove a weak version of \JCH using the work of Dinur et al.\ (SICOMP~'05) on Hypergraph Vertex Cover, and recover all the results stated above of Cohen-Addad and Karthik  (FOCS '19) to (nearly) the same inapproximability factors but now under the standard \NPH\ assumption (instead of \UGC).\vspace{0.1cm}

Finally, we establish a strong connection between \JCH and the long standing open problem   of determining  the Hypergraph \Turan number.
We then use this connection to prove improved SDP gaps (over the existing factors in literature) for \kmean and \kmed objectives.

\end{abstract}


\clearpage \setcounter{page}{1}

\section{Introduction}
Over the last 15 years, the Unique Game Conjecture has enabled 
tremendous progress on the understanding of the
(in)approximability of fundamental graph problems, such as  Vertex Cover \cite{KR08}, Max Cut \cite{KKMO07},
Betweeness \cite{CGM09}, and Sparsest Cut \cite{CKKRS06,KV15,AKKSTV08}, sometimes even leading to tight results.
Yet, the approximability of most \NP-hard \emph{geometric} problems remains wide open: even
for fundamental  geometric problems such as the Traveling Salesman Problem, 
Steiner Tree, \kmed, and Facility Location, the hardness of approximation in various $\ell_p$-metrics  has remained below 1.01. On the other hand, approximation algorithms  for the aforementioned problems in $\Omega(\log n)$ dimensions do not achieve much better guarantees
than what they do for general metrics. 
This situation stands in stark contrast to the bounded dimension versions of the same problems,
whose approximabilities are mostly well understood.  

The main approach for proving hardness of approximation for geometric problems in \mbox{$\ell_p$-metrics} consists of two components: (1) establishing hardness of approximation in general metric spaces; 
(2) finding an embedding of the hard instances into $\ell_p$-metrics that preserves the gap.
The challenge here is to make (1) and (2) \emph{meet} at a sweet spot: we would like the hard gap instances in general metric spaces to be `embeddable', i.e., to be
mapped into the $\ell_p$-metric space while preserving the distances between points, but, proving a significant inapproximability bound for `embeddable' instances
requires a deep understanding of the hard instances (in the general metric) of the problem at hand.
This paper aims at characterizing the sweet spot between (1) and (2), hence providing a general framework for obtaining strong
inapproximability for a large family of clustering and covering problems.

\paragraph{Hardness of Clustering Problems in $\ell_p$-metrics.}
Given a set of points in a metric space, a clustering is a partition of the points such
that points in the same part are close to each other. Thus studying the complexity of finding good clustering is a very natural research avenue: On one hand these problems
have a wide range of applications, ranging from 
unsupervised learning, to information retrieval, and bioinformatics; on the other hand,
as we will illustrate in this paper, such problems are very natural generalizations of
set-cover-type problems to the metric setting, and as such are very fundamental computational problems.
The most popular objectives for clustering in metric spaces are arguably the \kmed and \kmean
problems: Given a set of points $P$ in a metric
space, the \kmed problem
asks to identify a set of $k$ representatives, called \emph{centers}, such that the sum of
the distances from each point to its closest center is minimized (for the \kmean problem, the goal
is to minimize the sum of distances squared) -- see Section~\ref{sec:prelims} for formal definitions.
In general metrics, the \kmed and \kmean problems are known to be hard to approximate within a factor
1.73 and  3.94 respectively~\cite{GuK99}, whereas the best known approximation algorithms achieve
an approximation guarantee of 2.67
and 9 respectively~\cite{BPRST15,ANSW16}.

A natural question arising from the above works is whether one can exploit the structure of more
specific metrics, such as doubling or Euclidean metrics to obtain better approximation or bypass the
lower bound.
If the points lie in an Euclidean space of arbitary dimension,
a 6.357-approximation and a 2.633-approximation are known for \kmean and \kmed respectively~\cite{ANSW16},
while a near-linear time approximation scheme is known for doubling metrics~\cite{abs-1812-08664}\footnote{With doubly exponential
  dependency in the dimension}.
In terms of hardness of approximation: the problems were known to be \APX-Hard
since the early 2000s~\cite{T00,GI03} in Euclidean spaces of dimension $\Omega(\log n)$
and have recently been shown to be hard to approximate within a factor of 1.17 and 1.06
for \kmean and \kmed respectively~\cite{CK19}. 
Perhaps surprisingly, for some other structured metrics such as Hamming or Edit distance, no better approximation
algorithms are known while the  hardness known for both Hamming and Edit distances  were 1.56 for \kmean problem and 1.14  for $k$-median problem.  
Summarizing, the gap between upper and lower bounds
for the Euclidean, Hamming, and Edit metrics, remains huge.

\paragraph{Technical Barriers.}
A well-known framework to obtain hardness of approximation results in the general metric for clustering objectives is through a straightforward reduction from the Max $k$-Coverage\footnote{Given a set system and $k \in \NN$, the problem asks to choose $k$ sets to maximize the size of their union.} or the Set
Cover problem.  We create a 'point' for each element of the universe and a 'candidate center', namely a location
where it is possible to place a center, for each set. Then, we define the distance between a point (corresponding to an element of the universe) and a candidate center (corresponding to a set) to be 1 if the set contains the element and 3 otherwise. 
This reduction due to Guha and Khuller~\cite{GuK99} yields lower bounds of $1+2/e$ and $1+8/e$ for the \kmed and \kmean problems, respectively, in general discrete
metric spaces. 

However, for the \kmed and \kmean objectives, the above strong hardness results for  Max $k$-Coverage produce instances that are 
impossible to embed in $\R^n$ without suffering a huge distortion in the distances, and thus would yield only a trivial gap
for \kmean or \kmed problems in $\ell_p$-metrics.
This issue has been faced for most other geometric problems as well, such as TSP or Steiner tree.
It is tempting to obtain more structure on the ``hard instances'' of set-cover-type problems with the large toolbox that
has been developed around the Unique Games Conjecture over the last 15 years for fundamental graph problems such as Vertex
Cover or Sparsest Cut. However, it seems that most of these hardness of approximation tools would not provide the adequate
structure on the set cover instances constructed, namely a structure that would make the whole instance easily ``embeddable'' into
an $\ell_p$-metric.

This illustrates the main difficulty in understanding  \kmed / \kmean in $\ell_p$-metrics;  
Guha-Khuller type reductions from general set systems with strong (perhaps optimal) gaps are not directly ``embeddable'',
but the current results use hardness for very restricted systems in a black-box way and thus cannot be extended to give strong hardness of approximation results.
Thus, we ask: 

\begin{center}
\textit{What structure on hard instances of Max $k$-Coverage problem would yield \\ meaningful hardness of \kmed and \kmean in $\ell_p$-metrics?}
\end{center}

\subsection{Our Results}
\label{sec:results}

This paper addresses the above question  in a unified manner, 
providing a general framework for obtaining strong inapproximability in $\ell_p$-metrics (see Table~\ref{table} for a summary of our results).
We segregate our results below in terms of conceptual and technical contributions and provide more details in the subsequent subsubsections. 

	Our main conceptual contribution is proposing the Johnson Coverage Hypothesis (\JCH) and identifying it as lying at the heart of the hard instances of \kmean and \kmed problems in $\ell_p$-metrics.
Intuitively, \JCH conjectures that the $(1-1/e)$-hardness of approximation for \mkc \cite{F98} holds even for set systems whose bipartite incidence graph (between sets and elements) is a subgraph of  the {\em Johnson graph} (see Section~\ref{sec:JCHintro} for the formal definition). 
Johnson graphs are well-studied objects in combinatorics that also admit nice geometric embedding properties.
More generally, we argue that 
identifying an intermediate mathematical property between geometry and combinatorics,
and
revisiting fundamental combinatorial optimization problems (\mkc in this paper) restricted to instances having that property 
is a fruitful avenue to understand approximability of high dimensional geometric optimization problems. 

\begin{table}[]
\begin{center}{\renewcommand{\arraystretch}{1.45}
\resizebox{\linewidth}{!}{
\begin{tabular}{||c|c|c|c|c|c|c|c|c||}
    \cline{1-6}
    Metric &  \thead{Discrete\\ \kmean} & \thead{Discrete\\ \kmed} & \thead{Continuous\\ \kmean}& \thead{Continuous\\ \kmed}& Assumption& \multicolumn{1}{c}{}  \\ [0.5ex]\hline\hline
  \multirow{3}{*}{$\ell_0$}  & 1.56  \cite{CK19} &  1.14 \cite{CK19} &1.21  \cite{CK19} &1.07 \cite{CK19}  &\UGC&{\footnotesize Previous}\\ 
  & \textbf{1.38}   &  \textbf{1.12} &\textbf{1.16}& \textbf{1.06}&\NPH&{\footnotesize This paper}\\
     & \textbf{3.94}   &  \textbf{1.73} &\textbf{2.10}& \textbf{1.36}&\JCH/\JCHD&{\footnotesize This paper}\\ \hline
  \multirow{3}{*}{$\ell_1$}  & 1.56  \cite{CK19} &  1.14 \cite{CK19} &--&1.07 \cite{CK19} & \UGC &{\footnotesize Previous}\\ 
   & \textbf{1.38}   &  \textbf{1.12} &\textbf{1.16}& \textbf{1.06}&\NPH &{\footnotesize This paper}\\
     & \textbf{3.94}   &  \textbf{1.73} &\textbf{2.10}&\textbf{1.36}&\JCH/\JCHD &{\footnotesize This paper}\\ \hline
  \multirow{3}{*}{$\ell_2$}  & 1.17  \cite{CK19} &  1.06 \cite{CK19} &1.07  \cite{CK19} &--  &\UGC &{\footnotesize Previous}\\ 
  & \textbf{1.17}   &  \textbf{1.07} &\textbf{1.06}& \textbf{1.015}&\NPH &{\footnotesize This paper}\\
     & \textbf{1.73}   &  \textbf{1.27} &\textbf{1.36}& \textbf{1.08}&\JCH/\JCHD&{\footnotesize This paper}\\ \hline
  \multirow{1}{*}{$\ell_\infty$}  & 3.94  \cite{CK19} &  1.73 \cite{CK19} & Open\tablefootnote{\label{noteopen}See Open Problem~\ref{open:ellinf}.}  & Open\textsuperscript{\getrefnumber{noteopen}}&\NPH&{\footnotesize Previous}\\ \hline
  \multirow{1}{*}{General}  & 3.94  \cite{GuK99} &  1.73 \cite{GuK99} &4  \cite{CKL21} &  2 \cite{CKL21}& \NPH&{\footnotesize Previous}\\\hline
\end{tabular}
}
}
\end{center}
\vspace{-0.1cm}
\caption{ \footnotesize{State-of-the-art inapproximability for \kmean and \kmed clustering objectives in various metric spaces for both the discrete and
  continuous versions of the problem. All the results for $\ell_p$-metrics stated inside the table are    when the input points are given in $O(\log n)$ dimensions. 
The NP-hardness reductions for the continuous objectives are randomized. }
}\label{table}
\end{table}


Our main technical contributions are two-fold. First, we generalize the embedding technique introduced in \cite{CK19} that gives a reduction from covering problems to clustering problems. The embedding technique in \cite{CK19} has two components: the first component is embedding the incidence graph of the complete graph into $\ell_p$-metrics and the second component is a dimension reduction technique developed by designing efficient protocols for the Vertex Cover problem in the communication model. We generalize both these components: we show how to embed the incidence graph of the complete \emph{hypergraph} into $\ell_p$-metrics and we provide a dimension reduction technique by designing efficient protocols for the \emph{Set Cover} problem in the communication model.
These generalization were  necessary to prove strong inapproximability results for clustering objectives as we needed to handle the less structured instances arising from \JCH and its variants (as opposed to \cite{CK19} whose reduction started from the more structured Vertex Coverage problem). 

Combining \JCH with the new embedding results above, we  deduce strong and perhaps even surprising inapproximability results: for example, we show that \kmean and \kmed problems are no easier in the $\ell_1$-metric/Hamming metric/Edit distance metric than in general metric spaces\footnote{It is interesting to note that approximate nearest-neighbor search, a problem closely related to clustering, is easier to solve in the $\ell_1$-metric  than general metrics.}, or obtain a much higher inapproximability for $\ell_2$-metric (see Table~\ref{table} and Section~\ref{sec:kmeankmed} for the whole list of results we obtain). 
Indeed, our framework accommodates hardness of {\em any} variant of \JCH defined in Section~\ref{subsec:gjc} in order to show new hardness of clustering problems;
for instance, the classic Set Cover hardness of Feige (or any APX-hardness of \mkc) can be plugged into our framework to produce \APX-Hardness for \kmean and \kmed in all $\ell_p$-metrics in both the  discrete and the continuous case (see Remarks~\ref{rem:APXdisc}~and~\ref{rem:APXcont} for precise details)

Our second   technical contribution is a new \NP-Hardness of approximation result for a (weaker) variant of \JCH corresponding to Hypergraph Vertex Coverage in $3$-uniform hypergraphs.
Consequently, with the above generalized embedding technique, we obtained \NP-Hardness results for clustering problems to factors
that nearly match and sometimes even improve upon the previous best hardness results (which were based on the Unique Games Conjecture); see Table~\ref{table} for the precise factors.
Our proof relies on (i) tighter analysis of the {\em Multilayered PCP} constructed by Dinur et al.~\cite{DGKR05}, followed by (ii) a \emph{densification} process to ensure that the resulting instance is dense enough to be used for clustering hardness.
As the previous best \NP-hardness results for clustering~\cite{CK19} rely on the hardness of Vertex Coverage~\cite{AKS11,M19,AS19} obtained by recent advances on the 2-to-2 Games Conjecture~\cite{KMS17,DKKMS18b,DKKMS18a,KMS18,BKS19,BK19}, we believe that further study on \JCH and its variants will lead to more interesting ideas in hardness of approximation that will be useful for understanding geometric optimization problems.  

\subsubsection{Johnson Coverage Hypothesis}\label{sec:JCHintro}

We introduce the Johnson Coverage Hypothesis (\JCH) which states that for every $\varepsilon>0$, there is some $z\in\NN$, such that given a parameter $k\in\NN$ and  a collection $E$ of $z$-sized subsets ($z$-sets henceforth) of $[n]$  as input, it is \NP-hard to distinguish between the following two cases:
\begin{description}
\item[Completeness:] There are $k$ subsets of $[n]$ each of cardinality $z-1$, say $S_1,\ldots ,S_k$, such that for every $T\in E$ there is some $i\in[k]$ such that $S_i\subset T$.
\item[Soundness:] For every $k$ subsets of $[n]$ each of cardinality $z-1$, say $S_1,\ldots ,S_k$, we have that there is  $E'\subseteq E$ such that 
\begin{itemize}
\item For every $T\in E'$ and every $i\in [k]$ we have $S_i\not\subset T$.
\item $|E'|\ge \left(\frac{1}{e}-\eps\right)\cdot |E|$.
\end{itemize}
\end{description}
We will often say that a set $S$ {\em covers} another $T$ if $S \subseteq T$. 

We refer to the gap problem described above in \JCH as the Johnson Coverage problem. Notice that when $z=2$, the Johnson Coverage problem is just the Vertex Coverage problem, for which \cite{AS19} have shown that a value close to 0.93 
is the correct inapproximability ratio assuming the Unique Games Conjecture. 
Therefore, for larger values of $z$, \JCH suggests that the Johnson Coverage problem (which is a generalization of the Vertex Coverage problem) gets harder to approximate and approaches the same inapproximability as \mkc. 

Introducing \JCH is the main conceptual contribution of this paper. 
Most previous hardness results for geometric optimization problems~\cite{T00, GI03, ACKS15, LSW17}
use hardness of graph problems in bounded degree graphs and embed them to $\ell_p$-metrics, where the bounded degree condition was crucial for the inapproximability ratio. 
Cohen-Addad and Karthik \cite{CK19}, at least in the context of clustering, were the first to show how to embed instances of Vertex Coverage problem (with no restriction on the degree) to obtain hard instances of \kmean and \kmed.  

In this work, we further study embeddability of coverage problems, and show that the embedding of \cite{CK19} can be further generalized to covering problems on hypergraphs, i.e., to the Johnson Coverage problem, which may be as hard as the general covering problem.  
Johnson Coverage problem is a purely combinatorial covering problem, but the implicit geometric structure of the problem allows us to seamlessly embed it to $\ell_p$-metrics.

\paragraph{Plausibility of \JCH and Connection to Hypergraph \Turan Number.} 
One may reformulate the Johnson Coverage problem as a special case of \mkc, by asking for instances of \mkc whose sets and elements correspond to subsets of size $(z-1)$ and $z$ of some universe $[n]$ respectively. So it is natural to study the performance of the standard LP or SDP relaxation, which is closely related to relaxations for the clustering problems. 

A natural gap instance for Johnson Coverage problem is the complete $z$-uniform hypergraph, i.e.,  $E:= \binom{[n]}{z}$. 
Since each $e \in E$ contains exactly $z$ subsets of size $(z-1)$, 
the standard LP relaxation and SDP relaxation admit a feasible solution whose value is at most $\nicefrac{\binom{[n]}{z - 1} }{z}$ and $\nicefrac{\binom{[n]}{z - 1} }{z - 1}$ respectively~\cite{ghazi2017lp}.
Interestingly, the soundness analysis of this proposed gap instance is closely related to the Hypergraph \Turan problem, a long standing open problem  in extremal combinatorics \cite{turan41research}. 

Given $r <~z \in \NN$, let $\Tu(n, z, r)$ be the minimum $f\in\NN$ such that there exists an $r$-uniform hypergraph $H=([n], F)$ with $n$ vertices and $f$ edges 
where every set $T \in \binom{[n]}{z}$ contains at least one hyperedge from $F$. 
Let $t(z, r) = \lim_{n \to \infty} \Tu(n, z, r) \binom{n}{r}^{-1}$. 
The classical Turan's theorem states that $t(z, 2) = \nicefrac{1}{(z - 1)}$, with the extremal example being the union of $(z-1)$ equal-sized cliques. 
However, the quantity of our interest $t(z, z - 1)$ is not well understood when $z > 3$; 
the best lower bound is $\nicefrac{1}{(z - 1)}$~\cite{de1983extension} and the best upper bound is $\nicefrac{(1/2 + o(1))\ln z}{z}$~\cite{sidorenko1997upper},
with the conjecture $t(z, z-1) \geq \omega(1/z)$ still open~\cite{de1991current}.
We refer the reader to more recent surveys by Sidorenko~\cite{sidorenko1995we} and Keevash~\cite{keevash2011hypergraph}. 

What will happen if $H$ has at most a $\nicefrac{1}{(z - 1)}$ fraction of hyperedges, which is much less than the conjectured value of $t(z, z-1)$? Then some $T \in \binom{[n]}{z}$ will not be covered by hyperedges of $H$. 
If we consider a random hypergraph where each hyperedge is picked with probability $\nicefrac{1}{z-1}$, $S \in \binom{[n]}{z}$ is covered with probability $1 - (1 - \nicefrac{1}{z-1})^z$, which converges to $1 - 1/e$ as $z$ increases. It is thus natural to hypothesize that this is indeed optimal as $z$ increases. 

\begin{hypothesis}
Any $(z-1)$-uniform hypergraph with $n$ vertices and $\nicefrac{\binom{n}{z - 1}}{z - 1}$ hyperedges covers at most a $d(n, z)$ fraction of sets of size $z$, where $\lim_{z \to \infty} \lim_{n \to \infty} d(n, z) = 1 - 1/e$. 
\label{hyp:extremal}
\end{hypothesis}

If Hypothesis~\ref{hyp:extremal} is true, then the afore-proposed gap instance for Johnson Coverage problem (the complete $z$-uniform hypergraph) has an integrality gap of $(1 - 1/e + \eps)$ for any $\eps > 0$ for the LP and SDP relaxation. 
Thus, a refutation of \JCH either implies interesting constructions in extremal combinatorics or establishes that there is a polynomial time algorithm outperforming the LP and SDP relaxation, on which the current best approximation algorithms for both \kmed and \kmean in any metric are based~\cite{BPRST15, ANSW16}.

At first glance, the hypothesis looks rather strong (i.e., less likely to be true) because of the existence of extremal structures that are strictly better than random hypergraphs. However, these advantages often diminish as the size of the object of interest increases. One example is the $z$-clique density in graphs corresponding to $t(z, 2)$. After the work of Razborov (for $z = 3$~\cite{razborov2008minimal}) and Nikiforov (for $z = 4$~\cite{nikiforov2011number}), Reiher~\cite{reiher2016clique} proved that the graph that covers the most number of $K_z$ with the given edge density is the union of disjoint cliques of the same size. 
In our context, since every $K_z$ has $\binom{z}{2}$ edges, the desired edge density is $\nicefrac{1}{w - 1}$ with $w = \binom{z}{2}$, so the extremal graph is the union of $w$ disjoint cliques and the probability that a random copy of $K_z$ is not covered by this graph is 
$
\prod_{i = 1}^z (1 - \frac{i - 1}{\binom{z}{2}-1}),
$
which converges to $1/e$ as $z$ increases. So for large $z$, the extremal examples do not have an advantage over a random graph!
Indeed, this connection already gives improved SDP gaps for various clustering objectives in natural hard instances in $\ell_p$-metrics (see Theorem~\ref{theorem:gaps:informal}, or for more details, see Section~\ref{section:gaps}).

\paragraph{Technical Barriers.} One may wonder if it is possible to start from Feige's hard instances of \mkc \cite{F98} (which have the desired gap of $1-\nicefrac{1}{e}$) and simply provide a gap-preserving reduction to Johnson Coverage problem. In Theorem~\ref{thm:fpt} we showed that if such a reduction exists, then it should significantly blow up the witness size (the number of sets needed to cover the universe in the completeness case). At the heart of this observation is that we are transforming arbitrary set systems (arising from hard instances of \mkc) to set systems with bounded VC dimension (as in the Johnson Coverage problem). In fact, a reduction in this spirit was recently obtained  in \cite{CKL21}, and that reduction did indeed blow up the witness size. Therefore the result in \cite{CKL21} where we show \mkc is hard to approximate beyond $1-\nicefrac{1}{e}$ factor on set systems of large girth (thus bounded VC dimension) may be seen as moral progress on understanding \JCH. 

Additionally, in Theorem~\ref{thm:jchgadget}, we revisit Feige's framework for showing hardness of approximation of \mkc, and highlight that certain simple modifications to his reduction would not prove \JCH. In particular, we show that no ``partition system'' can be combined with standard label cover instances to yield \JCH. This provides some evidence that in order to prove \JCH, we would need to potentially prove hardness of approximation result for some  highly structured label cover instances.

\paragraph{Subsequent Work.} 
Motivated by an earlier version of this paper, Guruswami and Sandeep \cite{GS20} initiated the study of the minimization variant of \JCH, 
where the goal is to select as few $(z-1)$-sets as possible to ensure that every $z$-set in $E$ is covered by a chosen $(z-1)$-set. 
While the naive algorithm gives a $z$-approximation, they obtained an $(z/2 + o(z))$-approximation algorithm. 
They also asked as an open problem whether \JCH and hardness of the minimization version can be formally related.

\subsubsection{Generalized Johnson Coverage Problem}
\label{subsec:gjc}
Motivated by the  connections in previous subsubsection to $t(z, r)$ for general $z > r \geq 1$, we consider more generalized Johnson Coverage problems 
where the input is still a collection of $z$-sized sets $E$ but instead of choosing $(z - 1)$-sized sets, we choose $r$-sized sets for some $r \in \{1,\ldots ,z-1\}$ to cover as many sets in $E$ as possible. (See Definition~\ref{def:gjc} for the formal definition.) For example, when $r = 1$, the problem becomes the $z$-Hypergraph Vertex Coverage problem. We prove the following for the $3$-Hypergraph Vertex Coverage problem:

\begin{theorem}[Informal statement; See Theorem~\ref{thm:np-hard}]\label{thm:informalNP}
For every $\varepsilon>0$, the $3$-Hypergraph Vertex Coverage problem is \NP-Hard to approximate to a factor of $7/8+\varepsilon$.
\end{theorem}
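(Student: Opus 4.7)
The plan is to construct a 3-uniform hypergraph whose coverage gap matches the tight $7/8$ constant, by reducing from \lc via the \mpcp of Dinur--Guruswami--Khot--Regev. I would start with a \lc gap instance of arbitrarily small soundness and sufficiently large alphabet (obtained from the PCP theorem and parallel repetition), then apply the multilayered construction to obtain a PCP with $\ell$ layers, where each \lc test is replaced by a block of 3-uniform hyperedges built over long-code encodings of the variables of a group of layers. The budget $k$ is then chosen so that in the YES case, taking the set of $1$-valued positions across valid long-codes yields a vertex set of size exactly $k$ that covers every hyperedge, establishing completeness.

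The main technical step is the tight soundness analysis. I need to show that in the NO case any set $S$ of $k$ vertices covers at most a $(7/8 + \varepsilon)$ fraction of hyperedges. The guiding intuition is that picking each vertex independently with density $1/2$ already covers a random 3-edge with probability exactly $1 - (1/2)^3 = 7/8$, and the goal is to prove that no adversarial selection can appreciably beat this threshold. I would proceed by Fourier analysis: let $f$ be the indicator of $S$ restricted to each long-code block, expand the expected coverage probability of a random hyperedge in terms of the Fourier coefficients of $f$, and argue that any $S$ with coverage noticeably above $7/8$ must concentrate significant Fourier mass on low-degree non-trivial characters. These characters can be decoded into influential long-code coordinates that agree across layers and thus yield a labeling of the underlying \lc instance of value much larger than the soundness — a contradiction. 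The multilayered structure is essential to rule out the degenerate strategy in which the adversary concentrates all its chosen vertices inside a single layer and bypasses the test structure.

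The final step is densification: the raw \mpcp instance may have a small hyperedge-to-vertex ratio, which is problematic both for the robustness of the $(7/8 + \varepsilon)$ soundness (additive errors from low-weight vertices could swamp the gap) and for the downstream clustering reductions, which require dense hypergraphs. I would densify by standard gadget repetition or by taking product instances, verifying that completeness and the $(7/8 + \varepsilon)$ soundness are preserved up to lower-order terms. The main obstacle I foresee is the tightness of the constant: naive per-layer or per-coordinate analyses typically lose factors and yield weaker bounds such as $1 - (1 - 1/\ell)^3$, so attaining the exact $7/8$ threshold requires carefully aligning the long-code test with the coverage objective and controlling the interaction between layers so that no slack is introduced in the Fourier-analytic decoding.
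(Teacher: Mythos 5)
Your high-level plan — reduce from a multilayered PCP of Dinur--Guruswami--Khot--Regev, apply a H\aa stad-style long-code inner verifier, decode heavy low-degree Fourier mass into a good label-cover assignment, and densify at the end — is exactly the paper's approach, and you correctly single out the danger spot: a naive per-layer argument loses a factor and will not reach the tight $7/8$ constant. But your proposal stops at flagging this obstacle rather than resolving it, and resolving it is precisely the technical heart of the soundness proof. The paper's fix is a carefully tuned \emph{non-uniform distribution over layer pairs}: the hyperedge places one vertex in layer $i$ and two in layer $j>i$, where $i$ is drawn with probability proportional to $(\ell-i)^2$ and $j$ is then uniform in $\{i+1,\ldots,\ell\}$. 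With this weighting, for a ``typical'' full path $p$ one gets
\begin{equation*}
\Ex_{(i,j)\sim\cald}\bigl[\alpha_{p,i}\,\alpha_{p,j}^2\bigr]
\;\ge\; \Ex_{i\sim\cald_I,\;j,k}\bigl[\alpha_{p,i}\alpha_{p,j}\alpha_{p,k}\bigr]
\;\ge\; \Ex_{i,j,k\ \text{uniform}}\bigl[\alpha_{p,i}\alpha_{p,j}\alpha_{p,k}\bigr]-O(1/\ell)
= \bigl(\Ex_i[\alpha_{p,i}]\bigr)^3 - O(1/\ell),
\end{equation*}
via Cauchy--Schwarz and a term-by-term probability comparison, and then Jensen together with $\Ex_p\Ex_i[\alpha_{p,i}]\ge 1/2$ yields the clean $1/8$ lower bound on uncovered weight. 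A uniform choice of $(i,j)$ would not support this comparison to the uniform triple, so without the $(\ell-i)^2$ weighting (or an equivalent device) the decomposition you propose does not close at $7/8$. Two smaller points worth noting: the inner test needs the usual H\aa stad $\delta$-noise on the $z$-query to kill high-degree Fourier terms (you implicitly invoke this via ``concentrate on low-degree characters,'' but it has to be built into the hyperedge distribution), and the paper's densification is a randomized vertex-blowup with subsampled hyperedges and a Hoeffding/union-bound argument — ``gadget repetition or product instances'' would need to be shown to preserve both the gap and the denseness condition $|H|=\omega(n^2)$ required by the downstream clustering reductions.
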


While the  covering version of the above problem, namely the {\em (Minimum) $3$-Hypergraph Vertex Cover} has been studied actively in literature, culminating in~\cite{DGKR05}, the coverage version does not seem to have been explicitly studied. 

We remark that the above inapproximability result is higher than the inapproximability of Vertex Coverage problem proved under \UGC\ (of roughly 0.93) and more so under \NPH\ (of roughly 0.98), and this higher gap will be useful in the next subsubsection for applications.

In Section~\ref{sec:cond}, we provide a embedding framework which converts any hardness result for Generalized Johnson Coverage problem (where $z > r \geq 1$) with hardness ratio $\alpha < 1$  
to directly yield inapproximability results for various clustering objectives. 
Together with Theorem~\ref{thm:informalNP}, 
this framework produces \NP-Hardness results nearly matching or improving the best known hardness results assuming  \UGC. 
The case $z = 4$, $r = 2$ also yields improved SDP gaps for various clustering objectives in Section~\ref{section:gaps}.

\subsubsection{Inapproximability Results for \kmean and \kmed in $\ell_p$-metrics}\label{sec:kmeankmed}
First,  we present our results for the ``discrete'' \kmed and \kmean
problems. In these versions, the centers must be chosen from a specific
set of points of the metric.

\begin{theorem}[Discrete version; Informal statement of Theorems~\ref{thm:kmeanellp} and \ref{thm:kmedellp}]\label{thm:introellp} 
Assuming \JCH,  given $n$ points and $\poly(n)$ candidate centers,  it is \NP-hard to approximate:
\begin{itemize}
  \item the \kmean objective to within a $1~+~\nicefrac{8}{e}\approx 3.94$ factor in $\ell_1$-metric and  
    $1~+~\nicefrac{2}{e}\approx 1.73$ factor in $\ell_2$-metric.
  \item the \kmed objective to within a $1~+~\nicefrac{2}{e}\approx 1.73$ factor in $\ell_1$-metric and  
    $1.27$ factor in $\ell_2$-metric.
    \end{itemize}
\label{thm:discrete-informal}
\end{theorem}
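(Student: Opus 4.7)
The plan is to give a direct gap-preserving reduction from the Johnson Coverage problem (whose hardness of approximation is asserted by \JCH) to the discrete \kmed and \kmean objectives. Given a Johnson Coverage instance $(E,[n],k)$ with $E \subseteq \binom{[n]}{z}$, I would map every $z$-set $T \in E$ to its characteristic vector $\chi_T \in \{0,1\}^n$ as an input point, and every $(z{-}1)$-subset $S \subseteq [n]$ to $\chi_S$ as a candidate center; this produces $|E|$ points and at most $\binom{n}{z-1} = \poly(n)$ candidate centers. The key combinatorial observation is that $\|\chi_T-\chi_S\|_1 = \|\chi_T-\chi_S\|_2^2 = |T \triangle S|$, which equals $1$ exactly when $S \subset T$ and is at least $z + (z-1) - 2(z-2) = 3$ otherwise. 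So in $\ell_1$ (and in squared $\ell_1$), covered points sit at distance $1$ from some center while uncovered points are at distance at least $3$ from every center; in $\ell_2$, the corresponding distances are $1$ versus $\sqrt{3}$.

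The gap analysis then follows immediately. In the \JCH-completeness case, the $k$ witness $(z{-}1)$-sets, placed as centers, yield a clustering of total cost exactly $|E|$ (every point is at distance $1$ from its covering center, and $1^2=1$). In the \JCH-soundness case, any choice of $k$ candidate centers leaves at least a $(\nicefrac{1}{e}-\eps)$ fraction of points uncovered, each of which pays at least $3$ in $\ell_1$ or in squared $\ell_2$, and at least $\sqrt{3}$ in $\ell_2$; the remaining covered points pay at least $1$. Taking the soundness-to-completeness ratio and letting $\eps \to 0$ (permitted by \JCH as $z$ grows) gives a factor of $1+\nicefrac{2}{e}$ for \kmed in $\ell_1$ and \kmean in $\ell_2$, a factor of $1+\nicefrac{8}{e}$ for \kmean in $\ell_1$ (squared distances of $1$ vs.\ $9$), and a factor of $1+(\sqrt{3}-1)/e \approx 1.27$ for \kmed in $\ell_2$---exactly the claimed constants.

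The main obstacle, where substantive work goes beyond CK19, is dimension: the naive embedding lives in $\R^n$, whereas the intended geometric statement is for $O(\log n)$ dimensions. My plan for this step is to generalize the graph-based communication-protocol dimension reduction of CK19 to the hypergraph setting. Each coordinate of a low-dimensional embedding should correspond to a one-sided-error test in a randomized protocol between Alice (holding the chosen $(z{-}1)$-sets $S_1,\dots,S_k$) and Bob (holding some $T \in E$) that decides whether any $S_i \subset T$; concatenating the transcripts of $O(\log n)$ independent repetitions yields a $\poly(n)$-point instance in $O(\log n)$ dimensions. Because we are starting from Johnson Coverage rather than Vertex Coverage, the natural object to design is a protocol for \setcover rather than for Vertex Cover; this is where I expect the main difficulty to lie, since the protocol must arrange that covered $T$'s incur a deterministic small $\ell_p$ distance while uncovered $T$'s incur, in expectation, a distance larger by exactly the covered-vs-uncovered ratio above, up to a $(1+o(1))$ factor. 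Off-the-shelf sketches such as Johnson--Lindenstrauss do not suffice here: they fail for $\ell_1$, and even in $\ell_2$ they do not preserve the combinatorial set-cover structure on which the soundness lower bound depends.
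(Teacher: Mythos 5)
Your proposal is essentially the paper's own proof. The characteristic-vector embedding you describe is precisely Lemma~\ref{lem:cd1} (and its $\ell_2$ corollary), and the paper's dimension reduction in the proof of Theorems~\ref{thm:kmeanellp}/\ref{thm:kmedellp} is exactly the communication-protocol idea you sketch, instantiated via Algebraic-Geometric codes rather than generic randomized transcripts. Three small refinements worth flagging. First, the protocol is between a single $(z$-$1)$-set $S$ and a single $z$-set $T$, deciding $S\subset T$; you described Alice as holding all $k$ chosen sets and deciding a Set-Cover-type predicate, but the $\min$ over centers is handled automatically by the clustering objective, and a protocol that had to internalize all $k$ sets would not have $O(\log n)$-bit transcripts. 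Second, for $\ell_2$ the paper does not just square-root the $\ell_1$ gap: Lemma~\ref{lem:cd2} gives a rescaled embedding with $g_2(J(q,t,t-1)) \ge \sqrt{1 + 1/(\sqrt{t(t-1)}-(t-1))} > \sqrt 3$ for every finite $t$. Because \JCH forces $z\to\infty$, this converges to $\sqrt 3$, so your $1.27$ and $1.73$ figures match the stated theorem; but the stronger embedding is what produces the improved fixed-$z$ NP-hardness numbers in Table~\ref{table} (Theorem~\ref{thm:np-informal}), so it is not a cosmetic detail overall. Third, your objection to Johnson--Lindenstrauss for $\ell_2$ is not quite right: JL preserves all $\poly(n)$ point--center distances up to $(1\pm\eps)$, which is all the soundness analysis uses, so it would work there; the genuine reason for the AG-code route is to get a single deterministic construction that works uniformly across $\ell_1$, $\ell_2$ and general $\ell_p$ (where no JL-type lemma is available) and that preserves the coordinatewise Johnson-graph structure needed for the gap-realization argument.
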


In fact, we obtain inapproximability results for all $\ell_p$-metrics (where $p\in\met$), and the details are provided in Section~\ref{sec:cond}.
Also, the results obtained in the above theorem significantly improves on the bounds of \cite{CK19} (see Table~\ref{table}).
Finally, we note that the bounds obtained for the $\ell_1$-metric might be optimal as  approximating \kmean and \kmed to a factor of $3.95$ and $1.74$ is \emph{fixed parameter tractable} even for general metrics \cite{Cohen-AddadG0LL19}. 

We now sketch the proof of Theorem~\ref{thm:introellp}.  We use the notations concerning \JCH established in Section~\ref{sec:JCHintro}. We create a point for each ($z$ size) subset in $E$ and a candidate center for each subset of $[n]$ of size $z-1$. Both the points and candidate centers are just the characteristic vectors of their corresponding subsets of $[n]$, i.e., the points are all Boolean vectors of Hamming weight $z$ and the candidate centers are all Boolean vectors of Hamming weight $z-1$. In the completeness case, it is easy to see that there is a set $\mathcal{C}$ of $k$ candidate centers such that every point has a center in $\mathcal{C}$ at (Hamming) distance 1. Also, in the soundness case, it is easy to see that  for every set $\mathcal{C}$ of $k$ candidate centers, we have that  at least $\nicefrac{1}{e}-\varepsilon$ fraction of the points in $E$ are at (Hamming) distance at least 3  from every center in $\mathcal{C}$ . Note that the dimension of this embedding is $n$, and that we  reduce it to $O(\log n)$ by developing a   generalization of the dimenionality reduction machinery  introduced in \cite{CK19}. Furthermore, the proof from the Hamming metric to other $\ell_p$-metrics goes  through the composition of various graph embedding gadgets introduced in \cite{CK19} and generalized in this paper to hypergraph embedding. 


We now shift our attention to  the continuous case, where the centers can be picked at arbitrary locations in the metric space.  We show the following\footnote{Note that Theorem~\ref{thm:introell2without} depends on a slight strengthening of \JCH, which we call \JCHD, where we further assume that for the hard instances of \JCH, we have $|E|=\omega(k)$ (see Section~\ref{sec:condCont} for further discussion).
}.

\begin{theorem}[Continuous version; Informal statement of Theorems~\ref{thm:kmeanl2},~\ref{thm:kmedl1},~\ref{thm:kmedl2},~and~\ref{thm:kmeanl1}]\label{thm:introell2without}
Assuming \JCHD,  given $n$ points,  it is \NP-hard to approximate:
\begin{itemize}
  \item the \kmean objective to within $ 2.10$ factor in $\ell_1$-metric and  
    $1~+~\nicefrac{1}{e}\approx 1.36$ factor in $\ell_2$-metric.
  \item the \kmed objective to within a $1~+~\nicefrac{1}{e}\approx 1.36$ factor in $\ell_1$-metric and  
    $1.08$ factor in $\ell_2$-metric.
    \end{itemize}
\label{thm:continuous-informal}
\end{theorem}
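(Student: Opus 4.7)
The plan is to reduce an instance $([n],E)$ of the Johnson Coverage problem under \JCHD to the continuous clustering problem by embedding each $z$-set $T\in E$ as the characteristic vector $\chi_T \in \{0,1\}^n$. Unlike the discrete case of Theorem~\ref{thm:introellp}, the optimization is over \emph{all} of $\mathbb{R}^n$, so no candidate centers are specified. The completeness analysis is essentially inherited from the discrete reduction: given a \JCHD witness $S_1,\ldots,S_k$, placing integer centers at $\chi_{S_1},\ldots,\chi_{S_k}$ and assigning each $\chi_T$ to a center with $S_i\subseteq T$ gives Hamming distance $1$ per point, hence baseline completeness costs of the form $|E|$ (up to the appropriate $\ell_p$/squared normalization) for each of the four objectives.

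The main content is the soundness analysis. Given arbitrary continuous centers $c_1,\ldots,c_k\in\mathbb{R}^n$ and the optimal clustering $C_1,\ldots,C_k$, we \emph{round} each $c_i$ to the characteristic vector $\chi_{\wh S_i}$ of the $(z{-}1)$-set $\wh S_i$ consisting of the top $z-1$ coordinates of $c_i$, and invoke \JCHD soundness on $\wh S_1,\ldots,\wh S_k$: at least $(1/e-\varepsilon)|E|$ points $\chi_T$ are \emph{bad}, meaning $\wh S_i\not\subset T$ for every $i$. The cornerstone lemma states that for each cluster $C_i$, the continuous cost is at least the discrete cost with center $\chi_{\wh S_i}$, up to an additive error that vanishes as $|C_i|\to\infty$, a regime enforced by $|E|=\omega(k)$ (which is precisely why \JCHD rather than \JCH is needed here). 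For $\ell_2^2$ k-means the lemma is proved by the centroid identity
\begin{equation*}
\sum_{T\in C_i}\|\chi_T-c_i\|_2^2 \;=\; \sum_{T\in C_i}\|\chi_T-\mu_i\|_2^2 \;+\; |C_i|\,\|\mu_i-c_i\|_2^2,
\end{equation*}
reducing to the centroid $\mu_i$, and then writing $\sum_{T}\|\chi_T-\mu_i\|_2^2 = |C_i|(z-\sum_j p_{i,j}^2)$ where $p_{i,j}$ is the fraction of $T\in C_i$ containing coordinate $j$; clusters whose centroid is far from any $\chi_S$ pay additional variance that lower bounds the cost of bad points in them. For $\ell_1$ k-median the optimal continuous center is the coordinatewise median, which is automatically Boolean, so after truncation to Hamming weight $z-1$ (paying a small additive loss absorbed by $|E|=\omega(k)$) we are back to an integer $(z{-}1)$-set center.

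The remaining steps are structural. First, having obtained the Hamming soundness with ratio $1+1/e$ for k-median and $1+2/e$ (before continuous rounding) for k-means, we transfer to general $\ell_p$-metrics by composing the hypergraph embedding gadgets developed for the discrete proof of Theorem~\ref{thm:introellp}; this composition yields in particular the factor $2.10$ for $\ell_1$ k-means and $1.08$ for $\ell_2$ k-median stated in the theorem. Second, the ambient dimension $n$ is reduced to $O(\log n)$ by applying the Set Cover communication protocol from our generalized dimension-reduction framework; the dimension reduction preserves the above gap because the rounding lemma is stable under small coordinatewise perturbation.

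The main obstacle is the quantitative tightness of the rounding lemma: a naive triangle inequality for rounding $c_i$ to $\chi_{\wh S_i}$ loses a constant factor and would yield only $1+c/e$ for some $c<1$, not $1+1/e$. Overcoming this requires the centroid-based variance argument above together with a careful case split on clusters whose centroid lies far from every $(z{-}1)$-set characteristic vector — such clusters already contribute extra intra-cluster variance that exactly matches the $(1-1/e)$ loss absorbed by their bad points — and it is here that the densification $|E|=\omega(k)$ is used most delicately to rule out clusters that could "absorb" a bad point at negligible marginal cost.
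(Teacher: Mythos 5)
Your high-level plan — embed each $z$-set as its characteristic vector, invoke \JCHD for soundness, and lean on $|E| = \omega(k)$ to absorb error terms — matches the paper, but the ``cornerstone lemma'' you state is false, and the paper's actual arguments are structured differently for each of the four objectives.

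\paragraph{The rounding lemma fails.} You claim that for each cluster $C_i$, the continuous cost is at least the discrete cost with the rounded center $\chi_{\wh S_i}$ up to vanishing additive error. Take $z=2$, $y=1$ and let $C_i$ be the $m-1$ edges of a path $v_1-v_2-\cdots-v_m$. The centroid $\mu_i$ places weight $2/(m-1)$ on interior vertices and $1/(m-1)$ on endpoints, and a direct computation gives continuous cost $2(m-2)^2/(m-1)\approx 2(m-1)$, i.e.\ roughly $2$ per point. The best single-vertex center (any interior vertex, which is also the ``top coordinate'' of $\mu_i$) has discrete cost $3m-7\approx 3(m-1)$, i.e.\ roughly $3$ per point. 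So the continuous cost is smaller by a constant factor, not an additive $o(|C_i|)$, and the claimed inequality is violated. This is not a pathology: the paper's soundness bound for continuous $\ell_2$ $k$-means is $1+1/e$, strictly below the $1+2/e$ one would get from discrete centers, precisely because the centroid genuinely saves a constant per point on ``bad'' clusters. Your plan to first ``obtain the Hamming soundness'' and then round down to it is therefore working in the wrong direction.

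\paragraph{What the paper actually does.} For continuous $\ell_2$ $k$-means (Theorem~\ref{thm:kmeanl2}) the paper avoids any center-rounding and instead uses Fact~\ref{fact:kmeanscost} to write the cluster cost purely in terms of pairwise squared distances, partitions the ordered pairs of a cluster by intersection size ($<y$, $=y$, $>y$), and the technical core (Claim~\ref{claim:sumdegsquared}) is a counting bound of the form $M_i/|C_i|\le(1+o(1))\Delta_i+o(|C_i|)$, where $\Delta_i$ is the maximum number of points in $C_i$ covered by a single $y$-set. The set of argmaxes $\{\wh S_i\}$ is then fed into \JCHD soundness via $\sum_i\Delta_i\le\alpha|\Po|$. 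Your centroid-variance heuristic and ``case split on clusters whose centroid lies far from every $\chi_S$'' gestures at this, but the paper's argument is combinatorial (a bound on second moments of degrees, split into low-degree and high-degree $y$-sets with a separate argument that at most one $y$-set can have high degree unless $|C_i|=o(|E|/k)$), not a distance-to-nearest-$\chi_S$ dichotomy.

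\paragraph{Continuous $\ell_2$ $k$-median is not obtained by transfer.} Theorem~\ref{thm:kmedl2} is proved by an entirely different route: Lemma~\ref{lem:kmed-l2} shows, via positive-semidefiniteness of a Gram matrix, that more than $O(1/\eps)$ points that are pairwise at $\ell_2$-distance $\ge\sqrt 2$ cannot all lie within $1-\eps$ of a common center. This is then used to argue that if a cluster has many points within distance $\sqrt{z-y+0.5}-\eps$ of its center, those points are essentially captured by a single $y$-set (plus lower-order terms), which is what lets \JCHD soundness kick in and gives the factor $\alpha+\sqrt{(z-y+0.5)/(z-y)}\,(1-\alpha)\approx 1.08$. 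Your proposal lumps $\ell_2$ $k$-median into a generic ``compose the hypergraph gadgets'' transfer from Hamming, but the continuous $\ell_2$ center can be anywhere in $\R^d$, and no gadget composition applied to a Hamming bound produces this $\sqrt{z-y+0.5}$ constant. Similarly, continuous $\ell_1$ $k$-means (Theorem~\ref{thm:kmeanl1}) is proved by its own truncation-and-covering argument giving the $(z-y+1)^2/(z-y)^2$ constant. Only the $\ell_1$ $k$-median case (Theorem~\ref{thm:kmedl1}) goes the route you describe, via the coordinatewise-median observation reducing it to the Hamming case.

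In short: the false rounding lemma is the crux of the gap, and even after fixing it you would still need the separate Gram-matrix argument for $\ell_2$ $k$-median and the tailored truncation argument for $\ell_1$ $k$-means; these four continuous results are not derived by one uniform transfer from a single Hamming soundness bound.
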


The above theorem for \kmed in $\ell_1$-metric is obtained through an intermediate hardness of approximation
proof for \kmed in the Hamming metric (see Theorem~\ref{thm:kmedHam}).
Additionally, the fact that  medians and means have nice algebraic definitions in $\ell_1$ and $\ell_2$ respectively
allow us to transfer the hardness from Hamming metrics without much loss.
Furthermore, thanks to a technical result of Rubinstein \cite{R18} on near isometric embedding from Hamming metric in $d$ dimensions to Edit metric in $O(d\log d)$ dimensions, we can translate all our (discrete case) results in the Hamming metric to the Edit metric
(see Appendix~A in \cite{CK19} for details).

We also prove the first hardness of approximation results for \kmed in $\ell_2$-metric (Theorem~\ref{thm:kmedl2}) and \kmean in $\ell_1$-metric (Theorem~\ref{thm:kmeanl1}). 
Even though continuous \kmed in $\ell_2$ has been actively studied for the bounded $d$ or $k$~\cite{arora1998approximation, badoiu2002approximate, kumar2010linear, feldman2011unified}, and to the best of our knowledge, no hardness of approximation for the general case was known in the literature. We remark that independent to our work, in \cite{BGJ21}, the authors prove \APX-hardness of the Euclidean \kmed problem under \UGC.

Next, we move our attention to proving \NP-Hardness results. 
With general versions of Theorem~\ref{thm:discrete-informal} and Theorem~\ref{thm:continuous-informal} for generalized Johnson Coverage problem, 
Theorem~\ref{thm:informalNP} proves the following \NP-Hardness results presented in Table~\ref{table}.
For continuous versions, more technical work and randomized reductions are needed to ensure enough density (see Section~\ref{subsec:dense}).

\begin{theorem}
It is \NP-hard to approximate the following clustering objectives; 
discrete \kmean in $\ell_1$ within $1.38$, 
discrete \kmed in $\ell_1$ within $1.12$, 
discrete \kmean in $\ell_2$ within $1.17$, 
discrete \kmed in $\ell_2$ within $1.07$, 
continuous \kmean in $\ell_1$ within $1.16$, 
continuous \kmed in $\ell_1$ within $1.06$, 
continuous \kmean in $\ell_2$ within $1.06$, 
and continuous \kmed in $\ell_2$ within $1.015$.
The results for continuous \kmed and \kmean hold under randomized reductions. 
\label{thm:np-informal}
\end{theorem}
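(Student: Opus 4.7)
The plan is to instantiate the generalized embedding framework described in Section~\ref{sec:cond} (which underlies the proofs of Theorems~\ref{thm:introellp} and~\ref{thm:introell2without}) with the \NP-hardness of $3$-Hypergraph Vertex Coverage supplied by Theorem~\ref{thm:informalNP}. Recall that the latter is exactly the Generalized Johnson Coverage problem with $(z,r)=(3,1)$ and soundness ratio $\alpha = 7/8 + \varepsilon$. Thus the proof breaks into two moves: (i) restate the embedding-framework theorems parameterized by $z$, $r$, and $\alpha$ rather than assuming the full strength of \JCH or \JCHD, and (ii) substitute the concrete parameters $(z,r,\alpha) = (3,1,7/8+\varepsilon)$ to read off each of the eight hardness numbers in Table~\ref{table}.

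For the discrete setting, the embedding maps each hyperedge of the hard instance to a point in $\{0,1\}^n$ given by its characteristic vector (weight $3$) and each candidate center to the indicator of a vertex (weight $1$). The Hamming/$\ell_1$ distance between a point and a candidate center is $2$ when the vertex lies in the hyperedge and $4$ otherwise; in $\ell_2$ these become $\sqrt{2}$ and $2$. In the completeness case every point is served at the smaller distance, while in soundness at least a $(1/8-\varepsilon)$-fraction of points pays the larger one, and a short computation (summed for \kmed, squared and summed for \kmean) converts this into the claimed ratios. The generalized dimension-reduction protocol (a \setcover analogue of the Vertex Coverage communication protocol from \cite{CK19}, adapted to hypergraphs) then replaces the $n$-dimensional embedding by an $O(\log n)$-dimensional one with negligible loss, completing the four discrete rows.

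For the continuous case the same embedding cannot be used verbatim because a continuous center may lie anywhere in the ambient space and thereby cheat the analysis. Following the strategy from Section~\ref{subsec:dense}, I would pre-process the hypergraph by a randomized \emph{densification} step that replicates hyperedges in a controlled fashion so that, with high probability, the optimal \kmed/\kmean centroid/median of each cluster is forced to be close to the discrete candidate center prescribed by the combinatorial solution. Once this happens the continuous cost of any $k$-tuple of centers is dominated, up to an additive $o(1)$ loss, by the cost of its nearest discrete choice, and the discrete gap analysis transfers to the continuous setting; since the pre-processing uses independent randomness, the continuous reductions are inherently randomized, as claimed in the statement.

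The main technical obstacle is exactly this densification: one must simultaneously (a) preserve completeness, so that the optimal combinatorial solution still yields small continuous cost after replication, and (b) preserve soundness, so that no placement of $k$ continuous centers can beat the bound implied by a suboptimal combinatorial solution. The concentration needed to control (b) for \emph{squared} $\ell_2$ distances is the most delicate piece, and it is what enables, for the first time in the literature, an \NP-hardness of approximation for continuous \kmed in $\ell_2$. The discrete-case bookkeeping, by contrast, is essentially arithmetic once the $(z,r,\alpha)$-parameterized form of Section~\ref{sec:cond} is written down.
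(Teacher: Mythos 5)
Your high-level plan—instantiate the $(\alpha,z,y)$-parameterized embedding theorems of Section~\ref{sec:cond} with the $(7/8+\varepsilon, 3, 1)$-hardness supplied by Theorem~\ref{thm:np-hard}—is exactly what the paper does. But two steps, as written, would fail to reproduce the claimed numbers or the paper's mechanism.

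\paragraph{Discrete $\ell_2$ bounds.} The characteristic-vector embedding you describe has $g_2(J(q,3,1)) = \sqrt{4/2}=\sqrt{2}$ (the same embedding that is tight for $\ell_1$). Plugging this into $\zeta_2(2,2,7/8)=1+\tfrac18(g_2^2-1)$ gives $1.125$ for \kmean and $1.052$ for \kmed, not the claimed $1.17$ and $1.07$. The stronger bounds require the rescaled embedding of Lemma~\ref{lem:cd2}, where the $y$-set candidate centers are mapped to $\sqrt{t/s}\cdot\sum_{i\in S}e_i$ rather than to $\sum_{i\in S}e_i$. This yields $g_2(J(q,3,1))\ge\sqrt{1+\tfrac{1}{\sqrt 3-1}}\approx1.538$, and only then does $\zeta_2=1+\tfrac18(1.538^2-1)\approx1.17$ and $\zeta_1=1+\tfrac18(1.538-1)\approx1.07$.

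\paragraph{Continuous mechanism.} The paper does not pre-process so that ``the optimal centroid/median of each cluster is forced to be close to the discrete candidate center.'' That is not the role densification plays, and no argument of that kind appears. For continuous \kmean in $\ell_2$ the analysis bypasses the center entirely via Fact~\ref{fact:kmeanscost}, which writes the cost in terms of pairwise squared distances; for continuous \kmed in $\ell_1$ one uses that a coordinate-wise median of Boolean points is Boolean; and for continuous \kmed in $\ell_2$ and \kmean in $\ell_1$ the proofs are bespoke geometric arguments (Lemma~\ref{lem:kmed-l2} and the perimeter-type argument in Theorem~\ref{thm:kmeanl1}). Densification (Section~\ref{subsec:dense}) is a purely combinatorial preprocessing of the hypergraph that ensures $|H|=\omega(n^2)$, i.e., the \jcd condition $|E|=\omega(kn^{z-y-1})$; this is what makes the degree-counting bound in Claim~\ref{claim:sumdegsquared} (and its analogues) absorb the ``too-close pair'' and ``small cluster'' error terms. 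The concentration used there is a Hoeffding bound on counts of surviving hyperedges in the blown-up hypergraph $V\times[b]$—not a concentration bound about squared $\ell_2$ distances of centers. Without correcting this, the continuous portion of the proof does not go through as proposed.
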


Finally, we present another evidence for usefulness of (generalized) Johnson Coverage problem
via the clique density theorem of Reiher~\cite{reiher2016clique}, 
which gives improved SDP gaps of $149/125 = 1.192$ for discrete \kmed in $\ell_1$-metric  and \kmean in $\ell_2$-metric
when the instances are {\em well-separated}. (I.e., points are not closed to one another.)
Moreover, this result holds  even when the integral solution is allowed to use $\Omega(k)$ more centers. 
Previously, the best gaps were $1.14$ and $1.17$ for \kmed in $\ell_1$-metric and \kmean in $\ell_2$-metric respectively, 
both following from the SDP gaps of the Unique Games hardness~\cite{CK19}.

\begin{theorem} [Informal version of Theorem~\ref{theorem:gaps}]
Fix any $\eps > 0$. 
For discrete \kmed in $\ell_1$ and discrete \kmean in $\ell_2$, there is a family of well-separated instances where
the SDP relaxation has a gap of at least $149/125 - \eps \approx 1.192 - \eps$, even when the integral solution opens $\Omega(k)$ more centers.
\label{theorem:gaps:informal}
\end{theorem}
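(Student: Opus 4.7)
The plan is to realize the gap through a natural instance of the Generalized Johnson Coverage problem with $z = 4$ and $r = 2$, and to pin down the integral optimum via Reiher's clique density theorem~\cite{reiher2016clique}.

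\textbf{Construction.} Fix $n$ divisible by $5$. Let the point set $P$ be the $\binom{n}{4}$ characteristic vectors of $4$-subsets of $[n]$, let the candidate center set $F$ be the $\binom{n}{2}$ characteristic vectors of $2$-subsets (edges) of $[n]$, and set $k := \binom{n}{2}/5$. Under this embedding a candidate center $\chi_e$ is at Hamming distance $2$ from a point $\chi_S$ precisely when the edge $e$ lies inside the $4$-set $S$, at Hamming distance $4$ when $|e \cap S| = 1$, and at Hamming distance $6$ when $|e \cap S| = 0$. Because the squared Euclidean distance between $\{0,1\}$-vectors equals their Hamming distance, the analysis for \kmed in $\ell_1$ and \kmean in $\ell_2$ coincides.

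\textbf{SDP upper bound.} Exploiting the symmetry of the instance, open every candidate center with fractional weight $y_e = 1/5$ and connect each point $\chi_S$ uniformly to the six edges it contains with weight $x_{Se} = 1/6$. This is a feasible symmetric LP-solution of value $2\binom{n}{4}$. Because the automorphism group of the instance acts transitively on point--center pairs of each distance class, one can lift this assignment to a feasible vector configuration for the natural SDP relaxation of the same value, so $\mathrm{SDP} \le 2\binom{n}{4}$.

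\textbf{Integral lower bound via Reiher.} Any integral choice of $k$ candidate centers is a graph $G$ on $[n]$ with exactly $k = \binom{n}{2}/5$ edges. A point $\chi_S$ has a chosen center at Hamming distance $2$ iff some edge of $G$ is contained in $S$; otherwise its nearest chosen center is at distance $4$ (for $n$ large, distance $6$ cannot arise since each vertex of $S$ is incident to a chosen edge). Writing $N_{\mathrm{cov}}$ for the number of $4$-subsets of $[n]$ meeting $E(G)$, the integral cost equals
\begin{equation*}
  2\, N_{\mathrm{cov}} + 4\bigl(\tbinom{n}{4} - N_{\mathrm{cov}}\bigr) \;=\; 4\tbinom{n}{4} - 2\, N_{\mathrm{cov}},
\end{equation*}
and $\binom{n}{4} - N_{\mathrm{cov}}$ is exactly the number of copies of $K_4$ in the complement $\overline{G}$, which has edge density $4/5$. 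Reiher's clique density theorem~\cite{reiher2016clique} asserts that among all graphs on $n$ vertices of edge density $4/5$, the number of $K_4$'s is minimised asymptotically by the Turán graph $T(n,5)$, giving $\binom{5}{4}(n/5)^4 + o(n^4) = (24/125 + o(1))\binom{n}{4}$ non-covered $4$-sets. Hence $N_{\mathrm{cov}} \le (101/125 + o(1))\binom{n}{4}$, the integral cost is at least $(298/125 - o(1))\binom{n}{4}$, and the gap is $(298/125)/2 = 149/125 - o(1)$, absorbing the slack into~$\eps$.

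\textbf{Bicriteria strengthening and well-separation.} If the integral solution is allowed $k' = (1+\delta)k$ centers for a constant $\delta = \delta(\eps) > 0$, the same argument applies to a graph of edge density $(1+\delta)/5$; Reiher's theorem at this density still returns a near-Turán partition into $5$ balanced parts with a small perturbation, and the covered fraction changes by only $O(\delta)$, so taking $\delta$ small keeps the gap above $149/125 - \eps$. Well-separation follows because every pair of distinct points in $P$ is at Hamming distance at least $2$, and a simple rescaling of the embedding can inflate the ratio of inter-point distances to service distances by any desired constant factor without changing the cost ratio.

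\textbf{Main obstacle.} Two steps require care. First, one must certify the bound $\mathrm{SDP} \le 2\binom{n}{4}$ for the standard SDP relaxation (not merely the LP), which requires constructing an explicit symmetric vector solution respecting the triangle-type constraints of the SDP; this step relies on the transitivity of the automorphism group on distance-homogeneous pairs. Second, the bicriteria step invokes a \emph{stability} form of Reiher's theorem: near-optimal graphs at edge density close to $4/5$ must be shown to remain close to the balanced $5$-partition so that the continuity-in-$\delta$ estimate is legitimate. This latter quantitative stability is, to my mind, the most delicate part of the argument.
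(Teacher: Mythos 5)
Your construction, choice of $k = \binom{n}{2}/5$, and use of Reiher's clique density theorem match the paper's proof of Theorem~\ref{theorem:gaps} essentially verbatim, down to the arithmetic $149/125$. The substantive difference is in the SDP upper bound, where you appeal to symmetry/transitivity to ``lift'' a fractional assignment to a feasible SDP vector configuration, whereas the paper explicitly writes down vectors $u_e$ and $v_{p,e}$ (with $\|u_e\|_2^2 = 1/5$, $\|v_{p,e}\|_2^2 = 1/6$) and verifies constraints~\eqref{eq:vlength}--\eqref{eq:localconst} by direct computation. This is a genuine gap, and you rightly flag it yourself: symmetry of the instance does not, on its own, guarantee that the LP value can be matched by the SDP. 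Indeed the basic LP admits $y_e = 1/6$, i.e.\ $k = \binom{n}{2}/6$, but the SDP provably does not; the correct opening mass $1/5$ is forced by the constraint $\langle v_{p,e}, u_e\rangle = \|v_{p,e}\|_2^2$ together with $\sum_{e\ni p} v_{p,e} = v_0$. Transitivity of the automorphism group gives you that an optimal SDP solution \emph{may be taken} symmetric, but it tells you nothing about what value that symmetric solution attains; the only way to certify $\mathrm{SDP} \leq 2\binom{n}{4}$ at $k = \binom{n}{2}/5$ is to exhibit vectors and check feasibility, which is exactly what the paper's Claim does. Until that construction is carried out, your upper bound is unproved.

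Two smaller remarks. The claim that ``distance 6 cannot arise'' is not needed and is actually false for arbitrary $k$-edge subgraphs (isolated vertices can exist); only the one-sided bound $\geq 4$ on uncovered $4$-sets is used, so you should simply state that. For the bicriteria step, you worry about a stability form of Reiher, but this is heavier machinery than necessary: Reiher's theorem gives a closed-form, continuous lower bound on the $K_4$-density of the complement as a function of its edge density, so opening $(1+\delta)k$ centers pushes the complement density from $4/5$ to $4/5 - \delta/5$ and the uncovered fraction decreases by $O(\delta)$; no structural stability of the extremal graph is required, just continuity of the extremal value.
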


\subsection{Organization of the Paper}
The paper is organized as follows. In Section~\ref{sec:prelims}, we introduce some notations that are used throughout the paper, and some tools from coding theory. 
In Section~\ref{sec:cond}, we introduce \JCH, and show how it implies the inapproximability of \kmean and \kmed in various $\ell_p$-metrics (i.e., Theorems~\ref{thm:introellp}~and~\ref{thm:introell2without}). 
In Section~\ref{sec:nphard} we prove a weak version of \JCH. 
In Section~\ref{sec:open} we present some open problems of interest.

\section{Preliminaries}
\label{sec:prelims}

\paragraph*{Notations.} For any two points $a,b\in\mathbb{R}^d$, the distance between them in the $\ell_p$-metric is denoted by $\|a-b\|_p~=~\left(\sum_{i=1}^d|a_i-b_i|^p\right)^{1/p}$.
Their distance in the $\ell_{\infty}$-metric is denoted by 
$\|a-b\|_{\infty} = \underset{{i\in[d]}}{\max}\ \{|a_i-b_i|\}$, and
in the $\ell_0$-metric is denoted by 
$\|a-b\|_0=|\{i\in[d]:a_i\neq b_i\}|$, i.e.,
the number of coordinates on which $a$ and $b$ differ. For every $n\in\mathbb{N}$, we denote by $[n]$ the set of first $n$ natural numbers, i.e., $\{1,\ldots,n\}$. We denote by $\binom{[n]}{r}$, the set of all subsets of $[n]$ of size $r$. Let $e_i$ denote the vector which is 1 on coordinate $i$ and 0 everywhere else. We denote by $\half$, the vector that is $\nicefrac{1}{2}$ on all coordinates.

\paragraph*{Clustering Objectives.} Given two sets of points $P$ and $C$ in a metric space, we define
 the \kmean cost of $P$ for $C$
 to be $\underset{p \in P}{\sum}\left(\underset{c \in C}{\min}\ \left(\text{dist}(p,c)\right)^2\right)$ and
 the \kmed cost to be $\underset{p \in P}{\sum}\left(\underset{c \in C}{\min}\ \text{dist}(p,c)\right)$.
 Given a set of points $P$  in a metric space and partition $\pi$ of $P$ into $P_1\dot\cup P_2\dot\cup\cdots \dot\cup P_k$, we define
 the \minsum cost of $P$ for $\pi$
 to be $\underset{i\in [k] }{\sum}\left(\underset{p,q \in P_i}{\sum}\ \text{dist}(p,q)\right)$.
 Given a set of points $P$, the \kmean/\kmed (resp.\ \minsum)
 objective is the minimum over all $C$ (resp.\ $\pi$) of cardinality $k$ of the \kmean/\kmed (resp.\ \minsum)
 cost of $P$ for $C$ (resp.\ $\pi$). Given a point $p \in P$, the contribution to the
 \kmean (resp.\ \kmed) cost of $p$ is $\underset{c \in C}{\min}
 \left(\text{dist}(p,c)\right)^2$ (resp.\ $\underset{c \in C}{\min}\ \text{dist}(p,c)$).

\paragraph*{Error Correcting Codes.}
We recall here a few coding theoretic notations. 
An error correcting code of block length $\ell$ over alphabet set $\Sigma$ is simply a collection of codewords $\C \subseteq \Sigma^\ell$.  The relative distance between any two points is the fraction of coordinates on which they are different. The relative distance of the code $\C$  is defined  to be the smallest relative distance between any pair of distinct codewords in $\C$. The message length of $\C$ is defined to be $\log_{|\Sigma|} |\C|$. The rate of $\C$ is defined as the ratio of its message length and block length.  

\begin{theorem}[\cite{GS96,SAKSD01}]\label{thm:code}
For every prime square $q$ greater than 49, there is a code family denoted by $\AG$ over alphabet of size $q$ of positive constant (depending on $q$) rate  and relative distance at least $1-\frac{3}{\sqrt{q}}$. Moreover, the encoding time of any code in the family is polynomial in the message length. 
\end{theorem}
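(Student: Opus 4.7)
The plan is to construct $\AG$ as a family of algebraic-geometric (Goppa) codes built on the explicit Garcia--Stichtenoth tower of function fields over $\F_q$, and to invoke the polynomial-time basis construction of Shum et al.\ for the encoding claim. The Garcia--Stichtenoth tower~\cite{GS96} produces, for any prime square $q$ and each sufficiently large $i$, a function field $F_i/\F_q$ of genus $g_i$ with $N_i$ rational places satisfying $\lim_{i\to\infty} N_i/g_i = \sqrt{q}-1$ (the Drinfeld--Vladut bound). I would fix $i$ large enough that $N_i/g_i \geq \sqrt{q}-1-\eta$ for a small $\eta > 0$, pick $N$ distinct rational places $P_1,\ldots,P_N$ of $F_i$ with $N \leq N_i - 1$, and set aside one additional rational place $Q$.

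For a parameter $m$ to be chosen, let $G = mQ$ and let $L(G) \subseteq F_i$ be the associated Riemann--Roch space. Define the code as the image of the evaluation map,
\[
C \;=\; \{(f(P_1),\ldots,f(P_N)) : f \in L(G)\} \;\subseteq\; \F_q^N.
\]
Two standard applications of Riemann--Roch control the parameters: (i) $\dim_{\F_q} L(G) \geq m - g_i + 1$ whenever $m \geq 2g_i - 1$; (ii) any nonzero $f \in L(G)$ has pole divisor of degree $m$, hence at most $m$ zeros among the $P_j$, so the minimum distance satisfies $d \geq N - m$, and the evaluation map is injective when $m < N$ (so the dimension of $L(G)$ is also the message length of $C$).

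I would then set $m = \lceil 3N/\sqrt{q}\,\rceil$. The relative distance is at least $1 - m/N \geq 1 - 3/\sqrt{q}$, as required. The rate is at least
\[
\frac{m - g_i + 1}{N} \;\geq\; \frac{m}{N} - \frac{g_i}{N} \;\geq\; \frac{3}{\sqrt{q}} - \frac{1}{\sqrt{q}-1} - \eta \;=\; \frac{2\sqrt{q} - 3}{\sqrt{q}(\sqrt{q}-1)} - \eta,
\]
which is a positive constant depending on $q$ whenever $\sqrt{q} > 3/2$, and in particular for every prime square $q > 49$ (taking $\eta$ small). The prerequisite $m \geq 2g_i - 1$ holds for large $i$ because $g_i/N \to 1/(\sqrt{q}-1)$ and $3/\sqrt{q} > 2/(\sqrt{q}-1)$ whenever $\sqrt{q} > 3$.

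For polynomial-time encoding, I would invoke the explicit basis construction of~\cite{SAKSD01}: they produce, in time polynomial in the message length, an $\F_q$-basis of $L(G)$ in the Garcia--Stichtenoth tower together with a polynomial-time procedure to evaluate any element of $L(G)$ at each $P_j$. Encoding then reduces to taking a linear combination of basis elements and evaluating at $N$ places, which is polynomial time. The main obstacle is not the parameter analysis, which is standard Riemann--Roch, but the \emph{effectivity} of the construction: one needs explicit, polynomial-time access to $F_i$, its rational places, and a basis of $L(G)$. This is precisely what the two cited works deliver, so the proof reduces to choosing the level $i$ large enough to absorb the additive $\eta$ slack into the rate, and then executing the Riemann--Roch estimates above.
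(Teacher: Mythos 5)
The paper does not actually prove this theorem: Theorem~\ref{thm:code} is a cited result from \cite{GS96,SAKSD01}, and immediately after stating it the paper notes only that ``an informal argument justifying the existence of the above code family is provided in \cite{CK19}.'' So there is no internal proof to compare against, but your sketch is the standard construction and is essentially correct.

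The approach --- evaluation (Goppa) codes on the Garcia--Stichtenoth tower, with the explicit polynomial-time Riemann--Roch basis of Shum et al.\ for effective encoding --- is exactly what is needed, and your parameter analysis is sound. A couple of small clean-ups worth making: with $m = \lceil 3N/\sqrt{q}\,\rceil$ you only get relative distance $\ge 1 - m/N \ge 1 - 3/\sqrt{q} - 1/N$, so you should take $m = \lfloor 3N/\sqrt{q}\,\rfloor$ (or absorb the $1/N$ into the $o(1)$); the rate estimate is unaffected. Also, the threshold you derive ($q > 9$ suffices for $m \ge 2g_i - 1$ and positivity of the rate at large levels of the tower) is more permissive than the paper's $q > 49$; the latter is just a convenient safe cutoff and does not signal a gap in your argument. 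Finally, for a fully airtight write-up you would want to note explicitly that a single level $i$ of the tower gives one code of a fixed length, and that taking $i \to \infty$ produces the promised infinite family with block lengths going to infinity, all over the same alphabet $\F_q$ --- you gesture at this with ``fix $i$ large enough,'' but the family aspect deserves one more sentence.
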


An informal argument justifying the existence of the above code family is provided in \cite{CK19}. Furthermore, as noted in \cite{CK19}, random codes obtaining weaker parameters than the parameters stated above (see Gilbert-Varshamov bound \cite{G52,V57}) suffice for the results in this paper and it may even be possible to use concatenated codes (arising from Reed-Solomon codes) which approach the Gilbert-Varshamov bound in the proofs in this paper instead of the aforementioned algebraic geometric codes.

\section{Conditional Inapproximability of \kmean and \kmed in $\ell_p$-metrics}\label{sec:cond}

In this section, we first  formally introduce the Johnson Coverage Hypothesis (\JCH{}), then generalize the gadget constructions via graph embedding which were introduced in \cite{CK19}, and finally prove how \JCH implies the hardness of approximation results for \kmean and \kmed, i.e., Theorems~\ref{thm:introellp}~and~\ref{thm:introell2without}.
We also prove improved integrality gaps inspired by \JCH (i.e., Theorem~\ref{theorem:gaps:informal}).

\subsection{Johnson Coverage Hypothesis}\label{sec:JCH}

In this subsection, we first introduce the \jc problem, followed by the \jc hypothesis.

Let $n,z,y\in\mathbb N$ such that $n\ge z > y$. Let $E\subseteq \binom{[n]}{z}$ and $S\in \binom{[n]}{y}$. We define the coverage of $S$ w.r.t.\ $E$, denoted by $\con(S,E)$ as follows:
$$
\con(S,E)=\{T\in E\mid S\subset T\}.
$$

\begin{definition}[Johnson Coverage Problem]
In the $(\alpha,z,y)$-Johnson Coverage problem with $z > y \geq 1$, we are given a  universe $U:=[n]$, a collection of subsets of $U$, denoted by $E\subseteq \binom{[n]}{z}$, and a parameter $k$ as input. We would like to distinguish between the following two cases:
\begin{itemize}
\item \textbf{\emph{Completeness}}: There exists $\C:=\{S_1,\ldots ,S_k\}\subseteq  \binom{[n]}{y}$ such that 
$$\con(\C):=\underset{i\in[k]}{\cup} \con(S_i,E)=E.$$
\item \textbf{\emph{Soundness}}: For every $\C:=\{S_1,\ldots ,S_k\}\subseteq \binom{[n]}{y}$ we have ${\left|\con(\C)\right|}\le \alpha\cdot\left|E\right|$.
\end{itemize}
\label{def:gjc}
We call $(\alpha, z, z-1)$-Johnson Coverage as $(\alpha, z)$-Johnson Coverage. 
\end{definition}

Notice that $(\alpha,2)$-Johnson Coverage Problem is simply the well-studied vertex coverage problem (with gap $\alpha$). Also, notice that if instead of picking the collection $\C$ from $\binom{[n]}{y}$, we replace it with picking the collection $\C$ from $\binom{[n]}{1}$ with a similar notion of coverage, then we simply obtain the  Hypergraph Vertex Coverage problem (which is equivalent to the \mkc problem for unbounded $z$).

We now put forward the following hypothesis. 

\begin{hypothesis}[Johnson Coverage Hypothesis (\JCH)]
For every constant $\varepsilon>0$, there exists a constant $z:=z(\varepsilon)\in\mathbb{N}$ such that deciding the $\left(1-\frac{1}{e}+\varepsilon,z\right)$-Johnson Coverage Problem is \NP-Hard. 
\end{hypothesis}

Since Vertex Coverage problem is a special case of the \jc problem, we have that the \NP-Hardness of $(\alpha,z)$-Johnson Coverage problem is already known for $\alpha=0.94$ \cite{AS19} (under unique games conjecture). 

On the other hand,  if we replace picking the collection $\C$ from $\binom{[n]}{z-1}$ by picking from $\binom{[n]}{1}$, then for the Hypergraph Vertex Coverage problem, we do know that for every $\eps>0$ there is some constant $z$ such that the Hypergraph Vertex Coverage problem is \NP-Hard to decide for a factor of $\left(1-\frac{1}{e}+\varepsilon\right)$ \cite{F98}.

\paragraph{Related work to \JCH.}
Johnson Coverage problem can be considered as a special case of \mkc in set systems with a natural additional structure. 
Such a restriction of fundamental covering / packing / constraint satisfaction problems to structured instances often arise in geometric settings such as 
Independent Set of Rectangles~\cite{chuzhoy2016approximating}. 
The {\em VC dimension} has been an important combinatorial notion capturing many implicit structures posed by geometric problems that allow better algorithms~\cite{bronnimann1995almost, badanidiyuru2012approximating} than general set systems in some regimes. 
Recently, Alev et al.~\cite{alev2019approximating} studied approximating constraint satisfaction problems on a high-dimensional expander, 
another object that bridges graph theory and geometry, and showed that this structure makes CSPs easier to approximate.

\subsection{Gadget Constructions via Graph Embeddings}\label{sec:gadget}
In this subsection, we recall a notion of graph embedding that was introduced in \cite{CK19}. And then we prove some bounds on the embedding for important $\ell_p$-metrics. These embedding are then used in the next subsection to prove inapproximability results. 

Let $q,t,r\in\mathbb N$ such that $q\ge t\ge r$.  Let $J(q,t,r)$ denote the incidence graph of the Johnson graph \cite{HS93}.
 Elaborating,  we define  $J(q,t,r)$ to be the bipartite graph on partite sets $\binom{[q]}{t}$ and $\binom{[q]}{r}$ where we have an edge $(S,S')\in \binom{[q]}{t}\times \binom{[q]}{r}$ in $J(q,t,r)$  if and only if $S'\subseteq S$. 
We use the shorthand $J(q,t)$ to denote $J(q,t,t-1)$.

We would like to analyze the embedding of $J(q,t)$ into $\ell_p$-metric spaces for all $p\in\met$. 

\begin{definition}[Gap Realization of a Bipartite graph \cite{CK19}]
  \label{def:bipartitegadget}
Let $p\in\met$. For any bipartite graph $G = (A\dot\cup B, E)$ and $\lambda \geq 1$, a mapping $\tau: V \to \mathbb{R}^d$ is said to \emph{$\lambda$-gap-realize $G$ (in the $\ell_p$-metric)} if for some $\beta > 0$, the following holds:
\begin{enumerate}[(i)]
\item For all $(u, v) \in E$, $\|\tau(u) - \tau(v)\|_p = \beta$.
\item For all $(u, v) \in (A \times B) \setminus E$, we have $\|\tau(u) - \tau(v)\|_p \ge \lambda\cdot \beta$.
\end{enumerate}
Moreover, we require that $\tau$ $\lambda$-gap-realize $G$ in the $\ell_p$-metric efficiently, i.e., there is a polynomial time algorithm (in the size of $G$) which can compute $\tau$. 
\end{definition}

We remark here that the above definition is a variant of the notion gap contact dimension introduced in \cite{KM19} and is closely related to notion of contact dimension  which has been well-studied in literature since the early eighties \cite{P80,M85,FM88,M91,DKL19}. We refer the reader to \cite{CK19} for further discussion.

\begin{definition}[Gap number]
Let $p\in\met$. For any bipartite graph $G = (A\dot\cup B, E)$, its gap number in the $\ell_p$-metric $g_p(G)$ is the largest $\lambda$ for which  there exists a mapping $\tau$ that $\lambda$-gap-realizes $G$ in a $d$-dimensional $\ell_p$-metric space\footnote{For all the main results of this paper to hold,  we do not require the specified upper bound on the dimension of the mapping realizing the gap number; any finite dimensional realization suffices.} where $d\le \poly(|A|+|B|)$.
\end{definition}

In \cite{CK19}, the authors studied $g_p(J(q,t,1))$. In this paper, we are interested in analyzing $g_p(J(q,t,t-1))$ for all $q,t\in\mathbb{N}$ ($q\ge t$) and $p\in \met$. We remark that we do not study $g_{\infty}(J(q,t,t-1))$ in this paper, as this was already settled to be equal to 3 in \cite{CK19}. We recall the following upper bound on gap number which follows immediately from triangle inequality:

\begin{proposition}[Essentially \cite{CK19}]\label{prop:ubcd}
Let $q\ge 3$, $t\ge 2$ (where $q\ge t$),  and $p\in\met$. We have $g_p(J(q,t))\le 3$. 
\end{proposition}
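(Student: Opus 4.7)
The plan is to verify the stated upper bound by the triangle inequality applied along a length-$3$ path in $J(q,t)$ connecting a carefully chosen non-edge.

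First, I would fix an arbitrary mapping $\tau$ that $\lambda$-gap-realizes $J(q,t)$ in the $\ell_p$-metric with edge length $\beta$, and my goal is to exhibit a single non-edge $(S,S')\in \binom{[q]}{t}\times \binom{[q]}{t-1}$ such that $\|\tau(S)-\tau(S')\|_p \le 3\beta$. From the defining inequality $\|\tau(S)-\tau(S')\|_p \ge \lambda\beta$ for non-edges, this will force $\lambda\le 3$.

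Next I would construct the non-edge and the short graph path connecting it. The hypotheses $q\ge 3$, $t\ge 2$, $q\ge t$ ensure that either there are no non-edges at all (the degenerate case $q=t$, where $\binom{[q]}{t}$ is a singleton and every $(t-1)$-subset of $[q]$ is an edge-neighbor, so the statement is vacuous), or else $q\ge t+1$. In the latter case, pick any $I\in\binom{[q]}{t-2}$, two distinct elements $a,b\in [q]\setminus I$, and a further element $c\in [q]\setminus(I\cup\{a,b\})$; this is possible precisely because $q\ge t+1$ (and when $t=2$, simply because $q\ge 3$). Set $S := I\cup\{a,b\}$ and $S' := I\cup\{c\}$, so that $|S\cap S'|=t-2$ and hence $S'\not\subset S$, i.e., $(S,S')$ is a non-edge of $J(q,t)$.

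I would then realize the length-$3$ path $S \,{-}\, v_1 \,{-}\, v_2 \,{-}\, S'$ by taking $v_1 := I\cup\{a\}\in\binom{[q]}{t-1}$ and $v_2 := I\cup\{a,c\}\in\binom{[q]}{t}$. Each of the three consecutive pairs is a containment, hence an edge of $J(q,t)$; explicitly, $v_1\subset S$, $v_1\subset v_2$, and $S'\subset v_2$. By property (i) of Definition~\ref{def:bipartitegadget}, each of these three pairs has $\ell_p$-distance exactly $\beta$ under $\tau$.

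Finally, applying the triangle inequality of the $\ell_p$-metric along the path yields
\[
\|\tau(S)-\tau(S')\|_p \;\le\; \|\tau(S)-\tau(v_1)\|_p + \|\tau(v_1)-\tau(v_2)\|_p + \|\tau(v_2)-\tau(S')\|_p \;=\; 3\beta,
\]
and combined with property (ii) this gives $\lambda\beta \le 3\beta$, i.e., $\lambda\le 3$. Since $\tau$ was an arbitrary realization, $g_p(J(q,t))\le 3$. There is no genuine obstacle here; the only subtlety is verifying that the edge case $q=t$ is handled (vacuously) and that the choice of $a,b,c$ actually succeeds, which is exactly why the parameter bounds $q\ge 3$, $t\ge 2$, $q\ge t$ are needed.
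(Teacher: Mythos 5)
Your argument is correct and is exactly the triangle-inequality argument the paper has in mind: the paper supplies no proof, merely remarking that the bound ``follows immediately from triangle inequality'' (citing~\cite{CK19}), and your explicit length-$3$ path $S$--$v_1$--$v_2$--$S'$ joining a non-edge is precisely that. One small caveat: when $q=t$ there are \emph{no} non-edges, so condition~(ii) of Definition~\ref{def:bipartitegadget} is vacuous and $g_p(J(q,t))$ is in fact unbounded, making the proposition technically false in that corner rather than ``vacuously'' true as you put it; this never arises in the paper's applications (which take $q\gg t$), and your treatment of the substantive case $q\ge t+1$ is sound.
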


We consider the $\ell_1$-metric and show that we can meet the upper bound in Proposition~\ref{prop:ubcd}.
\begin{lemma}\label{lem:cd1}
For all $q\ge 3$ and $t\ge 2$ (where $q\ge t$), we have $g_1(J(q,t))=3$. 
More generally, $g_1(J(q,t,s))=(t - s + 2) / (t - s)$.
\end{lemma}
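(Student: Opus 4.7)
The plan is to realize the claimed gap by the natural characteristic-vector embedding and then to show that this embedding is optimal among all $\ell_1$ gap realizations.

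For the lower bound $g_1(J(q,t,s)) \geq (t-s+2)/(t-s)$, I embed each vertex $v$ of $J(q,t,s)$, a subset of $[q]$, to its characteristic vector $\chi_v \in \{0,1\}^q$; this map has dimension $q \leq \poly(|A|+|B|)$ and is trivially computable. For an edge $(T,S) \in E$ we have $S \subseteq T$ and so $\|\chi_T - \chi_S\|_1 = |T \triangle S| = t - s =: \beta$, while for a non-edge $(T,S)$ we have $|T \cap S| \leq s - 1$, giving $\|\chi_T - \chi_S\|_1 = t + s - 2|T \cap S| \geq t - s + 2 = \frac{t-s+2}{t-s}\cdot \beta$. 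Hence this embedding $(t-s+2)/(t-s)$-gap-realizes $J(q,t,s)$.

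For the upper bound $g_1(J(q,t,s)) \leq (t-s+2)/(t-s)$, the case $s = t-1$ is immediate from Proposition~\ref{prop:ubcd} since $(t-s+2)/(t-s) = 3$ there. For $s < t-1$, the triangle-inequality chain of length three used in Proposition~\ref{prop:ubcd} yields only $g_1 \leq 3$, which is too weak, so an $\ell_1$-specific argument is needed. The plan is to invoke the cut-cone characterization of $\ell_1$ semimetrics: any finite $\ell_1$ metric $d$ on the vertex set decomposes as $d = \sum_c w_c\,\delta_c$ with nonnegative weights $w_c$ over cuts $\delta_c$. By weak LP duality, it then suffices to produce positive weights $\alpha_e$ on a multiset of edges and $\gamma_n$ on a multiset of non-edges with $\sum_e \alpha_e = \sum_n \gamma_n$ such that, for every cut $\delta_c$, we have $\sum_n \gamma_n\,\delta_c(n) \leq \frac{t-s+2}{t-s}\sum_e \alpha_e\,\delta_c(e)$; integrating against the $w_c$ of any candidate embedding then yields $\lambda \leq (t-s+2)/(t-s)$.

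The main obstacle will be designing the dual weighting $(\alpha_e, \gamma_n)$ and verifying the cut-by-cut inequality combinatorially. A natural candidate, guided by Johnson-graph symmetry, is to weight each non-edge $(T, S)$ with $|T \cap S| = s-1$ uniformly and to distribute the matching edge weight over the $(t-s+1)$ edges $(T, S')$ with $S' \subseteq T$ and $|S' \cap S| = s-1$, each corresponding to a choice of ``gatekeeper'' element of $T \setminus S$ that gets swapped into $S \cap T$. Verifying the cut-by-cut inequality then reduces to a purely combinatorial estimate over bipartitions of $[q]$; the ratio $(t-s+2)/(t-s)$ should arise precisely because a tight non-edge pair has symmetric difference of size $t - s + 2$ whereas an associated edge has symmetric difference of size $t - s$, and the worst-case cut of an embedded Johnson configuration is the one that separates these two quantities in exactly this proportion.
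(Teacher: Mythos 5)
Your lower-bound construction is exactly the paper's proof: the characteristic-vector embedding $\tau(S) = \sum_{i \in S} e_i$, with $\|\tau(T)-\tau(S)\|_1 = t-s$ for an edge and $\|\tau(T)-\tau(S)\|_1 \geq t-s+2$ for a non-edge, giving $g_1(J(q,t,s)) \geq (t-s+2)/(t-s)$. That part is complete and correct.

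Where you diverge is in then trying to supply the matching upper bound. Notice, however, that the paper's own proof stops after the embedding: it only establishes the lower bound, and the equality sign in the lemma statement is fully justified only in the case $s=t-1$, where $(t-s+2)/(t-s)=3$ is forced from above by the triangle-inequality argument of Proposition~\ref{prop:ubcd}. For $s<t-1$ the paper never proves the converse inequality, and the downstream use of this lemma (through $\gamma_{p,\Delta}$, an infimum) requires only the lower bound. Your cut-cone / LP-duality plan for the upper bound is a reasonable framework — decompose any $\ell_1$ semimetric as a nonnegative combination of cut pseudometrics, then exhibit a dual weighting $(\alpha_e,\gamma_n)$ on edges and non-edges with $\sum_n \gamma_n = \sum_e \alpha_e$ and $\sum_n \gamma_n \delta_c(n) \le \frac{t-s+2}{t-s}\sum_e \alpha_e \delta_c(e)$ for every cut $\delta_c$ — but as you yourself flag, the actual construction of those dual weights and the cut-by-cut verification are left as ``the main obstacle.'' That is a genuine gap in your argument for the $s<t-1$ upper bound (and it is not obvious that the claimed equality even holds there; you should treat that as unverified). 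Concretely: for purposes of reproducing the paper's result you only need the lower bound, which you have; if you want the literal equality stated in the lemma, the upper-bound half is not yet proved by you (or by the paper) when $s<t-1$.
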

\begin{proof}
For the $\ell_1$-metric consider the mapping $\tau: \binom{[q]}{t}\cup \binom{[q]}{s}\to\{0,1\}^q$ defined as follows. For every  $S\in \binom{[q]}{t}\cup \binom{[q]}{s}$, we define $$\tau(S)=\sum_{i\in S}e_i.$$ 

Fix some $(S,S')\in \binom{[q]}{t}\times \binom{[q]}{s}$ such that $S'\subseteq S$. Then we have that $$\tau(S)-\tau(S')=\sum_{i\in S\setminus S'}e_i\Rightarrow \|\tau(S)-\tau(S')\|_1=t - s.$$ 

On the other hand if we fix some $(S,S')\in \binom{[q]}{t}\times \binom{[q]}{s}$ such that $S'\not\subset S$ then we have that 
$$
\|\tau(S)-\tau(S')\|_1= \left\|\left(\sum_{i\in S\setminus S'}e_i\right)-\left(\sum_{i\in S'\setminus S}e_i\right)\right\|_1\ge t - s + 2.
$$ 

Thus we have that $\tau$, $(t-s+2)/(t-s)$-gap-realizes $J(q,t,s)$ in the $\ell_1$-metric.  
\end{proof}

Now we focus our attention to bounding the gap number in the Euclidean metric. First, we see that the below  lower bound simply follows from $\tau$ constructed in the above proof. 

\begin{corollary}\label{cor:cd2} 
For all $q\ge 3$ and $t\ge 2$ (where $q\ge t$), we have $g_2(J(q,t))\ge \sqrt{3}$
and $g_2(J(q,t))\ge \sqrt{(t-s+2)/(t-s)}$. 
\end{corollary}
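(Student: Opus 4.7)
The plan is to reuse the very same map $\tau$ constructed in the proof of Lemma~\ref{lem:cd1} and simply reinterpret its distances in the $\ell_2$-metric rather than the $\ell_1$-metric. That is, for each $S \in \binom{[q]}{t} \cup \binom{[q]}{s}$, define $\tau(S) = \sum_{i \in S} e_i \in \{0,1\}^q$, which is the characteristic vector of $S$. This map is efficiently computable, its image lies in $q$-dimensional space (so the dimension bound is satisfied), and it realizes a bipartite graph on $\binom{[q]}{t} \dot\cup \binom{[q]}{s}$.

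The key observation is that for $\{0,1\}$-valued vectors, the squared $\ell_2$-distance equals the Hamming (equivalently $\ell_1$) distance. So I would compute $\|\tau(S) - \tau(S')\|_2^2$ by counting coordinate disagreements: writing $a := |S \cap S'|$ for $|S| = t$ and $|S'| = s$, the nonzero coordinates of $\tau(S) - \tau(S')$ are exactly the indices in $S \triangle S'$, of which there are $(t - a) + (s - a) = t + s - 2a$. When $(S,S')$ is an edge of $J(q,t,s)$, i.e.\ $S' \subseteq S$, one has $a = s$ and hence $\|\tau(S) - \tau(S')\|_2 = \sqrt{t-s}$. When $(S,S')$ is a non-edge, $a \leq s - 1$ forces $\|\tau(S) - \tau(S')\|_2 \geq \sqrt{t-s+2}$.

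Taking the ratio of non-edge to edge distances then yields the realization factor $\sqrt{(t-s+2)/(t-s)}$, establishing the second claim. Specializing to $s = t-1$ gives the ratio $\sqrt{3}$, establishing the first claim $g_2(J(q,t)) \geq \sqrt{3}$ since $J(q,t)$ denotes $J(q,t,t-1)$ by convention.

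There is essentially no technical obstacle here; the work was already done in the proof of Lemma~\ref{lem:cd1}, and the only new ingredient is the elementary identity $\|x-y\|_2 = \sqrt{\|x-y\|_1}$ for $x,y \in \{0,1\}^q$. The statement is labeled a corollary precisely because it is a direct consequence of that lemma together with this one-line observation; the only thing to verify is that $\tau$ still meets the formal requirements of Definition~\ref{def:bipartitegadget} in the $\ell_2$-metric, which it does with $\beta = \sqrt{t-s}$ and $\lambda = \sqrt{(t-s+2)/(t-s)}$.
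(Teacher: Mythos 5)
Your proof is correct and is exactly the argument the paper intends: reuse the characteristic-vector embedding $\tau$ from Lemma~\ref{lem:cd1}, observe $\|x-y\|_2^2 = \|x-y\|_1$ for $\{0,1\}$-vectors, and take square roots of the $\ell_1$ edge/non-edge distances. (You implicitly read the second inequality as a bound on $g_2(J(q,t,s))$, which is clearly what the paper means despite the missing $s$ in its statement.)
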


However, we improve the lower bound with a different embedding. 

\begin{lemma}\label{lem:cd2}
For all $q\ge 3$ and $t\ge 2$ (where $q\ge t$), we have $g_2(J(q,t,1))\ge\sqrt{1+\frac{1}{\sqrt{t}-1}}$. 
More generally, $g_2(J(q,t,s))\ge\sqrt{1+\frac{1}{\sqrt{ts}-s}}$.
\end{lemma}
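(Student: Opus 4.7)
The plan is to improve the embedding of Lemma~\ref{lem:cd1} / Corollary~\ref{cor:cd2} by assigning \emph{different scalar weights} to the indicator vectors on the two sides of the bipartition, and then tuning the ratio of these weights to maximize the gap. Concretely, I would define $\tau\colon \binom{[q]}{t}\cup\binom{[q]}{s}\to\R^q$ by
\[
\tau(S) = \alpha\sum_{i\in S} e_i \quad\text{for } S\in\binom{[q]}{t}, \qquad \tau(S') = \gamma\sum_{i\in S'} e_i \quad\text{for } S'\in\binom{[q]}{s},
\]
where $\alpha,\gamma>0$ are constants to be chosen. Note that when $\alpha=\gamma$ this is precisely the embedding used for Corollary~\ref{cor:cd2}.

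For $(S,S')\in\binom{[q]}{t}\times\binom{[q]}{s}$ one computes directly that
\[
\|\tau(S)-\tau(S')\|_2^2 \;=\; \alpha^2 t + \gamma^2 s - 2\alpha\gamma\,|S\cap S'|.
\]
On edges of $J(q,t,s)$ we have $S'\subseteq S$, so $|S\cap S'|=s$ and the squared distance equals the constant $\alpha^2 t + \gamma^2 s - 2\alpha\gamma s$; this satisfies condition (i) of Definition~\ref{def:bipartitegadget}. On non-edges $|S\cap S'|\leq s-1$, so the squared distance is at least $\alpha^2 t + \gamma^2 s - 2\alpha\gamma (s-1)$, the minimum being attained when $|S\cap S'|=s-1$. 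Therefore the squared gap ratio is
\[
g_2(J(q,t,s))^2 \;\geq\; 1 + \frac{2\alpha\gamma}{\alpha^2 t + \gamma^2 s - 2\alpha\gamma s}.
\]

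Setting $u=\alpha/\gamma$ (only the ratio matters since the gap is scale-invariant), I would minimize the denominator $u^2 t + s - 2us$ divided by $u$, equivalently minimize $ut + s/u$, whose minimum over $u>0$ is attained at $u=\sqrt{s/t}$ by AM--GM (or a one-line derivative). Plugging $u=\sqrt{s/t}$ back gives denominator $2s(1-\sqrt{s/t})$ and numerator $2\sqrt{s/t}$, so
\[
g_2(J(q,t,s))^2 \;\geq\; 1 + \frac{\sqrt{s/t}}{s\bigl(1-\sqrt{s/t}\bigr)} \;=\; 1 + \frac{1}{\sqrt{ts}-s},
\]
which is the claimed bound, and specializes at $s=1$ to $\sqrt{1+\tfrac{1}{\sqrt{t}-1}}$. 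The embedding is clearly polynomial-time computable and of dimension $q\leq\poly(|A|+|B|)$, as required by the definition of $g_2$.

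There is no real obstacle here beyond guessing the right form of the embedding; the main conceptual point is simply that forcing $\alpha=\gamma$ in the naive embedding is suboptimal because the two sides of the bipartition have unequal Hamming weights ($t$ versus $s$), and rebalancing via $\alpha/\gamma=\sqrt{s/t}$ cancels out the constant term in the expansion of $\|\tau(S)-\tau(S')\|_2^2$ in the most favorable way. After the rebalancing, the improvement over Corollary~\ref{cor:cd2} follows from a single-variable optimization.
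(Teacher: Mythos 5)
Your proposal is correct and is essentially the same embedding as the paper's: the paper fixes $\tau(T)=\sum_{i\in T}e_i$ on $t$-sets and $\tau(S)=\sqrt{t/s}\sum_{i\in S}e_i$ on $s$-sets, which corresponds exactly to your optimal ratio $\alpha/\gamma=\sqrt{s/t}$. The only difference is cosmetic: you derive the scaling via a one-variable optimization of $\alpha^2 t+\gamma^2 s-2\alpha\gamma|S\cap S'|$, whereas the paper posits $\sqrt{t/s}$ directly and verifies the two distances by hand.
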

\begin{proof}
For the $\ell_2$-metric consider the mapping $\tau: \binom{[q]}{t}\cup \binom{[q]}{s}\to\mathbb{R}^q$ defined as follows. For every  $T\in \binom{[q]}{t}$, we define $$\tau(T)=\sum_{i\in T}e_i.$$ 

For every  $S\in  \binom{[q]}{s}$, we define $$\tau(S)=\sqrt{\frac{t}{s}}\cdot \sum_{i\in S}e_i.$$ 

Fix some $(T,S)\in \binom{[q]}{t}\times \binom{[q]}{s}$ such that $S\subseteq T$. Then we have that 
\allowdisplaybreaks\begin{align*}
\tau(T)-\tau(S)&=\left(\left(\sqrt{\frac{t}{s}}-1\right)\cdot \sum_{i\in S}e_i\right)+\left(\sum_{i\in T\setminus S}e_i\right)\\
\Rightarrow \|\tau(T)-\tau(S)\|_2&=\sqrt{s\cdot\left(\sqrt{\frac{t}{s}}-1\right)^2+t-s}\\
&=\sqrt{2}\cdot \sqrt{t-\sqrt{ts}}
\end{align*}

On the other hand if we fix some $(T,S)\in \binom{[q]}{t}\times \binom{[q]}{s}$ such that $S\not\subset T$ then we have that 
\begin{align*}
\|\tau(T)-\tau(S)\|_2&=\left\|\left(\left(\sqrt{\frac{t}{s}}-1\right)\cdot \sum_{i\in S\cap T}e_i\right)+\left(\sqrt{\frac{t}{s}}\cdot \sum_{i\in S\setminus T}e_i\right)+\left(\sum_{i\in T\setminus S}e_i\right)\right\|_2\\
&\ge \sqrt{(s-1)\cdot\left(\sqrt{\frac{t}{s}}-1\right)^2+t-s+1+\frac{t}{s}}\\
&=\sqrt{2}\cdot\sqrt{t-\sqrt{ts}+\sqrt{\frac{t}{s}}}\\
&=\sqrt{2}\cdot\sqrt{(t-\sqrt{ts})\cdot \left(1+\frac{1}{\sqrt{ts}-s}\right)}
\end{align*}

Thus we have that $\tau$, $\sqrt{\left(1+\frac{1}{\sqrt{ts}-s}\right)}$-gap-realizes $J(q,t,s)$ in the $\ell_2$-metric.  When $s=1$, $\tau$  $\sqrt{\left(1+\frac{1}{\sqrt{t}-1}\right)}$-gap-realizes $J(q,t,1)$ in the $\ell_2$-metric.
\end{proof}

In order to see that the lower bound on $g_2(J(q,t,s))$ given in Lemma~\ref{lem:cd2} is indeed higher than the one given in Corollary~\ref{cor:cd2}, note the following:
\begin{align*}
\left(1+\frac{1}{\sqrt{ts}-s}\right)-\left(\frac{t-s+2}{t-s}\right)&=\frac{1}{\sqrt{ts}-s}-\frac{2}{t-s}=\frac{t-s-2\sqrt{ts}+2s}{(\sqrt{ts}-s)(t-s)}=\frac{(\sqrt{t}-\sqrt{s})^2}{(\sqrt{ts}-s)(t-s)}> 0,
\end{align*}
as $t>s$.

We wrap up our computation of gap numbers by showing that as $p$ grows the gap number of $J(q,t)$ in the $\ell_p$-metric approaches 3.

\begin{lemma}\label{lem:cdp}
For all $q\ge 3$ and $t\ge 2$ (where $q\ge t$), we have  that for every $\varepsilon>0$ there exists $p\in\mathbb{N}$ such that $g_p(J(q,t))> 3-\varepsilon$. 
\end{lemma}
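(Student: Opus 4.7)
The plan is to build an explicit embedding tailored to $\ell_p$ for large $p$, in which non-edges concentrate a large contribution in a single coordinate while edges spread a small contribution uniformly across all coordinates. As $p \to \infty$, the $\ell_p$-distance of a non-edge is dominated by its one large coordinate, whereas the $\ell_p$-distance of an edge scales only like $d^{1/p}$ where $d$ is the (fixed) dimension; hence the gap ratio tends to the largest single-coordinate contribution.

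Concretely, I would take coordinates indexed by $S_0 \in \binom{[q]}{t-1}$ (so the dimension is $d := \binom{q}{t-1}$, which is $\poly(|V(J(q,t))|)$ for fixed $t$) and define
\[
\tau(T)_{S_0} = \begin{cases} 0 & \text{if } S_0 \subset T, \\ 2 & \text{if } S_0 \not\subset T, \end{cases} \qquad
\tau(S)_{S_0} = \begin{cases} -1 & \text{if } S_0 = S, \\ 1 & \text{if } S_0 \neq S. \end{cases}
\]
The specific values $(0, 2, -1, 1)$ are chosen to saturate the triangle inequality: they satisfy $|0-(-1)| = |0-1| = |2-1| = 1$ while $|2-(-1)| = 3$.

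A short case analysis over the three coordinate types ($S_0 = S$; $S_0 \subset T$ with $S_0 \neq S$; $S_0 \not\subset T$) shows that for every edge $(T,S)$ with $S \subset T$ each coordinate contributes $1$ in absolute value, so $\|\tau(T)-\tau(S)\|_p = d^{1/p}$; whereas for every non-edge $(T,S)$ with $S \not\subset T$, the coordinate $S_0 = S$ contributes $3$ and the remaining $d-1$ coordinates each contribute $1$, so $\|\tau(T)-\tau(S)\|_p = (3^p - 1 + d)^{1/p}$. The resulting gap ratio is
\[
\left(\frac{3^p - 1 + d}{d}\right)^{1/p} \;=\; \left(1 + \frac{3^p-1}{d}\right)^{1/p},
\]
which tends to $3$ as $p \to \infty$ since $d$ is a constant depending only on $q, t$. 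Quantitatively, taking $p > \log_{3/(3-\varepsilon)} d$ suffices to force this ratio above $3 - \varepsilon$, establishing $g_p(J(q,t)) > 3 - \varepsilon$.

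The main design step — and really the only part requiring thought — is the choice of the four scalar values so that all ``edge'' coordinate differences have the same absolute value $1$ (making edge distances coordinate-uniform, hence poorly served by $\ell_p$ for large $p$) while the ``non-edge'' coordinate $S_0 = S$ achieves the triangle-inequality maximum of $3$ (making non-edge distances sharply concentrated, hence efficiently captured by $\ell_p$). Any tuple $(a, b, c, d)$ with $|a-c| = |a-d| = |b-d| = 1$ and $|b-c| = 3$ would work; the arithmetic-progression choice $(0, 2, -1, 1)$ is the simplest such, and once it is in hand the rest is routine.
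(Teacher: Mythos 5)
Your proof is correct and rests on the same core idea as the paper's: choose an embedding in which edge distances spread uniformly across all coordinates (so they scale like $d^{1/p}$) while non-edge distances concentrate a single coordinate at the triangle-inequality maximum of $3\beta$, and then let $p\to\infty$. The only difference is the concrete embedding: the paper works in dimension $q$ (coordinates indexed by $[q]$, with $\tau(S)=\half+\sum_{i\in S}e_i$ and $\tau(T)=\sum_{i\in T}e_i$, so that edge differences are uniformly $\nicefrac{1}{2}$ and a non-edge has a coordinate of $\nicefrac{3}{2}$), whereas yours works in the larger but still polynomial dimension $\binom{q}{t-1}$ with the scalars $(0,2,-1,1)$; both achieve the identical quantitative bound $g_p(J(q,t))\geq 3/d^{1/p}$ for the respective $d$.
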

\begin{proof}Fix $q\ge 3$, $t\ge 2$, and $\varepsilon>0$. Let $p\in\mathbb{N}$ such that $q^{1/p}<1+\nicefrac{\varepsilon}{3}$. 
Consider the mapping $\tau: \binom{[q]}{t}\cup \binom{[q]}{t-1}\to\mathbb R^q$ defined as follows. For every  $S\in  \binom{[q]}{t-1}$, we define $$\tau(S)=\half+\sum_{i\in S}e_i,$$ 
and for every  $T\in \binom{[q]}{t}$, we define $$\tau(T)=\sum_{i\in T}e_i.$$

Fix some $(S,T)\in \binom{[q]}{t-1}\times\binom{[q]}{t}$ such that $S\subset T$. Then we have that $$\eta:=\tau(T)-\tau(S)=\left(\sum_{i\in T\setminus S}e_i\right)-\half.$$ Since $\eta\in\{\nicefrac{-1}{2},\nicefrac{1}{2}\}^q$, we have that $\|\eta\|_p=\|\tau(T)-\tau(S)\|_p=\nicefrac{q^{1/p}}{2}$.  

On the other hand if we fix some $(S,T)\in \binom{[q]}{t-1}\times\binom{[q]}{t}$ such that $S\not\subset T$, i.e., $\exists u\in [q]$ such that $u\in S\setminus T$  then we have that 
$$
\|\tau(T)-\tau(S)\|_p\ge |(\tau(T))_u-(\tau(S))_u|=\frac{3}{2}.
$$ 

Thus we have that $\tau$, $\left(\frac{3}{q^{1/p}}\right)$-gap-realizes $J(q,t)$ in the $\ell_p$-metric.   Finally note that $\frac{3}{q^{1/p}}>\frac{9}{3+\varepsilon}=3-\frac{3\varepsilon}{3+\varepsilon}>3-\varepsilon $.
\end{proof}

We remark that the embeddings given in Lemmas \ref{lem:cd1} and \ref{lem:cdp} are essentially small modifications of the ones given in \cite{CK19}.

For the sake of compactness of statements in the future, we introduce the following.
\begin{definition}
For all $p\in\met$, we define $\gamma_{p}=\underset{\substack{q\ge t}}{\inf}\ g_p(J(q,t))$, 
and $\gamma_{p, \Delta}=\underset{\substack{q\ge t \geq \Delta + 1}}{\inf}\ g_p(J(q,t,t - \Delta))$
\end{definition}

Finally, we conclude this subsection by recalling the following `hereditary' property of the embedding which is important for applications to hardness of approximation.

\begin{proposition}[\cite{CK19}]\label{prop:gadget}
Let $q\ge 3$ and $t\ge 2$ (where $q\ge t$), and $p\in\met$. Let $E\subseteq \binom{[q]}{t}$ and $G$ be the subgraph of $J(q,t)$ induced by the vertex set $E\cup\binom{[q]}{t-1}$. Let $\tau$ be a $\lambda$-gap realization of $J(q,t)$ in the $\ell_p$-metric.  Then $\tau$  restricted to the vertices of $G$ is a $\lambda$-gap realization of $G$ in the $\ell_p$-metric. 
\end{proposition}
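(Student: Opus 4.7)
The plan is to verify the two conditions of Definition~\ref{def:bipartitegadget} directly; the proposition is essentially immediate from the fact that $G$ is an \emph{induced} subgraph of $J(q,t)$.

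Let $\beta > 0$ be the constant witnessing that $\tau$ is a $\lambda$-gap realization of $J(q,t)$. The first step is to observe that the vertex bipartition of $G$ is $(E, \binom{[q]}{t-1})$, and that every edge of $G$ is an edge of $J(q,t)$ (since $G$ is a subgraph of $J(q,t)$), so condition (i) of Definition~\ref{def:bipartitegadget} for the restriction of $\tau$ to $V(G)$ follows immediately from the corresponding condition for $\tau$ on $J(q,t)$, with the same constant $\beta$.

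The second step is to verify condition (ii). Here the only thing to use is that $G$ is \emph{induced} on $E \cup \binom{[q]}{t-1}$: if $(T, S) \in E \times \binom{[q]}{t-1}$ is a non-edge of $G$, then it must also be a non-edge of $J(q,t)$ (i.e., $S \not\subset T$), for otherwise it would have survived in the induced subgraph. Hence by condition (ii) for $\tau$ on $J(q,t)$, we get $\|\tau(T) - \tau(S)\|_p \geq \lambda \beta$, which is exactly condition (ii) for the restriction. Efficiency is inherited trivially since no new computation is performed. There is no real obstacle here; the statement is recorded mainly so that later reductions can start from an arbitrary $E \subseteq \binom{[q]}{t}$ and still use the gap realizations of $J(q,t)$ constructed in Lemmas~\ref{lem:cd1},~\ref{lem:cd2},~and~\ref{lem:cdp} in a black-box manner.
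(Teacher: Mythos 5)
Your proof is correct and is essentially the only sensible argument: the key observation that an \emph{induced} subgraph on a subset of one partite class preserves both edges and non-edges (within the restricted vertex set) is precisely what is needed to inherit both conditions of Definition~\ref{def:bipartitegadget} with the same $\beta$. The paper itself does not reprove this proposition (it is attributed to \cite{CK19}), so there is no alternative route in the present text to compare against; your argument matches the intended one.
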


\subsection{Conditional Inapproximability of Discrete \kmean and \kmed}\label{sec:condDisc}

In this subsection, we show how \JCH or any hardness of $(\alpha,z,y)$-\jc implies the hardness of approximation of the discrete case of \kmean and \kmed in all $\ell_p$-metrics, i.e., we prove Theorem~\ref{thm:introellp}.

First, we define for every $p\in\met$,the quantities $\zeta_{1}(p, \Delta, \alpha)$ and $\zeta_{2}(p, \Delta, \alpha)$ as follows: $$\zeta_{1}(p, \Delta, \alpha):=1+ (1 - \alpha)(\gamma_{p, \Delta}-1) \text{\ \ and \ \ }\zeta_{2}(p, \Delta, \alpha):=1+ (1 - \alpha) (\gamma_{p, \Delta}^2-1).$$ 

Also let $\zeta_1(p) := \zeta_{1}(p, 1, 1-1/e)$ and $\zeta_2(p) := \zeta_{2}(p, 1, 1-1/e)$. 
Notice that $\zeta_1(1)\approx 1.73$, $\zeta_2(1)=3.94$,  $\zeta_1(2)\approx 1.27$, $\zeta_2(2)\approx 1.73$, and as $p\rightarrow\infty$, we have $\zeta_1(p)\rightarrow\zeta_1(1)$ and $ \zeta_2(p)\rightarrow \zeta_2(1)$. 

Now, we state our inapproximability results for \kmean and \kmed.

\begin{theorem}[\kmean with candidate centers in $O(\log n)$ dimensional $\ell_p$-metric space]\label{thm:kmeanellp} Let $p\in\met$.
Assuming 
$(\alpha, z, y)$-Johnson Coverage is NP-hard,
for every constant $\varepsilon>0$,  given a point-set $\Po\subset \mathbb{R}^{d}$ of size $n$ (and $d=O(\log n)$), a collection $\C$ of $m$ candidate centers in $\mathbb R^d$ (where $m=\poly(n)$), and a parameter $k$ as input, it is \NP-Hard to distinguish between the following two cases:
\begin{itemize}
\item \textbf{\emph{Completeness}}:  
There exists $\C':=\{c_1,\ldots ,c_k\}\subseteq \C$ and $\sigma:\Po\to\C'$ such that $$\sum_{a\in\Po}\|a-\sigma(a)\|_p^2\le \rho n (\log n)^{2/p},$$
\item \textbf{\emph{Soundness}}: For every $\C':=\{c_1,\ldots ,c_k\}\subseteq \C$ and every $\sigma:\Po\to\C'$ we have: $$\sum_{a\in\Po}\|a-\sigma(a)\|_p^2\ge (\zeta_2(p, z -y, \alpha)-\varepsilon)\cdot \rho n (\log n)^{2/p},$$
\end{itemize}
for some constant $\rho>0$.
\end{theorem}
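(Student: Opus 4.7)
The plan is to reduce from the assumed hard $(\alpha, z, y)$-Johnson Coverage instance $(U = [n], E \subseteq \binom{[n]}{z}, k)$ to an instance of discrete \kmean in $\ell_p$, in two stages: first a direct ``one coordinate per universe element'' embedding that lands in $n$ dimensions and already gives the desired gap $\zeta_2(p, z-y, \alpha)$, and then a dimension-reduction step, using algebraic-geometric codes from Theorem~\ref{thm:code}, to push the ambient dimension down to $O(\log n)$ at the cost of only a small distortion absorbed by the $\varepsilon$ slack.

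For the direct embedding I would let $\tau$ be any efficient $\gamma_{p, z-y}$-gap-realization of $J(n,z,y)$ in some $d$-dimensional $\ell_p$-metric as guaranteed by Lemmas~\ref{lem:cd1},~\ref{lem:cd2}~and~\ref{lem:cdp} (and Proposition~\ref{prop:gadget}, so one may restrict to the relevant subgraph on $E \cup \binom{[n]}{y}$). Set the point-set $\Po = \{\tau(T) : T \in E\}$ and the candidate centers $\C = \{\tau(S) : S \in \binom{[n]}{y}\}$, and let $\beta$ be the ``edge distance'' in Definition~\ref{def:bipartitegadget}. In the completeness case, the witness $\{S_1, \dots, S_k\}$ for Johnson Coverage yields centers $\C' = \{\tau(S_1), \dots, \tau(S_k)\}$ such that every $T \in E$ has some $S_i \subset T$, so assigning $\sigma(\tau(T)) = \tau(S_i)$ gives a total \kmean cost of exactly $|E|\beta^2$. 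In the soundness case, for any choice of $k$ centers $\tau(S_1), \dots, \tau(S_k)$, the Johnson Coverage soundness says at least a $(1-\alpha)$ fraction of $T \in E$ satisfy $S_i \not\subset T$ for every $i$, so those points contribute at least $(\gamma_{p, z-y}\beta)^2$ each, yielding total cost at least $|E|\beta^2 (\alpha + (1-\alpha)\gamma_{p,z-y}^2) = \zeta_2(p, z-y, \alpha) \cdot |E|\beta^2$. Taking $\rho$ proportional to $\beta^2$ (with the appropriate $(\log n)^{2/p}$ normalization introduced below) gives the claimed gap before dimension reduction.

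The main obstacle, and the step where the $(\log n)^{2/p}$ factor enters, is the dimension reduction from $n$ down to $O(\log n)$. Here I would follow the generalized embedding/communication-protocol framework announced in the introduction: replace each coordinate indexed by $i \in [n]$ with an $O(\log n)$-long block obtained by encoding $i$ via a code from the family $\AG$ of Theorem~\ref{thm:code} (a code of positive constant rate and relative distance at least $1 - 3/\sqrt{q}$, which can be made arbitrarily close to $1$ by taking $q$ large). The key property to verify is that, up to scaling by $\Theta((\log n)^{1/p})$ in the $\ell_p$-metric, ``inner'' distances between encodings of distinct indices are essentially equal (and small), while ``outer'' differences between encoded sets $T$ and $S$ that were at distance $\beta$ (resp.\ $\gamma_{p,z-y}\beta$) in the direct embedding remain at approximately those distances times $(\log n)^{1/p}$. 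The constant relative distance of the code ensures that any two distinct indices look ``nearly orthogonal'' in $\ell_p$, so the completeness and soundness analyses above carry through with multiplicative distortion at most $1 \pm O(1/\sqrt{q})$, which can be absorbed into the $\varepsilon$ in $\zeta_2(p, z-y, \alpha) - \varepsilon$ by choosing $q$ sufficiently large (depending on $\varepsilon$).

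Finally, I would set $\rho$ so that the total completeness cost equals $\rho n (\log n)^{2/p}$ (using $|E| = \mathrm{poly}(n)$ after a standard padding/rescaling if necessary so that $|E|$ and $n$ agree up to constants), and collect constants to get the soundness bound $(\zeta_2(p, z-y, \alpha) - \varepsilon)\rho n (\log n)^{2/p}$. The encoding, embedding $\tau$, and overall reduction are all polynomial-time by the efficiency clauses in Definition~\ref{def:bipartitegadget} and Theorem~\ref{thm:code}, so an $(\zeta_2(p, z-y, \alpha)-\varepsilon)$-approximation for discrete \kmean in $\ell_p$ would decide $(\alpha, z, y)$-Johnson Coverage, completing the reduction.
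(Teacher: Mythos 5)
Your high-level plan --- a gap realization of the Johnson incidence graph plus algebraic-geometric codes, with distortion $1\pm O(1/\sqrt q)$ absorbed by $\varepsilon$, and a completeness/soundness count of $|E|\beta^2$ versus $\zeta_2|E|\beta^2$ --- is the same idea as the paper's. But the middle step, the dimension reduction, is where the argument breaks as written, and this is the step that carries essentially all of the weight.

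You describe the reduction as ``replace each coordinate indexed by $i\in[n]$ with an $O(\log n)$-long block obtained by encoding $i$.'' Taken at face value this maps $\mathbb R^n$ to $\mathbb R^{n\cdot O(\log n)}$: it \emph{increases} the ambient dimension rather than compressing it, and it is unclear how a vector $\tau(T)$ with support $T$ should even interact with an encoding of coordinate labels. For $p>1$ the issue is worse: the gap realization in Lemma~\ref{lem:cdp} assigns a center $\tau(S)$ the value $\tfrac12$ on \emph{every} coordinate, so there is no $O(\log n)$-size support to exploit at all. The construction the paper actually uses is not ``direct $n$-dimensional embedding followed by a generic dimension reduction.'' Rather, one never forms the $J(n,z,y)$ realization: one fixes a \emph{constant} prime square $q$, encodes each universe element $u\in[n]$ as $\AG(u)\in[q]^{\ell}$ with $\ell=O(\log n)$, and then for each code coordinate $\gamma\in[\ell]$ and each $T\in E$ forms the set $R_{T,\gamma}=\{\AG(u)_{\gamma}:u\in T\}\subseteq[q]$ of symbols appearing in position $\gamma$. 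The gap realization $\tau$ one applies is of the \emph{constant-alphabet} graph $J(q,z,y)$, applied coordinate-by-coordinate to the sets $R_{T,\gamma}$ (and analogously $R_{S,\gamma}$), giving a block of constant dimension $d^*$ per $\gamma$. Concatenating gives dimension $\ell\cdot d^*=O(\log n)$. The reason this preserves the containment gap is precisely the near-equidistance of the code: on all but an $O(z y/\sqrt q)$ fraction of coordinates $\gamma$, the $z$ (resp.\ $y$) symbols are distinct, and moreover for a single ``witness'' element $u\in S\setminus T$ the symbol $\AG(u)_\gamma$ escapes $R_{T,\gamma}$ on all but an $O(z/\sqrt q)$ fraction of $\gamma$; hence $R_{S,\gamma}\subseteq R_{T,\gamma}$ iff $S\subseteq T$ on most coordinates, and the per-coordinate $\lambda$-gap of $J(q,z,y)$ is inherited. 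The few ``collision'' coordinates (where $|R_{T,\gamma}|<z$) also have to be handled explicitly (the paper pads via $\Lambda_{x,x'}$ so that the target distance remains exactly $\beta$ there in the completeness direction). None of this is visible from the verbal claim that ``distinct indices look nearly orthogonal,'' so as written the proposal does not contain a correct mechanism for getting from $\mathrm{poly}(n)$ dimensions to $O(\log n)$; filling that in requires precisely the per-code-coordinate projection to $J(q,z,y)$ that the paper spells out.

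A secondary point: in the theorem statement $n$ denotes $|\Po|=|E|$, not the Johnson Coverage universe size, so no ``padding/rescaling so that $|E|$ and $n$ agree'' is needed --- they are the same quantity by definition once the reduction is set up.
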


In particular, assuming \JCH, \kmean is \NP-Hard to approximate within a factor $\zeta_2(p)$, which is at least $3.94$ in $\ell_1$ and at least $1.73$ in $\ell_2$. 

\begin{theorem}[\kmed with candidate centers in $O(\log n)$ dimensional $\ell_p$-metric space]\label{thm:kmedellp}
Let $p\in\met$.
Assuming 
$(\alpha, z, y)$-Johnson Coverage is \NP-Hard, 
for every constant $\varepsilon>0$,  given a point-set $\Po\subset \mathbb{R}^{d}$ of size $n$ (and $d=O(\log n)$), a collection $\C$ of $m$ candidate centers in $\mathbb R^d$ (where $m=\poly(n)$), and a parameter $k$ as input, it is \NP-Hard to distinguish between the following two cases:
\begin{itemize}
\item \textbf{\emph{Completeness}}:  
There exists $\C':=\{c_1,\ldots ,c_k\}\subseteq \C$ and $\sigma:\Po\to\C'$ such that $$\sum_{a\in\Po}\|a-\sigma(a)\|_p\le \rho n(\log n)^{1/p},$$
\item \textbf{\emph{Soundness}}: For every $\C':=\{c_1,\ldots ,c_k\}\subseteq \C$ and every $\sigma:\Po\to\C'$ we have: $$\sum_{a\in\Po}\|a-\sigma(a)\|_p\ge (\zeta_1(p, z-y, \alpha)-\varepsilon)\cdot \rho n(\log n)^{1/p},$$
\end{itemize}
for some constant $\rho>0$.
\end{theorem}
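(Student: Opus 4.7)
The plan is to reduce from a hard instance $([n'], E, k)$ of $(\alpha, z, y)$-Johnson Coverage directly, using the gap-realization $\tau$ of $J(n', z, y)$ from Section~\ref{sec:gadget}: Lemma~\ref{lem:cd1} for $p=1$, Lemma~\ref{lem:cd2} for $p=2$, Lemma~\ref{lem:cdp} for large $p$, with a hybrid in between. Writing $\beta$ for the edge-distance of $\tau$ and $\lambda \geq \gamma_{p, z-y}$ for its gap factor, I would take the point set $\Po = \{\tau(T) : T \in E\}$, the candidate centers $\C = \{\tau(S) : S \in \binom{[n']}{y}\}$, and the same parameter $k$. Proposition~\ref{prop:gadget} says that $\tau$ still $\lambda$-gap-realizes the induced bipartite subgraph on $\Po \cup \C$, which is exactly the structure the soundness argument needs.

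For completeness, any $(z{-}y)$-cover $S_1, \ldots, S_k$ of $E$ in the Johnson-Coverage sense yields centers $\tau(S_1), \ldots, \tau(S_k) \in \C$ such that assigning each $T \in E$ to a center whose preimage is contained in $T$ pays exactly $\beta$ per point, giving total cost $|E|\cdot \beta$. For soundness, if $S_1,\ldots,S_k$ covers at most $\alpha|E|$ sets, then for the remaining $(1-\alpha)|E|$ uncovered $T$'s every center is at distance $\geq \lambda\beta$, so the total \kmed cost is at least
\[
(\alpha + (1-\alpha)\lambda)\cdot |E|\cdot \beta \;\geq\; \zeta_1(p, z-y, \alpha)\cdot |E|\cdot \beta.
\]
Choosing the scaling constant $\rho$ so that $|E|\cdot \beta = \rho \cdot |\Po| \cdot (\log n)^{1/p}$ then matches the quantitative form of the theorem, once dimension reduction is carried out.

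The main obstacle is the dimension: the vanilla embedding lives in $\R^{n'}$ where $n' = \poly(n)$, whereas the theorem demands $d = O(\log n)$. My plan is to generalize the Vertex-Cover dimensionality-reduction recipe of \cite{CK19} to the Set-Cover-type instances arising here by composing $\tau$ with an algebraic-geometric code (Theorem~\ref{thm:code}). Concretely, identify $[n']$ with codewords in $\Sigma^\ell$ where $\ell = O(\log n')$ and $|\Sigma|$ is a constant prime square, and replace each $n'$-dimensional vector by its length-$\ell|\Sigma|$ ``coordinate-symbol'' expansion. Edges of $J(n', z, y)$ correspond to pairs whose symmetric difference has size $z-y$, so edge-distances in the compressed representation all scale by the same factor $\Theta(\ell^{1/p})$, which produces the $(\log n)^{1/p}$ normalization in the statement. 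Non-edges between $T\in E$ and $S\in\binom{[n']}{y}$ with $S\not\subset T$, by contrast, contribute at least one coordinate of $\Sigma^\ell$ in which $S$ carries a symbol disagreeing with every element of $T$; the relative distance $1 - 3/\sqrt{q}$ of AG codes guarantees this separation is preserved up to a $(1 - o_q(1))$ factor, which can be absorbed into $\varepsilon$ by taking $q$ large.

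The delicate point I expect to be most work is verifying that this code-composed embedding continues to satisfy the hereditary guarantee of Proposition~\ref{prop:gadget} uniformly over \emph{all} $\poly(n)$ candidate centers, not just the ``good'' ones arising from the Johnson-Coverage solution; this requires a careful amortization of the code-distance over the $y$ coordinates of each center $S$, and is where the extension from the graph case ($y=1$) of \cite{CK19} to the hypergraph case of general $y$ really bites. Once this is in place, the \kmean version (Theorem~\ref{thm:kmeanellp}) follows by squaring the per-point contributions, which replaces $\lambda$ by $\lambda^2$ and yields $\zeta_2(p, z-y, \alpha)$ in place of $\zeta_1(p, z-y, \alpha)$.
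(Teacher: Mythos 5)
Your proposal follows essentially the same route as the paper's proof of Theorems~\ref{thm:kmeanellp} and~\ref{thm:kmedellp}: start from an $(\alpha,z,y)$-\jc instance, map $z$-sets to points and $y$-sets to candidate centers, realize the $J(\cdot,z,y)$ incidence structure in $\ell_p$ via the gap realizations of Section~\ref{sec:gadget}, and compose with an algebraic-geometric code to bring the dimension down to $O(\log n)$, with the completeness and soundness bookkeeping exactly as you outline.

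Two places where the sketch should be tightened to match what actually makes the argument go through. First, the ``coordinate-symbol expansion'' description is specific to the $\ell_1$ characteristic-vector map of Lemma~\ref{lem:cd1}; in general, for each codeword coordinate $\gamma\in[\ell]$ you form the \emph{local} subset $R^z_{T,\gamma}=\{\AG(u)_\gamma : u\in T\}\subseteq[q]$ (and similarly $R^y_{S,\gamma}$), and the $\gamma$-th block of the embedded vector is $\tau(R^z_{T,\gamma})$ for the $p$-dependent $\tau$ gap-realizing $J(q,z,y)$ with $q$ a constant. This is what lets Lemma~\ref{lem:cd2} for $\ell_2$ (and Lemma~\ref{lem:cdp} for large $p$) beat the bound $((t-s+2)/(t-s))^{1/p}$ that a pure indicator expansion would give; no ad-hoc ``hybrid'' is needed for $1<p<2$ since the reduction is stated in terms of $\gamma_{p,z-y}$ and whichever realization achieves it. Second, the separation on non-edges is not ``at least one coordinate'' but a $1-O(zy/\sqrt q)$ fraction of coordinates, obtained by a union bound on pairwise codeword agreements over the $O(zy)$ relevant element pairs; one must also handle the degenerate coordinates where two elements of a set collide on the same symbol (this is the $\Lambda$/$\perp$ bookkeeping in the paper). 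Once those are in place, the ``amortization over $y$ coordinates'' you worry about is just these union bounds over $\binom{z}{2}$ and $\binom{y}{2}$ collision pairs, and the hereditary guarantee (Proposition~\ref{prop:gadget}) is indeed immediate by restriction.
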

In particular, assuming \JCH, \kmed is \NP-Hard to approximate within a factor $\zeta_2(p)$, which is at least $1.73$ in $\ell_1$ and at least $1.27$ in $\ell_2$. 

\begin{remark}\label{rem:APXdisc}
Notice that for every $q,z,$ and $p$, from the embedding given by Lemma~\ref{lem:cd1}, we can deduce that  $g_p(J(q,z,1))$ is a constant (depending on $q,z,$ and $p$) bounded away from 1. Thus we have  for $i\in\{1,2\}$ that $\zeta_i(p,z-1,\alpha)\ge 1+\delta$ for some positive constant $\delta$  depending only on $p,\alpha,$  and $z$.  On the other hand, Feige \cite{F98} showed that for any constant $\varepsilon>0$ there is a constant $z:=z(\varepsilon)$ such that $(1-\nicefrac{1}{e}+\varepsilon,z,1)$-Johnson Coverage problem is \NP-Hard. Plugging these parameters in Theorems~\ref{thm:kmeanellp}~and~\ref{thm:kmedellp}, we have that \kmean and \kmed are \APX-Hard in all $\ell_p$-metrics. In fact, we do not even need to invoke the tight result of \cite{F98}, as it's relatively easy to show that $(1-\delta,z,1)$-Johnson Coverage problem is \NP-Hard directly from the PCP theorem \cite{AS98,ALMSS98,D07} for some constants $\delta>0$ and $z\in\NN$.   \end{remark}

The proof of the above two theorems follows by generalizing the embedding framework established in \cite{CK19}. In \cite{CK19}, the authors considered a two-player communication game between Alice and Bob, where Alice holds as input an edge in a publicly known graph  and Bob holds as input a vertex in the same graph, and the goal is for Bob to send a message to Alice (using public randomness), so that Alice can decide if her edge is incident on Bob's vertex. The authors then showed how an efficient randomized protocol for this communication problem can be combined with a certain embedding of the complete graph (into $\ell_p$-metrics) to obtain inapproximaility results for \kmean and \kmed in $\ell_p$-metrics. 

Below we consider the game where Alice holds  a $z$-sized subset of $[n]$  and Bob holds a $y$-sized subset of $[n]$, and the goal is for Bob to send a message to Alice (using public randomness), so that Alice can decide if her subset completely contains Bob's subset. We then design an efficient randomized protocol for this communication problem using Algebraic Geometric codes (as in \cite{CK19}) and show how the transcript of the protocol can be combined with a certain  embedding of the Johnson graph (into $\ell_p$-metrics) to obtain the two theorems. We skip providing further details about the framework for the sake of brevity, and point the reader to 
\cite{CK19}. 

\begin{proof}[Proof of Theorems \ref{thm:kmeanellp} and \ref{thm:kmedellp}]
Fix $\varepsilon>0$ as in the theorem statement. Also, fix $p\in\met$. Let $\eps':=\eps/11$ (setting $\eps'$ to be $\eps/4$ suffices for Theorem~\ref{thm:kmedellp}). Starting from a hard instance of $\left( \alpha, z, y \right)$-\jc problem $(U,E,k)$,
  we create an instance of the \kmean, or of the \kmed problem using  Algebraic-Geometric codes and the embedding given in Proposition~\ref{prop:gadget}  as follows. Let $q$ be a (constant) prime square greater than $\left(\frac{18z^2}{\eps'}\right)^2$. Let $\AG$ be the code guaranteed by Theorem~\ref{thm:code}  over alphabet of size $q$ of message length $\eta:=\log_q n$ (recall $|U|=n$), block length $\ell:=O_{q}(\eta)$, and relative distance at least $1-\nicefrac{3}{\sqrt{q}}$. 
  Let $\tau$ be a $g_p(J(q,z,y))$-gap realization embedding of $J(q,z,y)$ which maps vertices of $J(q,z,y)$ to a $d^*$-dimensional space (note that $q$, $z$, and $y$ are all constants, and thus so is $d^*$). Let  $\beta>0$ be the constant from Definition~\ref{def:bipartitegadget}.

\paragraph{Construction.}  The \kmed or \kmean instance consists of the set of candidate centers $\C\subseteq \R^{\ell\cdot d^*}$  
  and the set of points to be clustered $\Po\subseteq \R^{\ell\cdot d^*}$ which will be defined below. 
 First, we define functions $A_E:E\times [\ell]\to \binom{[q]}{z}\cup \{\perp\}$ and $A_F:\binom{[n]}{y}\times [\ell]\to \binom{[q]}{y}\cup \{\perp\} $ below. Then, we will construct functions $\tilde A_E:E\to\mathbb{R}^{d^*\cdot \ell}$ and $\tilde A_F:\binom{[n]}{y}\to\mathbb{R}^{d^*\cdot \ell}$.  Given $\tilde A_E$ and $\tilde A_F$ the point-set $\Po$ is just defined to be $$\left\{\tilde A_E(T)\big| T\in E\right\},$$ and the set of candidate centers $\C$ is just defined to be $$\left\{\tilde A_F(S)\Big| S\in\binom{[n]}{y}\right\}.$$ 

For every $\gamma\in [\ell]$ and every $S\in\binom{[n]}{y}$, we define $R_{S,\gamma}^{y}\subseteq [q]$, where $\mu\in [q]$ is contained in $R_{S,\gamma}^{y}$ if and only if there exists some $u\in S$ such that\footnote{Here we think of $\gamma$ as a field element of $\mathbb F_q$ by using some canonical bijection between $[q]$ and $\mathbb F_q$.} $\AG(u)_\gamma=\mu$. Then, we define $A_F(S,\gamma)=R_{S,\gamma}^{y}$ if $|R_{S,\gamma}^{y}|=y$ and $A_F(S,\gamma)=\perp$ otherwise. Similarly, for every $\gamma\in [\ell]$ and every $T\in E\subseteq \binom{[n]}{z}$  we define $R_{T,\gamma}^z\subseteq [q]$, where $\mu\in [q]$ is contained in $R_{T,\gamma}$ if and only if  there exists some $u\in T$ such that $\AG(u)_\gamma=\mu$. Then, we define $A_E(T,\gamma)=R_{T,\gamma}^{z}$ if $|R_{T,\gamma}^{z}|=z$ and $A_E(T,\gamma)=\perp$ otherwise.

For every $x,y\in[q]$ such that $x<x'$ we define $\Lambda_{x,x'}:\binom{[q]}{x}\to \binom{[q]}{x'}$ as follows: For every $X\in \binom{[q]}{x}$ define $\Delta_X$ to be the set of $(x'-x)$ many smallest integers in $[q]$ not contained in $X$. Then $\Lambda_{x,x'}(X)=X\cup \Delta_X$.  Now we can construct functions $\tilde A_E:E\to\mathbb{R}^{d^*\cdot \ell}$ and $\tilde A_F:\binom{[n]}{x'}\to\mathbb{R}^{d^*\cdot \ell}$ as follows:
$$\forall \gamma\in[\ell], \ 
\tilde A_E(T)\lvert_{\gamma}=\begin{cases} \tau(A_E(T,\gamma)) \text{ if } A_E(T,\gamma)\neq \perp\\ \tau(\Lambda_{|R_{T,\gamma}^z|,z}(R_{T,\gamma}^{z}))\text{ otherwise}\end{cases}$$
$$ \text{ and }\ \ \ \tilde A_F(S)\lvert_{\gamma}=\begin{cases} \tau(A_F(S,\gamma)) \text{ if } A_F(S,\gamma)\neq \perp\\ \tau(\Lambda_{|R_{S,\gamma}^{x'}|,x'}(R_{S,\gamma}^{x'}))\text{ otherwise}\end{cases}.
$$

\paragraph{Structural Observations.}
Fix $S\in \binom{[n]}{y}$. We define the set $L_S^{y}$ as follows: $$L_S^{y}:=\{\gamma\in[\ell]:|R_{S,\gamma}^{y}|=y\}.$$ 
Consider the set $W_S=\{\AG(u)\mid u\in S\}$. Since any two codewords of $\AG$ agree on at most $3/\sqrt{q}$ fraction of coordinates, we have by union bound that there are at least $1-\binom{y}{2}(3/\sqrt{q})$ fraction of coordinates of $[\ell]$ on which all codewords in $W_S$ are distinct. Therefore, we have that $|L_S^{y}|\ge (1-\binom{y}{2}(3/\sqrt{q}))\cdot \ell$. Also, note that for all  $\gamma\in L_S^{y}$ we have $A_F(S,\gamma)\neq\perp$.

Similarly, we fix $T\in E$. We define the set $L_T^{z}$ as follows: $$L_T^{z}:=\{\gamma\in[\ell]:|R_{T,\gamma}^{z}|=z\}.$$ 
By following the averaging argument above, we also have that  $|L_T^{z}|\ge (1-\binom{z}{2}(3/\sqrt{q}))\cdot \ell$. Again, note that for all  $\gamma\in L_T^{z}$ we have $A_E(T,\gamma)\neq\perp$.

 Finally, for every $(S,T)\in \binom{[n]}{y}\times E$, we define $L_{S,T}:=L_{S}^{y}\cap L_T^z$, and note the following 
 $$|L_{S,T}|\ge \left(1-\frac{3zy}{\sqrt{q}}\right)\cdot \ell.$$ 
  
Let us now compute a few distances.   
Consider  $(S,T)\in \binom{[n]}{y}\times E$ such that $S\subset T$ then we have
\begin{align}\label{eqgood}
 \|\tilde{A}_E(T)-\tilde{A}_F(S)\|_p=\beta\cdot \ell^{1/p}
.\end{align}
This is because, if $S\subset T$ then for all $\gamma\in[\ell]$ we have $R_{S,\gamma}^{y}\subseteq R_{T,\gamma}^z$. Fix $\gamma\in [\ell]$. If $\gamma\in L_{S,T}$ then we have that $ \|\tilde{A}_E(T)|_\gamma-\tilde{A}_F(S)|_\gamma\|_p^p=\beta^p$ as $(R_{S,\gamma}^{y},R_{T,\gamma}^z)$ is an edge in $J(q,z)$. On the other hand if $\gamma\notin L_{T}^z$ then either $\gamma\notin L_{S}^{y}$ and we have that $\Lambda_{|R_{S,\gamma}^{y}|,y}(R_{S,\gamma}^{y})\subseteq \Lambda_{|R_{T,\gamma}^z|,z}(R_{T,\gamma}^z)$ because $R_{S,\gamma}^{y}\subseteq R_{T,\gamma}^z$, or we have $\gamma\in L_{S}^{y}$, in which case, we have $R_{T,\gamma}^z=R_{S,\gamma}^{y}$ and thus $R_{S,\gamma}^{y}\subseteq \Lambda_{|R_{T,\gamma}^z|,z}(R_{T,\gamma}^z)$. In the former case, we have that $(\Lambda_{|R_{S,\gamma}^{y}|,y}(R_{S,\gamma}^{y}),\Lambda_{|R_{T,\gamma}^z|,z}(R_{T,\gamma}^z))$ is an edge in $J(q,z)$ and in the latter case we have that $(R_{S,\gamma}^{y},\Lambda_{|R_{T,\gamma}^z|,z}(R_{T,\gamma}^z))$ is an edge in $J(q,z)$. The final result in both cases is that $ \|\tilde{A}_E(T)|_\gamma-\tilde{A}_F(S)|_\gamma\|_p^p=\beta^p$. Therefore, we have  $$\sum_{\gamma\in[\ell]}\|\tilde{A}_E(T)|_\gamma-\tilde{A}_F(S)|_\gamma\|_p^p=\|\tilde{A}_E(T)-\tilde{A}_F(S)\|_p^p=\beta^p\cdot \ell.$$

Consider  $(S,T)\in \binom{[n]}{y}\times E$ such that $S\not\subset T$. Let $u\in U$ such that $u\in S$ but $u\notin T$. We define $\Good:=\{\gamma\in[\ell]: \AG(u)_\gamma\notin R_{T,\gamma}^z\}$. Since the relative distance of $\AG$ is $1-\nicefrac{3}{\sqrt{q}}$, we have by simple union bound that $$|\Good|\ge \left(1-\frac{3z}{\sqrt{q}}\right)\cdot\ell.$$  Then, we have
\begin{align}\label{eqbad}
\|\tilde{A}_E(T)-\tilde{A}_F(S)\|_p\ge (g_p(J(q,y,z))-\delta)\cdot \beta\cdot \ell^{1/p}
,\end{align}
where $\delta:=\frac{18zy}{\sqrt{q}}$ (note that $\delta\le \eps'$). To see this first observe that:

$$\|\tilde{A}_E(T)-\tilde{A}_F(S)\|_p^p\ge \sum_{\gamma\in L_{S,T}\cap \Good}\|\tilde{A}_E(T)|_\gamma-\tilde{A}_F(S)|_\gamma\|_p^p.$$ 
Fix $\gamma\in L_{S,T}\cap \Good$. Notice that $R_{S,\gamma}^{y}\not\subset R_{T,\gamma}^z$ and thus $(R_{S,\gamma}^{y},R_{T,\gamma}^z)$ is not an edge in $J(q,z)$. Therefore, we have
$$\|\tilde{A}_E(T)|_\gamma-\tilde{A}_F(S)|_\gamma\|_p^p\ge g_p(J(q,z,y))^p\cdot \beta^p.$$ 
Since $|L_{S,T}\cap \Good|\ge \left(1-\nicefrac{3z}{\sqrt{q}}-\nicefrac{3zy}{\sqrt{q}}\right)\cdot \ell\ge \left(1-\nicefrac{6zy}{\sqrt{q}}\right)\cdot \ell$, \eqref{eqbad} follows immediately by noting that (i) $g_p(J(q,z,y))\le 3$, and (ii) for any $\eta\in[0,1]$, $(1-\eta)^{1/p}\ge (1-\eta)$.
  
  We now analyze the \kmean and \kmed cost of the
  instance.
  Consider the completeness case first.

  \paragraph*{Completeness.}
  Suppose there exist $S_1,\ldots ,S_k\in \binom{[n]}{y}$ such that $\underset{i\in[k]}{\bigcup}\con(S_i,E)=E$. Then, we define $\C'=\{\tilde A_F(S_i)\mid i\in[k]\}$ and we define $\sigma:\Po\to\C'$ as follows: for every $a\in\Po$, where $a:=\tilde A_E(T)$ for some $T\in E$, let $\sigma(a)$ be equal to $\tilde A_F(S_i)$ such that $S_i\subset T$ (if there is more than one $i\in[k]$ for which $S_i$ is contained in $T$ then we choose one arbitrarily). Therefore for any $a
  \in\Po$  we have from \eqref{eqgood}
\begin{align*}
\|a-\sigma(a)\|_p=\beta\cdot \ell^{1/p}.
\end{align*}

  The \kmean cost of the overall instance is thus $ \beta^2\cdot\ell^{2/p}\cdot |\Po|$, while the \kmed cost is
  $\beta\cdot \ell^{1/p}\cdot|\Po|$. 
   Finally, we turn to the soundness analysis. 
  
    \paragraph*{Soundness.} Consider any set of centers $\C' = \{c_1,\ldots,c_k\}\subset \C$
  that is optimal for the \kmed or \kmean objective (and that $\sigma:\Po\to\C'$ simply maps to the closest center in $\C'$).
  Let $\S:=\{S_1,\ldots,S_k\}\subseteq \binom{[n]}{y}$ be the collection of $y$ sized subsets of $U$ corresponding to the centers of $\C'$, namely 
  $$
  \S=\left\{S_i\in \binom{[n]}{y} \Big| \tilde A_F(S_i)\in \C'\right\}.
  $$
  By the assumptions of the soundness case, we have $\con(\S)\le \alpha \cdot |E|$. For
  each $T \in\con(\S)$, from \eqref{eqgood}, we have that the contribution of $\tilde A_E(T)$ to the \kmean
  cost is exactly $\beta^2\cdot\ell^{2/p}$, and to the \kmed cost is exactly $\beta\cdot\ell^{1/p}$.
  However, from \eqref{eqbad}, for any other $T\in E\setminus \con(\S)$, the contribution of $\tilde A_E(T)$ to the \kmed and \kmean cost is (at least)
  respectively $(g_p(J(q,y,z))-\delta)\cdot \beta\cdot \ell^{1/p}$ and $(g_p(J(q,y,z))-\delta)^2\cdot \beta^2\cdot \ell^{2/p}$. 
  Therefore, the optimal solution w.r.t. \kmed objective has cost at least:
  \begin{align*}
  &\alpha \cdot |E|\cdot \beta\cdot \ell^{1/p}+\left(1 - \alpha \right)\cdot |E|\cdot (\gamma_{p, z-y}-\delta)\cdot \beta\cdot \ell^{1/p}\\
  &\ge |\Po|\cdot \beta\cdot \ell^{1/p}\cdot\left(\zeta_1(p, z-y, \alpha)-\delta\right)\\
  &\ge |\Po|\cdot \beta\cdot \ell^{1/p}\cdot\left(\zeta_1(p, z-y, \alpha)-\eps\right).   
  \end{align*}
  Similarly, the optimal solution w.r.t. \kmean objective has cost at least:
  \begin{align*}
  &\alpha \cdot |E|\cdot \beta^2\cdot \ell^{2/p}+\left(1 - \alpha \right)\cdot |E|\cdot (\gamma_{p,z-y}-\delta)^2\cdot \beta^2\cdot \ell^{2/p}\\
  &\ge |\Po|\cdot \beta^2\cdot \ell^{2/p}\cdot\left(\zeta_2(p, z-y, \alpha)-2\delta\right) \\ 
  &\ge |\Po|\cdot \beta^2\cdot \ell^{2/p}\cdot\left(\zeta_2(p, z-y, \alpha)-\eps\right).  
  \end{align*}
  \end{proof}

We remark that the above proof can be made significantly notation-light, if we wanted to prove Theorem~\ref{thm:introellp} in high dimensions (i.e., $\poly(n)$ dimensions). In particular, we could skip the use of Algebraic Geometric codes. However, the result is more interesting when the dimension is $O(\log n)$, as proving the same \NP-Hardness for $(\log n)^{1-o(1)}$ dimensions would violate the Exponential Time Hypothesis \cite{IP01,IPZ01}, due to the sub-exponential time approximation algorithm of \cite{Cohen-Addad18} for such dimensions.

\subsection{Conditional Inapproximability of Continuous \kmean and \kmed}\label{sec:condCont}


In this subsection, we show how \JCH or any hardness of $(\alpha,z,y)$-\jc
implies the hardness of approximation of various clustering objectives in the continuous case and i.e., we prove Theorem~\ref{thm:introell2without}.
In order to prove the results in this section, we need to assume a slight strengthening of \JCH. 
\begin{hypothesis}[Dense Johnson Coverage Hypothesis (\JCHD)]
\JCH holds for instances $(U,E,k)$ of \jc problem where $|E|=\omega(k)$.
\end{hypothesis}

More generally, let $(\alpha, z, y)$-\jcd be the special case 
$(\alpha, z, y)$-\jc where the instances satisfy $|E| = \omega(k \cdot |U|^{z - y - 1})$. 
Then \JCHD states that for any $\eps > 0$, there exists $z = z(\eps)$ such that 
$(1-1/e + \eps, z, z-1)$-\jcd is \NP-Hard. 
This additional property has always been obtained in literature by looking at the hard instances that were constructed. 
In \cite{CK19}, where the authors proved the previous best inapproximability results for continuous case \kmean and \kmed, 
it was observed that hard instances of $(0.94,2,1)$-\jc constructed in~\cite{AKS11, AS19} can be made to satisfy the above property.

First, we consider Euclidean \kmean.  

\begin{theorem}[\kmean without candidate centers in $O(\log n)$ dimensional Euclidean space]\label{thm:kmeanl2}
Assume $(\alpha, z, y)$-\jcd is \NP-Hard. 
For every constant $\varepsilon>0$,  given a point-set $\Po\subset \{0,1\}^{d}$ of size $n$ (and $d=O(\log n)$) and a parameter $k$ as input, it is \NP-Hard to distinguish between the following two cases:
\begin{itemize}
\item \textbf{\emph{Completeness}}:  
There exists $\C':=\{c_1,\ldots ,c_k\}\subseteq \mathbb R^d$ and $\sigma:\Po\to\C'$ such that $$\sum_{a\in\Po}\|a-\sigma(a)\|_2^2\le \rho n\log n,$$
\item \textbf{\emph{Soundness}}: For every $\C':=\{c_1,\ldots ,c_k\}\subseteq \C$ and every $\sigma:\Po\to\C'$ we have: $$\sum_{a\in\Po}\|a-\sigma(a)\|_2^2\ge 
\left(1 + \frac{(1 - \alpha)}{(z - y)} - \varepsilon\right)\cdot \rho n\log n,$$
\end{itemize}
for some constant $\rho>0$.
\end{theorem}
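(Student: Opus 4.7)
The plan is to adapt the reduction from Theorem~\ref{thm:kmeanellp} (in the case $p=2$) to the continuous setting, using the density condition built into \JCHD to control the loss from rounding. The construction of the \kmean instance is identical to the discrete case: starting from a $(\alpha, z, y)$-\jcd instance $(U,E,k)$, I would apply the AG-code encoding together with the characteristic-vector realization of $J(q,z,y)$ from Lemma~\ref{lem:cd1} to produce the point set $\Po = \{\tilde A_E(T) : T \in E\} \subseteq \mathbb R^{\ell d^*}$ with $\ell = O(\log n)$. For completeness, given a \jc solution $\{S_1,\ldots,S_k\}$, I reuse the centers $c_i := \tilde A_F(S_i)$ as continuous centers; each $\tilde A_E(T)$ with $S_i \subseteq T$ contributes squared distance $(z-y)\ell$, so the total cost is $(z-y)\ell|E|$, matching $\rho n \log n$ for an appropriate constant $\rho$.

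For soundness, given arbitrary centers $c_1,\ldots,c_k$ inducing clusters $Q_1,\ldots,Q_k$, I would first replace each $c_i$ by the cluster centroid $\bar\tau_i$ (which only decreases the \kmean cost) and then round to a \jc solution by letting $S_i \in \binom{[n]}{y}$ be the $y$-subset maximizing $|P_i| := |\{T \in Q_i : S_i \subseteq T\}|$. Invoking the soundness guarantee of $(\alpha,z,y)$-\jc on $\{S_1,\ldots,S_k\}$ then yields $\sum_i |P_i| \leq \alpha |E|$, hence $\sum_i |N_i| \geq (1-\alpha)|E|$ where $N_i := Q_i \setminus P_i$. The target is to show
\begin{equation*}
\sum_{i=1}^k \sum_{T \in Q_i} \|\tilde A_E(T) - \bar\tau_i\|_2^2 \;\geq\; (z-y)\ell \cdot |E| \;+\; \ell \cdot \sum_i|N_i| \;-\; o(\ell|E|),
\end{equation*}
which combined with $\sum_i|N_i|\geq (1-\alpha)|E|$ delivers the claimed soundness bound $(1+(1-\alpha)/(z-y)-\eps)\,\rho n \log n$.

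The core technical step uses the centroid identity
\begin{equation*}
\sum_{T\in Q_i}\|\tilde A_E(T)-\bar\tau_i\|_2^2 \;=\; \sum_{T\in Q_i}\|\tilde A_E(T)-\tilde A_F(S_i)\|_2^2 \;-\; |Q_i|\cdot\|\tilde A_F(S_i)-\bar\tau_i\|_2^2,
\end{equation*}
together with the discrete per-point analysis from Theorem~\ref{thm:kmeanellp}, which lower-bounds the first term on the right by $(z-y)\ell|Q_i| + 2\ell|N_i|$ (up to AG-code slack of $O(\eps)$). The main obstacle is to upper-bound the centroid-savings term $|Q_i|\cdot\|\tilde A_F(S_i)-\bar\tau_i\|_2^2$ by roughly $\ell|N_i|$ plus lower-order error, so that a net surplus of $\ell|N_i|$ survives after the subtraction. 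Naive coordinate-wise bounds that use $p(1-p)\leq \min(p,1-p)$ cancel essentially the entire $2\ell|N_i|$ gain from the discrete side; exploiting that $S_i$ is the most popular $y$-subset within $Q_i$, combined with the per-block decoupling afforded by the AG encoding, is what lets one recover the correct constant. The density hypothesis $|E|=\omega(k|U|^{z-y-1})$ of \JCHD is crucial here: it ensures that per-cluster additive error terms (of size roughly $|U|^{z-y-1}$ per cluster, arising from AG-code codeword collisions and from the rounding) sum over the $k$ clusters to $o(|E|)$ and are absorbed into the $\eps$ slack of the soundness factor.
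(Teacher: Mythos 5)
Your construction and completeness step match the paper, but your soundness route is genuinely different from the paper's, and the key inequality it rests on is not established and, as stated, does not hold.

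The paper never reasons about the location of an optimal center. Instead it applies the classical identity (its Fact~\ref{fact:kmeanscost}) that the $k$-means cost of a cluster $C_i$ equals $\frac{1}{2|C_i|}\sum_{p,q\in C_i}\|p-q\|_2^2$. Since the points are $0/1$ characteristic vectors, pairwise squared distances are determined by intersection sizes: pairs $(T,T')\in C_i^2$ are classified by $|T\cap T'|$ being $>y$, $=y$, or $<y$ (the quantities $N_i, M_i, F_i$). The $N_i$ pairs are bounded directly by density ($\sum_i N_i/|C_i| = O(kn^{z-y-1}) = o(|E|)$), and a combinatorial claim (Claim~\ref{claim:sumdegsquared}) bounds $M_i/|C_i|$ in terms of the maximum degree $\Delta_i$ of a $y$-subset in $C_i$, which is exactly the quantity the soundness of \jc controls via $\sum_i\Delta_i\le\alpha|E|$. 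No centroid, no rounding, no "savings" term.

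Your route — replace each center by the cluster centroid, pick the most-covered $y$-set $S_i$, invoke the centroid identity, and then upper-bound the savings $|Q_i|\cdot\|\tilde A_F(S_i)-\bar\tau_i\|_2^2$ by $\ell|N_i|$ plus lower order — has a real gap at exactly the step you flag as the obstacle. The bound "$\le \ell|N_i|+o(\ell|Q_i|)$" is false. Take $y=1$, $z=3$, a cluster whose $z$-sets all contain $\{1,2\}$, and $S_i=\{1\}$: then $|N_i|=0$, but the centroid has value $1$ on the coordinate for element $2$ in every block while $\tau(S_i)$ has value $0$ there, so the savings is $\approx\ell|Q_i|$, not $o(\ell|Q_i|)$. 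The only reason this does not break the theorem is that such a cluster has $|Q_i|=O(n^{z-y-1})$, so the damage is globally absorbed by density; but this means the error term you need is an additive per-cluster $O(\ell\cdot n^{z-y-1})$, not a multiplicative $o(\ell|Q_i|)$, and the claimed mechanism ("$S_i$ is most popular" plus "per-block AG decoupling") is not what makes it work — it is the same density/overlap counting that the paper's $N_i$ bound performs, which you would still have to carry out. In short, your proposal is a plausible alternate decomposition, but the missing lemma is precisely where the work is, and the paper's pairwise-distance formula avoids the centroid-savings term altogether, which is why that route is cleaner and complete.
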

\begin{proof}
The construction of the point-set $\Po$ in the theorem statement is the same as in the proof of Theorem~\ref{thm:kmeanellp} (with a further simplification as we fix the embedding given in Lemma~\ref{lem:cd2}). However, the soundness analysis to prove the theorem is more intricate. 

Fix $\varepsilon>0$ as in the theorem statement. Also, fix $p\in\met$. Let $\eps':=\eps/11$. Starting from a hard instance of $\left( \alpha, z, y \right)$-\jcd problem $(U,E,k)$ with $|U| = n$ and $|E| = \omega(n^{z-y})$, 
we create an instance of the \kmean using  Algebraic-Geometric codes and the embedding given in Lemma~\ref{lem:cd2}  as follows. Let $q$ be a (constant) prime square greater than $\left(\frac{18z^2}{\eps'}\right)^2$. Let $\AG$ be the code guaranteed by Theorem~\ref{thm:code}  over alphabet of size $q$ of message length $\eta:=\log_q n$ (recall $|U|=n$), block length $\ell:=O_{q}(\eta)$, and relative distance at least $1-\nicefrac{3}{\sqrt{q}}$.

\paragraph{Construction.}  The  \kmean instance consists of the set of points to be clustered $\Po\subseteq \{0,1\}^{\ell\cdot q}$ which will be defined below. 
 Recall the function $A_E:E\times [\ell]\to \binom{[q]}{z}\cup \{\perp\}$ defined in the proof of Theorem~\ref{thm:kmeanellp} (also recall the definitions  $R_{T,\gamma}^{z}$, $\Lambda_{x,y}$, $L_T^{z},\tilde A_F$). We construct $\tilde A_E:E\to\{0,1\}^{q\cdot \ell}$ below.  Given $\tilde A_E$ the point-set $\Po$ is just defined to be $$\left\{\tilde A_E(T)\big| T\in E\right\}.$$

Let $\tau$ map a subset of $[q]$ to its characteristic vector in $\{0,1\}^q$. Now we can construct $\tilde A_E$ as follows:
$$\forall \gamma\in[\ell], \ 
\tilde A_E(T)\lvert_{\gamma}=\begin{cases} \tau(A_E(T,\gamma)) \text{ if } A_E(T,\gamma)\neq \perp\\ \tau(\Lambda_{|R_{T,\gamma}^z|,z}(R_{T,\gamma}^{z}))\text{ otherwise}\end{cases}.$$

\paragraph{Structural Observations.}
We  compute distances in a couple of cases.   
Consider  distinct $T,T'\in E$ such that $|T\cap T'|=y$. Then we have
\begin{align}\label{eqsub}
(2(z-y)-\delta)\cdot{\ell}  \le \|\tilde{A}_E(T)-\tilde{A}_E(T')\|_2^2\le {2(z-y)\ell}
,\end{align}
where $\delta:= 18z^2/\sqrt{q}$ (note that $\delta\le \eps'$).

Now consider  distinct $T,T'\in E$ such that $|T\cap T'|<y$. Then we have 
\begin{align}\label{eqnosub}
(2(z -y) + 2 -\delta)\cdot{\ell}\le \|\tilde{A}_E(T)-\tilde{A}_E(T')\|_2^2.
\end{align}
  
We now analyze the \kmean  cost of the
instance.
To do so, we will make use of the following classic fact about the $k$-means cost of a partition $C_1\dot\cup\cdots\dot\cup C_k=\Po$ of a set of
points in $\R^d$.
\begin{fact}
  \label{fact:kmeanscost}
  Given a clustering $C_1\dot\cup\cdots\dot\cup C_k=\Po$, the \kmean cost
  is exactly
  $$\sum_{i=1}^k \frac{1}{2|C_i|} \sum_{p \in C_i} \sum_{q \in C_i} \|p-q\|^2_2.
  $$
\end{fact}

Consider the completeness case first. 
\paragraph*{Completeness.}
  Suppose there exist $S_1,\ldots ,S_k\in \binom{[n]}{y}$ such that $\underset{i\in[k]}{\bigcup}\con(S_i,E)=E$. Then, we define a clustering $C_1\dot\cup\cdots\dot\cup C_k=\Po$  as follows: for every $a\in\Po$, where $a:=\tilde A_E(T)$ for some $T\in E$, let $a\in C_i$  such that $T\in \con(S_i,E)$ (if there is more than one $i\in[k]$ for which $S_i$ is contained in $T$ then we choose one arbitrarily).
We now provide an upper bound on the $k$-means cost of clustering $\mathcal{C}:=\{C_1,\ldots ,C_k\}$.
~\eqref{eqsub} implies that for each $C_i$, for any pair $T, T'$ such that
$\tA_E(T), \tA_E(T') \in C_i$, we have that $||\tA_E(T) - \tA_E(T') ||^2_2 \le 2\ell$.
Hence, if we let $W_i$ be the \kmean cost for the $C_i$, 
$$W_i = \frac{1}{2|C_i|} \sum_{q \in C_i} \sum_{p \in C_i} ||p-q||^2_2 \le
\frac{1}{2|C_i|} \sum_{q \in C_i} \sum_{p \in C_i, p \neq q} 2(z - y)\ell\le 
(z- y)|C_i| \ell.
$$
Thus, the cost of clustering $\C$ is at most
$(z - y) \ell |\Po|$.
Finally, we turn to the soundness analysis.

\paragraph*{Soundness.}
Consider the optimal $k$-means clustering $\mathcal{C}:=\{C_1,\ldots ,C_k\}$  of the instance (i.e., $C_1\dot\cup\cdots\dot\cup C_k=\Po$). We aim at showing that the $k$-means cost of $\C$ is at
least $((z-y) + 2(1 - \alpha) - o(1))\ell |\Po|$. Given a cluster $C_i$,
let $E_i := \{ T \in E : \tilde A_E(T) \in C_i \}$ be the collection of $z$-sets of $E$ corresponding to $C_i$. 
For each
$S \in {[n] \choose y}$, we define the \emph{degree of $S$ in $C_i$}
to be $$d_{i, S} := 
\left|\{T \mid S \subset T \text{ and } \tA_E(T) \in C_i\}\right|.$$
Let $t_0 = (2z- 2y - \delta) \ell$, $t_1 = (2z - 2y) \ell$, and $t_2 = (2z - 2y + 2 - \delta) \ell$. 
For each cluster $C_i$, let
\begin{align*}
F_i &= \bigg| \{ (p, q) \in C_i^2 : \| p - q \|_2^2 \geq t_2 \} \bigg| \\
M_i &= \bigg| \{ (p, q) \in C_i^2 : \| p - q \|_2^2 \in [t_0, t_1]  \} \bigg| \\
N_i &= \bigg| \{ (p, q) \in C_i^2 : \| p - q \|_2^2 < t_0  \} \bigg|.
\end{align*}
By~\eqref{eqsub} and~\eqref{eqnosub}, $F_i$, $M_i$, and $N_i$ are the number of (ordered) pairs within $C_i$ whose corresponding $z$-sets in the \jc instance intersect in $<y$, $=y$, and $>y$ elements respectively. 
Let $\Delta_i=\underset{S\in\binom{[n]}{y} }{\max}\ d_{i,S}.$
We write the total cost of the clustering as follows.
\begin{align}
& \sum_{i=1}^k W_i  \nonumber
\\\geq \quad & \sum_{i=1}^k \frac{1}{2|C_i|} \bigg( F_i t_2 + M_i t_0 \bigg)  \nonumber
\\=\quad &\sum_{i=1}^k \frac{1}{2|C_i|} \bigg( (|C_i|^2 - M_i) t_2 + (M_i) t_0 - N_i t_2 \bigg)
\label{eq:kmeancost}
\end{align}

We first upper bound $\sum_i (N_i t_2) / (2|C_i|)$. 
For each $z$-set $T$, there are at most $(z-y) \cdot \binom{z}{z - y - 1} \cdot n^{z-y-1}$ sets that intersect with $T$ in at least $y + 1$ elements. 
Therefore, $N_i \leq |C_i| \cdot \max(|C_i|, O(n^{z-y-1}) )$ and 
\[ \sum_{i=1}^k \frac{N_i}{2|C_i|}
 \leq \sum_{i=1}^k \max(|C_i|, O(n^{z-y-1})) 
 \leq  O(k \cdot n^{z-y-1})).
\]
By the definition of \jcd, $|E| = \omega(k \cdot n^{z-y-1})$, 
so $\sum_i N_i t_2 / (2|C_i|)$ is at most $o(\ell |\Po|)$. 

For $M_i$, we prove the following claim that bounds $M_i / |C_i|$ in terms of $\Delta_i$ and $|C_i|$. 
\begin{claim}
For every $i \in [k]$, either $|C_i| = o(|\Po| / k)$ or $M_i / |C_i| \leq (1 + o(1)) \Delta_i + o(|C_i|)$. 
\label{claim:sumdegsquared}
\end{claim}
This proves the theorem because we can lower bound~\eqref{eq:kmeancost} as \allowdisplaybreaks
\begin{align*}
&\sum_{i=1}^k \frac{1}{2|C_i|} \bigg( (|C_i|^2 - M_i) t_2 + (M_i) t_0 - N_i t_2 \bigg)
\\\geq\quad &\sum_{i=1}^k \frac{1}{2|C_i|} \bigg( (|C_i|^2 - M_i) t_2 + (M_i) t_0 \bigg) - o(\ell |\Po|) 
\\ = \quad &\frac{1}{2} \sum_{i=1}^k  \bigg( t_2 |C_i| - (t_2 - t_0 ) (M_i / |C_i|)  \bigg) - o(\ell |\Po|) 
\\ = \quad &\frac{1}{2} \bigg( t_2 |\Po| - (t_2 - t_0 ) \sum_{i=1}^k   (M_i / |C_i|)  \bigg) - o(\ell |\Po|) 
\\ \geq \quad &\frac{1}{2} \bigg( t_2 |\Po| - (t_2 - t_0 ) \sum_{i=1}^k \Delta_i  \bigg) - o(\ell |\Po|), 
\end{align*}
where the last inequality used the fact that either either $M_i / |C_i| \leq (1+o(1))\Delta + o(|C_i|)$ or 
$|C_i| = o(|E| / k)$ where we can use trivial bound $M_i / |C_i| \leq |C_i| = o(|\Po|/k)$ so that all the terms multiplied by $o(1)$ can be absorbed by the last term $o(\ell |\Po|)$.
By using the soundness condition $\sum_{i=1}^k \Delta_i \leq \alpha |\Po|$ and plugging in the values of $t_0$ and $t_2$, the total cost is at least by 
\[
|\Po| \frac{t_0}{2} + |\Po| \frac{(t_2 - t_0) }{2} \bigg( 1 - \alpha  \bigg) - o(\ell |\Po|)
=
\ell |\Po| \bigg( (z - y) + ( 1 - \alpha  ) - \delta/2  - o(1) \bigg).
\]
Therefore, it remains to proof Claim~\ref{claim:sumdegsquared}. 
\begin{proof}[Proof of Claim~\ref{claim:sumdegsquared}]
Note that $M_i \leq \sum_{S \in \binom{[n]}{y}} d_{i, S}^2$. 
  First, consider the set $\S = \{S \in {n\choose y} \mid d_{i,S} < \gamma |C_i|\}$ for some small $\gamma > 0$ that will be determined later, 
  and let $E^0_i = \{T \in E_i \mid \exists S \in \S, S \subseteq T\}$.
  We have that
  \begin{equation}
    \label{eq:lowdegrees}
    \frac{1}{|C_i|} \sum_{S \in \S} d^2_{i,S} \le \frac{\gamma |C_i|}{|C_i|} \sum_{S \in \S} d_{i,S} \le
    \binom{z}{y} \gamma |E^0_i|,
  \end{equation}
  because each $z$-set $T$ can contribute to
  the degree $d_{i,S}$ of at most $\binom{z}{y}$ different sets.

  We then bound $\sum_{S \notin \S} d^2_{i,S}$.
  Let $\S' = {n \choose y} \setminus \S$
  and $E^1_i = \{T \in E_i \mid \exists S_1,S_2 \in \S', S_1 \neq S_2, S_1 \subseteq T, S_2 \subseteq T\}$.
  We claim that
  \begin{equation}
    \label{eq:sumofdegrees}
    \sum_{S \in \S'} d_{i,S} \le \binom{z}{y}|E_i^1|+|E_i|.
  \end{equation}
  Indeed, consider an element $T \in E_i$, we have that either there is at most one element $S$ of $\S'$ such that
  $S \subset T$, in which case $T$ will be counted at most once in $\sum_{S \in \S'} d_{i,S}$, or
  $T$ contains more than one element of $\S'$. 
  The elements of $E^1_i$ are counted at most $\binom{z}{y}$ times in the sum and from there follows
  ~\eqref{eq:sumofdegrees}.

  We claim that if $|E^1_i| > \gamma |E_i|$ or $|E_i|$ is small. 
  From~\eqref{eq:sumofdegrees} we deduce that
  there exists $S \in \S'$ such that $d_{i,S} \le (\binom{z}{y} |E^1_i| + |E_i|)/|\S'|$. Moreover, since $S \in \S'$ we also
  have that $d_{i,S} \ge  \gamma |E_i| \ge  \gamma |E^1_i|$ and so, combining the two bounds yields
  $ \gamma |E^1_i| \le (\binom{z}{y}|E^1_i| + |E_i|)/|\S'|$.
  Rearranging gives
  \[
  |\S'| \le {\binom{z}{y}}/ \gamma  + \frac{|E_i|}{ \gamma |E^1_i|}.
  \]
  If $|E^1_i| >  \gamma |E_i|$ we get $|\S'| \le \binom{z}{y}/\gamma + 1/ \gamma ^2 $. Hence,
  $|E^1_i| $ is at most $(\binom{y}{z}/\gamma + 1 / \gamma ^2)^2 \cdot n^{z - y -1}$ since for each pair $S,S' \in \S'$, $S \neq S'$ there is at most $n^{z - y - 1}$ 
  $T \in E_i$ such that $S \subset T, S' \subset T$.
  Thus, $|E_i| \le (\binom{y}{z}/\gamma + 1/\gamma ^2)^2 / \gamma \cdot n^{z-y-1}$, proving the claim. 

	For $i$ with $|E_i^1| \leq \gamma |E_i|$, 
  using~\eqref{eq:sumofdegrees} together with the above bound on $|E^1_i|$, we obtain
  \begin{equation}
    \label{eq:largedegrees}
    \sum_{S \notin \S} d^2_{i,S} \le \Delta_i \sum_{S \notin \S} d_{i,S} \le \Delta_i (|E_i| + \binom{z}{y}|E^1_i|) \le
    \Delta_i (1+ \gamma \binom{z}{y} )|E_i|.
  \end{equation}  
Combining~\eqref{eq:lowdegrees} and~\eqref{eq:largedegrees} we deduce that
  $\frac{1}{|E_i|}\sum_{S} d^2_{i,S} \le  (1+\gamma \binom{z}{y})\Delta_i + \gamma \binom{z}{y} |E_i|$.

Finally, since $|E| = \omega(kn^{z-y-1})$, we can choose $\gamma = o(1)$ such that 
\[
\gamma \binom{z}{y} = o(1) \mbox{ and } \bigg( \bigg(\binom{y}{z} / \gamma + 1 / \gamma^2 \bigg)^2/ \gamma \bigg) \cdot k n^{z-y-1} = o(|E|).
\]
Therefore, if $|E_i^1| \leq \gamma |E_i|$, then $M_i / |E_i| \leq (1 + o(1))\Delta_i + o(|E_i|)$,
and if $|E_i^1| > \gamma |E_i|$, $|E_i| = o(|E| / k)$. 
This finishes the proof of the claim and the theorem.
\end{proof}
\end{proof}

\begin{remark}\label{rem:APXcont}
Similar to Remark~\ref{rem:APXdisc}, we note here that given $(1-\delta,z,1)$-Johnson Coverage problem is \NP-Hard   for some constants $\delta>0$ and $z\in\NN$, plugging those parameters in Theorem~\ref{thm:kmeanl2}, we immediately have that continuous \kmean is \APX-Hard in   $\ell_2$-metric. \end{remark}

Before, we address continuous case of \kmed in $\ell_1$-metric, we prove hardness of approximating the continuous case of both \kmean and 
\kmed in the Hamming metric. The inapproximability of \kmed in $\ell_1$-metric then follows by a simple observation. 

\begin{theorem}[\kmed without candidate centers in $O(\log n)$ dimensional $\ell_0$-metric space]\label{thm:kmedHam}
Assume $(\alpha, z, y)$-\jcd is \NP-Hard. 
For every constant $\varepsilon>0$,  given a point-set $\Po\subset \{0,1\}^{d}$ of size $n$ (and $d=O(\log n)$) and a parameter $k$ as input, it is \NP-Hard to distinguish between the following two cases:
\begin{itemize}
\item \textbf{\emph{Completeness}}:  
There exists $\C':=\{c_1,\ldots ,c_k\}\subseteq \mathbb \{0,1\}^d$ and $\sigma:\Po\to\C'$ such that $$\sum_{a\in\Po}\|a-\sigma(a)\|_0\le \rho n\log n,$$
\item \textbf{\emph{Soundness}}: For every $\C':=\{c_1,\ldots ,c_k\}\subseteq \{0,1\}^d$ and every $\sigma:\Po\to\C'$ we have: $$\sum_{a\in\Po}\|a-\sigma(a)\|_0\ge \left(1 + \frac{(1 - \alpha)}{(z-y)} - \eps \right)\cdot \rho n\log n,$$
\end{itemize}
for some constant $\rho>0$.
\end{theorem}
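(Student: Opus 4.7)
The plan is to reuse the point-set construction from the proof of Theorem~\ref{thm:kmeanl2}, exploiting the fact that for Boolean vectors $p, q \in \{0,1\}^d$ one has $\|p - q\|_0 = \|p - q\|_1 = \|p - q\|_2^2$. Starting from a hard instance of $(\alpha, z, y)$-\jcd with universe $[n]$ and collection $E$ satisfying $|E| = \omega(k n^{z - y - 1})$, I would use the same AG-code-based embedding $\tilde{A}_E : E \to \{0,1\}^{q\ell}$ built there, whose $\gamma$-th block is the characteristic vector of $R_{T,\gamma}^z$ (padded if necessary). The point-set of the clustering instance is $\Po := \{\tilde{A}_E(T) : T \in E\}$, lying in $\{0,1\}^{q\ell}$ with $\ell = O(\log n)$.

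For the completeness case, given $y$-sets $S_1, \ldots, S_k$ that cover $E$, I would take Boolean centers $c_i := \tilde{A}_F(S_i)$. By the per-coordinate calculation in Theorem~\ref{thm:kmeanellp} applied with $p = 0$ and $\beta = z - y$ (since the symmetric difference of the characteristic vectors of two nested sets of sizes $z$ and $y$ is $z - y$), every point $\tilde{A}_E(T)$ is within Hamming distance $(z - y)\ell$ of the center corresponding to any $y$-set it contains. Hence the total \kmed cost is at most $(z - y)\,\ell\,|\Po|$.

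For the soundness case, let $C_1 \dot\cup \cdots \dot\cup C_k = \Po$ be any clustering with arbitrary Boolean centers $c_1, \ldots, c_k \in \{0,1\}^{q\ell}$. The triangle inequality for $\ell_1$ yields, cluster by cluster,
\[
\cost(C_i) \;=\; \sum_{p \in C_i} \|p - c_i\|_1 \;\ge\; \frac{1}{2|C_i|} \sum_{p, q \in C_i} \|p - q\|_1 \;=\; \frac{1}{2|C_i|} \sum_{p, q \in C_i} \|p - q\|_2^2,
\]
where the last equality uses $\Po \subseteq \{0,1\}^{q\ell}$. Summing over $i$, the right-hand side is exactly the \kmean cost expression of Fact~\ref{fact:kmeanscost}, which was lower-bounded in the soundness analysis of Theorem~\ref{thm:kmeanl2} by $\ell |\Po|\bigl((z - y) + (1 - \alpha) - o(1)\bigr)$. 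Dividing by the completeness cost $(z - y)\,\ell\,|\Po|$ gives the advertised ratio $1 + (1 - \alpha)/(z - y) - \eps$.

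The main point to get right is the bookkeeping: the factor $1/2$ lost to the triangle inequality and the factor $2$ gained from the identity $\|p - q\|_1 = \|p - q\|_2^2$ on Boolean points exactly cancel, so the sum-of-squared-distances lower bound for continuous \kmean transfers without any loss to the continuous \kmed setting in Hamming metric. Once this is noticed, the remaining ingredients — controlling the contribution $N_i$ from high-intersection pairs using the density condition $|E| = \omega(k n^{z - y - 1})$, and bounding $M_i/|C_i|$ by $(1 + o(1))\Delta_i + o(|C_i|)$ through the degree argument of Claim~\ref{claim:sumdegsquared} — can be invoked verbatim from the \kmean proof, and I do not foresee any genuinely new obstacle beyond this metric identity.
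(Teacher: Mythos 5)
Your proof is correct, and the soundness step takes a cleaner and more self-contained route than what the paper sketches. The paper's two-line proof of Theorems~\ref{thm:kmedHam} and \ref{thm:kmeanHam} simply borrows the construction from Theorem~\ref{thm:kmeanl2}, the completeness calculation from Theorem~\ref{thm:kmeanellp}, and then defers the soundness analysis entirely to a ``case analysis'' in~\cite{CK19} (presumably a direct coordinate-wise-median argument on the Boolean centers). You instead observe that the triangle-inequality bound
\[
\sum_{p \in C_i} \|p - c_i\|_1 \ \ge\ \frac{1}{2|C_i|} \sum_{p, q \in C_i} \|p - q\|_1,
\]
combined with the identity $\|p-q\|_1 = \|p-q\|_2^2$ on Boolean points, converts the $\ell_0$ \kmed cost into exactly the sum-of-pairwise-squared-distances expression of Fact~\ref{fact:kmeanscost}, at which point the soundness bound $\sum_i W_i \ge ((z-y) + (1-\alpha) - o(1))\,\ell\,|\Po|$ proved inside Theorem~\ref{thm:kmeanl2} can be invoked verbatim. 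Since completeness with the Boolean centers $\tilde A_F(S_i)$ gives cost exactly $(z-y)\ell|\Po|$, the ratio is $1 + (1-\alpha)/(z-y) - o(1)$ as claimed. Two remarks: (1) your triangle-inequality step holds for arbitrary $c_i \in \R^d$, not just Boolean centers, so your argument actually also handles Theorem~\ref{thm:kmedl1} directly without needing the median-is-Boolean observation; (2) the tradeoff is that this route is specific to \kmed and would not recover the companion statement Theorem~\ref{thm:kmeanHam}, whose per-point cost $\|p-c\|_0^2 = (\|p-c\|_2^2)^2$ does not reduce to pairwise distances via a single triangle inequality, so the paper's phrasing of a shared ``case analysis'' covering both theorems necessarily refers to a more direct argument than yours.
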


\begin{theorem}[\kmean without candidate centers in $O(\log n)$ dimensional $\ell_0$-metric space]\label{thm:kmeanHam}
Assume $(\alpha, z, y)$-\jcd is \NP-Hard. 
For every constant $\varepsilon>0$,  given a point-set $\Po\subset \{0,1\}^{d}$ of size $n$ (and $d=O(\log n)$) and a parameter $k$ as input, it is \NP-Hard to distinguish between the following two cases:
\begin{itemize}
\item \textbf{\emph{Completeness}}:  
There exists $\C':=\{c_1,\ldots ,c_k\}\subseteq \mathbb \{0,1\}^d$ and $\sigma:\Po\to\C'$ such that $$\sum_{a\in\Po}\|a-\sigma(a)\|_0^2\le \rho n(\log n)^2,$$
\item \textbf{\emph{Soundness}}: For every $\C':=\{c_1,\ldots ,c_k\}\subseteq \{0,1\}^d$ and every $\sigma:\Po\to\C'$ we have: $$\sum_{a\in\Po}\|a-\sigma(a)\|_0^2\ge \left(
1 + \frac{(1 - \alpha) (2(z - y) + 1)}{(z-y)^2}  -\varepsilon\right)\cdot \rho n(\log n)^2,$$
\end{itemize}
for some constant $\rho>0$.
\end{theorem}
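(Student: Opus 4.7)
The plan is to reuse the construction of Theorem~\ref{thm:kmeanl2} verbatim: starting from a hard instance $(U, E, k)$ of $(\alpha, z, y)$-\jcd, build the point set $\Po = \{\tilde A_E(T) : T \in E\} \subseteq \{0,1\}^{q\ell}$ where each $\gamma$-block is the characteristic vector of $R^z_{T,\gamma}$. The key observation is that for binary vectors $\|a-b\|_0 = \|a-b\|_2^2$, so the distance bounds \eqref{eqsub} and \eqref{eqnosub} transfer directly to the Hamming metric: for $T, T' \in E$ with $|T \cap T'| = y$, $\|\tilde A_E(T) - \tilde A_E(T')\|_0 \approx 2(z-y)\ell$, and for $|T \cap T'| < y$, $\|\tilde A_E(T) - \tilde A_E(T')\|_0 \geq (2(z-y)+2-\delta)\ell$.

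For completeness, I take as centers $c_i \in \{0,1\}^{q\ell}$ the concatenation of the characteristic vectors of the projected $y$-sets $R^y_{S_i, \gamma}$, where $\{S_i\}_{i \in [k]} \subseteq \binom{[n]}{y}$ is a covering family from the \jcd completeness condition. Whenever $S_i \subseteq T$ we have $R^y_{S_i,\gamma} \subseteq R^z_{T,\gamma}$ on every good coordinate, so $\|\tilde A_E(T) - c_i\|_0 = (z-y)\ell(1 + O(\delta))$ and each point contributes $(z-y)^2 \ell^2 (1 + O(\delta))$ to the squared-Hamming cost. Summing gives completeness cost at most $(z-y)^2 \ell^2 |\Po|(1+O(\delta))$.

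For soundness, since Fact~\ref{fact:kmeanscost} is Euclidean-specific, I use the general-metric triangle-inequality lower bound $\cost(C_i) \geq \frac{1}{4|C_i|} \sum_{p,q \in C_i} \|p-q\|_0^2$ (which follows from $d(p,q) \leq d(p,c)+d(q,c)$ and $(a+b)^2 \leq 2(a^2+b^2)$). I then partition $C_i \times C_i$ into $F_i, M_i, N_i$ exactly as in Theorem~\ref{thm:kmeanl2}, according to whether the corresponding $z$-sets intersect in fewer than, exactly, or more than $y$ elements. Squaring the pairwise Hamming distances yields
\[
\sum_{p,q \in C_i} \|p-q\|_0^2 \geq 4\ell^2 \bigl( M_i (z-y)^2 + F_i (z-y+1)^2 \bigr) - o(\ell^2 |C_i|^2),
\]
and, using $(z-y+1)^2 = (z-y)^2 + (2(z-y)+1)$ together with $M_i + F_i + N_i = |C_i|^2$, summing over $i$ gives
\[
\sum_i \cost(C_i) \geq \ell^2 (z-y)^2 \Bigl( |\Po| - \sum_i N_i/|C_i| \Bigr) + \ell^2 (2(z-y)+1) \sum_i F_i/|C_i| - o(\ell^2 |\Po|).
\]
Claim~\ref{claim:sumdegsquared} (which is purely combinatorial, depending only on the Johnson Coverage degrees and not on the metric) together with the \jcd density property give $\sum_i N_i/|C_i| = o(|\Po|)$ and $\sum_i M_i/|C_i| \leq \alpha|\Po| + o(|\Po|)$, whence $\sum_i F_i/|C_i| \geq (1-\alpha)|\Po| - o(|\Po|)$. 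Substituting produces a soundness cost of $\ell^2 |\Po| \bigl( (z-y)^2 + (1-\alpha)(2(z-y)+1) - o(1) \bigr)$ and hence the claimed approximation ratio $1 + (1-\alpha)(2(z-y)+1)/(z-y)^2$.

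The main (and only) new wrinkle relative to Theorem~\ref{thm:kmeanl2} is the factor-of-two loss in the triangle-inequality lower bound versus the Euclidean centroid identity. I expect this to be invisible in the final ratio because squaring the Hamming distances inflates the "extra gap" per $F_i$-pair from $2$ (in Euclidean-squared distances $t_2 - t_0$) to $4((z-y+1)^2 - (z-y)^2) = 4(2(z-y)+1)$ (in Hamming-squared distances), and dividing by $4|C_i|$ rather than $2|C_i|$ yields exactly the clean factor $(2(z-y)+1)/(z-y)^2$ relative to the completeness baseline $(z-y)^2 \ell^2$.
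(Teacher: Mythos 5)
Your proof is correct, and since the paper gives no details for this theorem (it simply asserts that the soundness "follows by a case analysis, which is elaborated in [CK19]"), you have genuinely filled in an argument rather than reproduced one. The approach you chose — replacing the Euclidean centroid identity (Fact~\ref{fact:kmeanscost}, which gives $\cost(C_i)=\frac{1}{2|C_i|}\sum_{p,q}\|p-q\|_2^2$) with the general-metric triangle-inequality bound $\cost(C_i)\geq\frac{1}{4|C_i|}\sum_{p,q}\|p-q\|_0^2$ — is not literally what the paper's cited "case analysis" does, but it is a clean and valid route. The crucial sanity check, which you carried out explicitly, is that the extra factor of $2$ lost in passing from the centroid identity to the triangle inequality is exactly absorbed by the extra factor of $4$ gained when one squares pairwise Hamming distances ($\|p-q\|_0\geq 2(z-y)\ell$ gives $\|p-q\|_0^2\geq 4(z-y)^2\ell^2$, versus $\|p-q\|_2^2\geq 2(z-y)\ell$ in the Euclidean reduction). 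This is not an accident: in the Boolean cube, a center at Hamming distance $r$ from every cluster point can have all pairwise distances equal to $2r$ (a "star"), which saturates the triangle-inequality lower bound with factor $\frac{1}{4}$; so that bound is tight for exactly the configurations arising in the completeness case. You also correctly observe that Claim~\ref{claim:sumdegsquared} and the bound $\sum_i N_i/|C_i| = o(|\Po|)$ are purely combinatorial statements about the degree structure of the \jcd instance, so they port over verbatim. Your algebra $M_i(z-y)^2+F_i(z-y+1)^2 = |C_i|^2(z-y)^2 - N_i(z-y)^2 + F_i(2(z-y)+1)$, combined with $\sum_i F_i/|C_i|\geq(1-\alpha)|\Po|-o(|\Po|)$, yields the stated ratio $1+\frac{(1-\alpha)(2(z-y)+1)}{(z-y)^2}$, and this matches the theorem (e.g., at $z-y=1$, $\alpha=1-1/e$ it gives $1+3/e\approx2.10$, agreeing with the table). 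One minor point of hygiene: in the completeness case, your centers $c_i$ should use the $\Lambda$-padded characteristic vectors on the bad coordinates (where $|R^y_{S_i,\gamma}|<y$), exactly as $\tilde{A}_F$ is defined in Theorem~\ref{thm:kmeanellp}, so that the per-block Hamming distance is exactly $z-y$ on every $\gamma\in[\ell]$; this is what makes the completeness cost exactly $(z-y)^2\ell^2|\Po|$ rather than merely $(1+O(\delta))(z-y)^2\ell^2|\Po|$, though either suffices.
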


\begin{proof}[Proof of Theorems \ref{thm:kmedHam} and \ref{thm:kmeanHam}]
The construction of the point-set $\Po$ in the theorem statement is the same as in the proof of Theorem~\ref{thm:kmeanl2}. 
 The completeness case is the same as in Theorem~\ref{thm:kmeanellp}. Therefore, we have that  the \kmean cost of the overall instance is at most $ \ell^{2}\cdot |\Po|$, while the \kmed cost is at most
  $\ell\cdot|\Po|$. 
The soundness analysis to prove the theorem follows by a case analysis, which is elaborated in \cite{CK19} and we skip it here for the sake of brevity.  
\end{proof}

We consider below \kmed in $\ell_1$-metric. 

\begin{theorem}[\kmed without candidate centers in $O(\log n)$ dimensional $\ell_1$-metric space]\label{thm:kmedl1}
Assume $(\alpha, z, y)$-\jcd is \NP-Hard. 
For every constant $\varepsilon>0$,  given a point-set $\Po\subset \{0,1\}^{d}$ of size $n$ (and $d=O(\log n)$) and a parameter $k$ as input, it is \NP-Hard to distinguish between the following two cases:
\begin{itemize}
\item \textbf{\emph{Completeness}}:  
There exists $\C':=\{c_1,\ldots ,c_k\}\subseteq \mathbb \R^d$ and $\sigma:\Po\to\C'$ such that $$\sum_{a\in\Po}\|a-\sigma(a)\|_1\le \rho n\log n,$$
\item \textbf{\emph{Soundness}}: For every $\C':=\{c_1,\ldots ,c_k\}\subseteq \R^d$ and every $\sigma:\Po\to\C'$ we have: $$\sum_{a\in\Po}\|a-\sigma(a)\|_1\ge \left(1 + \frac{(1 - \alpha) }{(z-y)} - \eps \right)\cdot \rho n\log n,$$
\end{itemize}
for some constant $\rho>0$.
\end{theorem}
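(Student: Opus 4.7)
The plan is to derive Theorem~\ref{thm:kmedl1} essentially as a corollary of Theorem~\ref{thm:kmedHam}, by exploiting the fact that on the Boolean hypercube the $\ell_1$-metric and the Hamming metric coincide, and that the optimal continuous $\ell_1$ $1$-center of a cluster of Boolean points is itself Boolean.

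Concretely, I would use exactly the same reduction as in Theorem~\ref{thm:kmedHam}: starting from a hard instance of $(\alpha, z, y)$-\jcd, construct the point-set $\Po \subseteq \{0,1\}^d$ with $d = O(\log n)$ using the algebraic-geometric code $\AG$ and the characteristic-vector embedding $\tilde A_E$ from Section~\ref{sec:condDisc}. For any two points $a, b \in \{0,1\}^d$ we have $\|a - b\|_1 = \|a - b\|_0$, so any upper bound on the Hamming \kmed cost immediately transfers to an upper bound on the $\ell_1$ cost: the set $\C' \subseteq \{0,1\}^d$ that witnesses completeness in Theorem~\ref{thm:kmedHam} sits inside $\R^d$ and attains the same cost $\rho n \log n$.

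For soundness, the key observation is the following standard rounding. Suppose $\C' = \{c_1, \ldots, c_k\} \subseteq \R^d$ and $\sigma : \Po \to \C'$ achieve continuous $\ell_1$ cost $X$, and let $C_i = \sigma^{-1}(c_i)$. For each $i$, the function $c \mapsto \sum_{p \in C_i} \|p - c\|_1$ is separable across coordinates and is minimized coordinate-wise by the median of the corresponding coordinates of $C_i$. Since each $p \in C_i$ lies in $\{0,1\}^d$, a coordinate-wise median can always be chosen in $\{0,1\}$ (breaking ties arbitrarily). Replacing each $c_i$ by such a median $\tilde c_i \in \{0,1\}^d$ only decreases the per-cluster cost, so we obtain centers $\tilde \C' \subseteq \{0,1\}^d$ and an assignment $\tilde \sigma : \Po \to \tilde \C'$ with $\sum_{a \in \Po} \|a - \tilde \sigma(a)\|_0 = \sum_{a \in \Po} \|a - \tilde\sigma(a)\|_1 \le X$. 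Applying the soundness guarantee of Theorem~\ref{thm:kmedHam} to $(\tilde \C', \tilde \sigma)$ yields $X \ge (1 + (1-\alpha)/(z-y) - \eps)\cdot \rho n \log n$, which is precisely the bound claimed in Theorem~\ref{thm:kmedl1}.

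Since Theorem~\ref{thm:kmedHam} is assumed as the starting point, there is no real obstacle here; the only thing to double-check is that the median-rounding step is valid when clusters may be empty (in which case we can assign any point in $\{0,1\}^d$ to that center), and that the tie-breaking in the median does not inflate the cost --- both of which follow from the separable coordinate-wise structure of the $\ell_1$ objective. Thus the proof amounts to invoking Theorem~\ref{thm:kmedHam} together with this one-line rounding argument.
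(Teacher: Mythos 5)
Your proposal matches the paper's own argument: the paper likewise reduces to Theorem~\ref{thm:kmedHam} by observing that for Boolean points the coordinate-wise median is a Boolean optimal $\ell_1$ center, so continuous $\ell_1$-\kmed soundness follows from Hamming soundness. The reasoning and the use of the same construction are essentially identical.
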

\begin{proof}[Proof Sketch]
  The proof follows from a simple observation that for instances where all
  the coordinates of the points to be clustered are in $\{0,1\}$, we have
that for any subset (i.e. cluster) of the points of the instance,
  an optimal center of the set is such that its 
  $i$th coordinate is the median
  of a set of values in $\{0,1\}$, i.e., we may assume without loss of generality that the $i$th coordinate of the center is also in $\{0,1\}$. Therefore, simply  mimicking
  the proof of Theorem~\ref{thm:kmedHam} yields the desired theorem statement.
See \cite{CK19} for a formal argument.   
  \end{proof}


Finally, we finish the section by proving the remaining hardness results for continuous \kmed and \kmean:
\kmed in $\ell_2$-metric and 
\kmean in $\ell_1$-metric.
Especially, \kmed in $\ell_2$-metric for $k = 1$ is known as the {\em geometric median} and is  actively studied for bounded $d$ or $k$~\cite{arora1998approximation, badoiu2002approximate, kumar2010linear, feldman2011unified}, but to the best of our knowledge, no (hardness of) approximation algorithms have been studied for general $d$ and $k$. 
First, we prove the following lemma that if points are pairwise far apart, there is no good center that is close to all of them. 

\begin{lemma}
For any $\eps \in (0, 1)$, 
if $p_1, \dots, p_n \in \R^d$ with $n > 4/\eps + 1$ satisfy $\| p_i - p_j \|_2 \geq \sqrt{2}$, then, for any $c \in \R^d$, 
$\max_{i \in [n]} \| c - p_i \|_2 \geq 1 - \eps$. 
\label{lem:kmed-l2}
\end{lemma}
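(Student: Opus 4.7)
The plan is to argue by contradiction: assume that some $c \in \mathbb{R}^d$ satisfies $\|c - p_i\|_2 < 1 - \eps$ for every $i \in [n]$, and derive an upper bound on $n$ that contradicts the hypothesis $n > 4/\eps + 1$.

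Set $v_i := p_i - c$, so that $\|v_i\|_2 < 1 - \eps$ by assumption and $\|v_i - v_j\|_2^2 = \|p_i - p_j\|_2^2 \geq 2$ by hypothesis. Expanding $\|v_i - v_j\|_2^2 = \|v_i\|_2^2 + \|v_j\|_2^2 - 2\langle v_i, v_j\rangle$ yields
$$
2\langle v_i, v_j\rangle = \|v_i\|_2^2 + \|v_j\|_2^2 - \|v_i - v_j\|_2^2 < 2(1-\eps)^2 - 2,
$$
so every pairwise inner product satisfies $\langle v_i, v_j\rangle < -\eps(2-\eps)$. In particular, all pairs of vectors are strongly negatively correlated, which intuitively cannot hold for too many vectors of bounded norm.

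To make this rigorous, I would use the standard ``sum of vectors has nonnegative squared norm'' trick. Specifically,
$$
0 \leq \Bigl\| \sum_{i=1}^n v_i \Bigr\|_2^2 = \sum_{i=1}^n \|v_i\|_2^2 + 2 \sum_{i < j} \langle v_i, v_j\rangle < n(1-\eps)^2 - n(n-1)\,\eps(2-\eps).
$$
Rearranging gives $n - 1 < (1-\eps)^2 / (\eps(2-\eps))$, and thus it suffices to check that $(1-\eps)^2/(\eps(2-\eps)) \leq 4/\eps$ for $\eps \in (0,1)$; this is equivalent to $(1-\eps)^2 \leq 4(2-\eps)$, i.e., $2\eps + \eps^2 \leq 7$, which holds trivially. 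Hence $n - 1 < 4/\eps$, contradicting $n > 4/\eps + 1$.

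The argument is entirely elementary once the contradiction setup is in place; there is no real obstacle. The only care needed is to ensure strict versus non-strict inequalities match up — the contradiction hypothesis $\|c - p_i\|_2 < 1 - \eps$ is strict, so the chain of derived inequalities remain strict, whereas the lemma's conclusion is $\geq 1 - \eps$ (non-strict), which is exactly what the negation of a strict $<$ gives.
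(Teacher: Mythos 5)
Your proof is correct and takes essentially the same route as the paper's: translate so $c$ is the origin, observe that the assumption forces every pairwise inner product $\langle v_i,v_j\rangle$ to be strongly negative, and invoke $\bigl\|\sum_i v_i\bigr\|_2^2\ge 0$ (equivalently, positive semidefiniteness of the Gram matrix $\vec{1}^{\mathsf T}A\vec{1}\ge 0$, which is what the paper writes) to bound $n$. Your constant-tracking is in fact a bit more careful than the paper's, whose displayed inequalities contain a couple of sign/factor-of-two typos but encode the same argument.
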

\begin{proof}
Assume towards contradiction that there exists $c$ such that $\max_{i \in [n]} \| c - p_i \|_2 < 1 - \eps$. 
Without loss of generality let $c = \vec{0}$ and consider the matrix $A \in \R^{n \times n}$ such that 
$A_{i,j} = \langle p_i, p_j \rangle = (\| p_i \|_2^2 + \| p_j \|_2^2 - \| p_i - p_j \|_2^2) / 2 < (2(1 - \eps)^2 - 2) / 2 \leq \eps$. 
Then $A$ is positive semidefinite, so if $\vec{1}$ denotes the all-ones vector, 
\[
(\vec{1}^T)A\vec{1} \leq \sum_{i} A_{i,i} + \sum_{i < j} A_{i,j} \leq n - \binom{n}{2} \eps \geq 0,
\]
which leads to contradiction if $n > 4/\eps + 1$. 
\end{proof}

\begin{theorem}[\kmed without candidate centers in $O(\log n)$-dimensional $\ell_2$-metric space]\label{thm:kmedl2}
Assume $(\alpha, z, y)$-\jcd is \NP-Hard. 
For every constant $\varepsilon>0$,  given a point-set $\Po\subset \R^{d}$ of size $n$ (and $d=O(\log n)$) and a parameter $k$ as input, it is \NP-Hard to distinguish between the following two cases: 
\begin{itemize}
\item \textbf{\emph{Completeness}}:  
There exists $\C':=\{c_1,\ldots ,c_k\}\subseteq \mathbb \R^d$ and $\sigma:\Po\to\C'$ such that $$\sum_{a\in\Po}\|a-\sigma(a)\|_2\le n$$
\item \textbf{\emph{Soundness}}: For every $\C':=\{c_1,\ldots ,c_k\}\subseteq \R^d$ and every $\sigma:\Po\to\C'$ we have: $$\sum_{a\in\Po}\|a-\sigma(a)\|_2\ge \left(\alpha+ \sqrt{\frac{z-y+0.5}{z-y}}(1-\alpha) - \eps \right)n$$
\end{itemize}
\end{theorem}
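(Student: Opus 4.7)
The plan is to use the same point set as in the Hamming \kmed hardness (Theorem~\ref{thm:kmedHam}) and to adapt the soundness framework of the Euclidean \kmean case (Theorem~\ref{thm:kmeanl2}), with the geometric median cost lower-bounded via Lemma~\ref{lem:kmed-l2}. Concretely, starting from a hard instance $(U,E,k)$ of $(\alpha,z,y)$-\jcd, I would define the point set $\Po := \{\tilde A_E(T) : T \in E\} \subseteq \{0,1\}^{q\ell}$ via the AG-code encoding; these binary vectors have pairwise squared $\ell_2$-distance at most $(2(z-y)+o(1))\ell$ whenever their underlying $z$-sets share some common $y$-subset, and at least $(2(z-y)+2-o(1))\ell$ otherwise.

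Completeness is immediate: given the \JCH cover $\{S_1,\ldots,S_k\}$, I would open the $k$ continuous centers $c_i := \tilde A_F(S_i) \in \{0,1\}^{q\ell}$ and assign each $\tilde A_E(T)$ to a $c_i$ with $S_i \subset T$. Each contribution is exactly $\sqrt{(z-y)\ell}$, so choosing $\rho := \sqrt{(z-y)\ell}$ as the normalizing constant yields total cost at most $\rho n$ (which the theorem statement absorbs into $n$ by rescaling coordinates).

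For soundness, consider any continuous centers $c_1,\ldots,c_k \in \R^{q\ell}$ and the induced clusters $C_1,\ldots,C_k$. Following Theorem~\ref{thm:kmeanl2}, for each cluster $C_i$ of size $m_i$ let $S_i^*$ realize $\Delta_i := \max_S d_{i,S}$, the largest number of $z$-sets in $C_i$ containing a common $y$-subset. Since $\{S_1^*,\ldots,S_k^*\}$ is a feasible solution to the \jcd instance, the density condition together with the argument of Claim~\ref{claim:sumdegsquared} yields $\sum_i \Delta_i \le \alpha n + o(n)$. The theorem then reduces to establishing the per-cluster bound
\[
\sum_{p \in C_i} \|p - c_i\|_2 \;\ge\; \Delta_i \sqrt{(z-y)\ell} + (m_i - \Delta_i)\sqrt{(z-y+\tfrac{1}{2})\ell} - o(\ell),
\]
since summing this and using $\sum_i \Delta_i \le \alpha n$ (and noting the coefficient of $\sum_i \Delta_i$ in the aggregate is negative) produces exactly the target factor $\alpha + \sqrt{(z-y+\tfrac{1}{2})/(z-y)}(1-\alpha) - \eps$ after normalizing by $\rho$.

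The main obstacle is this per-cluster bound. Unlike in the \kmean case, where Fact~\ref{fact:kmeanscost} expresses the optimal cost via pairwise squared distances, there is no analogous closed form for the geometric median. The plan is to partition $C_i$ into the \emph{good} part $G_i$ of $\Delta_i$ points whose $z$-sets contain $S_i^*$ and the \emph{bad} part $B_i = C_i \setminus G_i$, whose points are pairwise at $\ell_2$-distance $\ge \sqrt{(2(z-y)+2)\ell}$. Rescaling by $\sqrt{(z-y+1)\ell}$ puts $B_i$ into the setting of Lemma~\ref{lem:kmed-l2}, giving $\|p - c_i\|_2 \ge (1-\eps)\sqrt{(z-y+1)\ell}$ for all but $O(1/\eps)$ points of $B_i$. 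For $G_i$, a triangle-inequality argument relative to $\tilde A_F(S_i^*)$ supplies the complementary bound: when $c_i$ is close to $\tilde A_F(S_i^*)$ each good point contributes at least $\sqrt{(z-y)\ell}(1-o(1))$, while when $c_i$ is far each good point contributes even more by the reverse triangle inequality. A two-case optimization over the location of $c_i$, weighted by $\Delta_i$ and $m_i - \Delta_i$, yields the per-cluster bound with the slightly weaker constant $\sqrt{(z-y+\tfrac{1}{2})\ell}$ (rather than the ideal $\sqrt{(z-y+1)\ell}$); the half-unit slack absorbs the worst-case tradeoff when good and bad points must share a single center. Summing over clusters and accounting for $o(n)$ corrections from short clusters (where Lemma~\ref{lem:kmed-l2} does not apply, handled as in the \jcd density bookkeeping) completes the proof.
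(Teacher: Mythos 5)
Your high-level scaffolding matches the paper's (indicator-vector embedding, Lemma~\ref{lem:kmed-l2} as the geometric workhorse, the density condition from \jcd, and the $\sum_i \Delta_i \le \alpha |E| + o(|E|)$ bookkeeping), but the central step is wrong: the bad part $B_i = C_i \setminus G_i$ is \emph{not} pairwise at $\ell_2$-distance $\ge \sqrt{(2(z-y)+2)\ell}$. Two $z$-sets that both avoid $S_i^*$ can perfectly well share some \emph{other} $y$-subset $S'' \neq S_i^*$, in which case their distance drops to $\sqrt{2(z-y)\ell}$ and the rescaling needed to invoke Lemma~\ref{lem:kmed-l2} no longer applies. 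So the claim that all but $O(1/\eps)$ points of $B_i$ lie at distance $\ge (1-\eps)\sqrt{(z-y+1)\ell}$ from $c_i$ does not follow, and the per-cluster lower bound collapses.

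The paper's proof is organized precisely to sidestep this. Rather than trying to decompose a fixed cluster by its maximum-degree $y$-set, it first \emph{thresholds}: it fixes the center $c$, discards every point whose distance to $c$ exceeds $\sqrt{z-y+0.5}-\eps$, and then proves a structural rigidity statement about the survivors: (i) at most one $y$-set can cover $\omega(n^{z-y-1})$ of them, (ii) a constant number of $y$-sets collectively covers all of them, and (iii) only $O(n^{z-y-1})$ survivors are at distance $<\sqrt{z-y}-\eps$. Point (i) is exactly what rules out the ``two-star'' configuration that breaks your pairwise-far claim, and it is proved by a genuinely two-part argument you do not reconstruct: if $S'$ and $T'$ both had large coverage, then on the symmetric-difference coordinates $Q=(S'\cup T')\setminus(S'\cap T')$ any center already pays squared distance $\ge 0.5$, and on the remaining coordinates one extracts $\omega(1)$ pairwise-disjoint extensions of $S'$ and applies Lemma~\ref{lem:kmed-l2} to force an additional $\approx (z-y)$, contradicting the radius threshold. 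That interplay between the $Q$-coordinates and the packing on the outer coordinates is where the $+0.5$ in $\sqrt{z-y+0.5}$ actually comes from; attributing it to a ``tradeoff when good and bad points share a center'' is not the mechanism and does not give you a provable inequality. To repair the proof you would essentially need to re-derive point (i), at which point you have reproduced the paper's argument.
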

\begin{proof} 
Given an instance $(U, E, k)$ of $(\alpha, z, y)$-\jcd, for any $S \subseteq U$, let $\tau(S)$ be the indicator vector of $S$. (The dimension can be reduced to $O(\log n)$ by the standard dimension reduction technique.) 

In the completeness case, every point is connected to a center at distance $\sqrt{z - y}$.
For the soundness case, fix one cluster $C$ and its center $c \in \R^U$.
Fix $\eps > 0$, and remove all points from $C$ whose $\ell_2$ distance to $c$ is greater than $(\sqrt{z - y + 0.5} - \eps)$.
Assume $C = \omega(n^{z-y-1})$. 
We will show that there exists $S' \in \binom{U}{y}$ that covers all but $O(n^{z-y-1})$ sets in $C$.

We claim that no two $y$-sets can both cover $\omega(n^{z-y-1})$ number of $z$-sets in $C$. 
Assume towards contradiction that there exist $S', T' \in \binom{U}{y}$ be such that each of them covers at least $t = \omega(n^{z-y-1})$ sets in $C$. 
Let $s = |S' \cap T'|$ and $I = S' \cap T'$. 
Consider the center $c$ and its squared distances to $\tau(S')$ and $\tau(T')$ in coordinates $Q = (S' \cup T') \setminus I$. 
Since $\tau(S')$ and $\tau(T')$ are different in these coordinates, $(c_i - \tau(S')_i)^2 + (c_i - \tau(T')_i)^2 \geq 0.5$ for each $i \in Q$.
Without loss of generality, assume that $(c_i - \tau(S')_i)^2 \geq (y-s) / 2 \geq 0.5$. 

Now let us consider the points restricted to the coordinates $O = U \setminus (S \cup T)$. 
First, we claim that we can choose $S_1, \dots, S_{\ell}$ with $\ell = \omega(1)$ such that $S_i \in C$, $S' \subseteq S_i$, $S_i \setminus S \subseteq O$, and $S_i$'s are pairwise disjoint. First the number of $S \in C$ that contains $S'$ and intersects $T'$ with in at least one element outside $S'$ is at most $|T'| \cdot n^{z-y-1}$, which is only an $o(1)$ fraction of the $z$-sets in $C$ covered by $S'$. Remove them from $C$. Now, consider a greedy iteration for $i = 1, \dots, \ell$ where we pick an arbitrary set $S_i \in C$ that contains $S'$, and remove all sets from $C$ that insect $S_i$ outside $S'$. Each time, the number of removed sets is only $(z - y)n^{z-y-1}$, and since $C = \omega(n^{z-y-1})$, this process can continue $\ell = o(1)$ iterations. 

Then $\tau(S_i)$ and $\tau(S_j)$ on coordinates in $O$  are at distance at least $\sqrt{2(z-y)}$ from each other. 
Since the contribution of $\| c - S_i \|_2^2$ from $Q$ is already at least $0.5$, 
Lemma~\ref{lem:kmed-l2} shows there is no center can cover every point in $C$ at distance at most $\sqrt{z - y + 0.5} - o(1)$, leading to the desired contradiction. 


Now we prove that there exist a bounded number of $y$-sets that collectively cover every $z$-sets. 
Consider the following process starting with $C' = C$ and $i = 1$. 
\begin{enumerate}
\item Let $S_i$ be an arbitrary set from $C'$. 
\item Delete all $S \in C'$ such that $|S \cap S_i| \geq y$. 
\item If $C'$ is nonempty, increase $i$ by $1$ and repeat from 1. 
\end{enumerate}
Let $t$ be the final value of $i$, and consider $S_1, \dots, S_t$. They are at distance at least $\sqrt{2(z - y + 1)}$ from each other,
so again by Lemma~\ref{lem:kmed-l2} and using that $\sqrt{2(z-y+1)} / \sqrt{2} = \sqrt{z-y + 1} >  \sqrt{z-y + 0.5}$, $t$ is most some absolute constant.
Then $y$-sets in $\{ S' \subseteq U : |S'| = y\mbox{ and } S' \subseteq S_i\mbox{ for some } i\in[t] \}$ cover all sets in $C$. 
A similar argument for $C^* = \{ S : \| \tau(S) - c \|_2 < \sqrt{z - y} - \eps \}$ and $y + 1$ implies that $|C^*| \leq O(n^{z-y-1})$. 

So far, we showed that (1) there are a constant number of $y$-sets that collectively cover every $S \in C$
(2) except one set, every $y$-set can cover at most $O(n^{z-y-1})$ sets in $C$,
and (3) $|C^*| \leq O(n^{z-y-1})$. 
Therefore, whenever $|C| = \omega(n^{z-y-1})$, there exists $S' \in \binom{U}{y}$ that covers all but a subconstant fraction of sets in $C$. 
Since $|E| = \omega(k n^{y-z-1})$, we can use an argument similar to Theorem~\ref{thm:kmeanl2} to show that in the soundness case, the fraction of points that are covered by a center at distance at most $\sqrt{z-y+0.5} - \eps$ is at most $\alpha + o(1)$, and at most an $o(1)$ fraction of points are covered at distance at most $\sqrt{z - y} - \eps$. 
This proves the theorem. 
\end{proof}

\begin{theorem}[\kmean without candidate centers in $\poly(n)$-dimensional $\ell_1$-metric space]\label{thm:kmeanl1}
Assume $(\alpha, z, y)$-\jcd is \NP-Hard. 
For every constant $\varepsilon>0$,  given a point-set $\Po\subset \{0,1\}^{d}$ of size $n$ (and $d=O(\log n)$) and a parameter $k$ as input, it is \NP-Hard to distinguish between the following two cases:
\begin{itemize}
\item \textbf{\emph{Completeness}}:  
There exists $\C':=\{c_1,\ldots ,c_k\}\subseteq \mathbb \R^d$ and $\sigma:\Po\to\C'$ such that $$\sum_{a\in\Po}\|a-\sigma(a)\|_1^2\le n$$
\item \textbf{\emph{Soundness}}: For every $\C':=\{c_1,\ldots ,c_k\}\subseteq \R^d$ and every $\sigma:\Po\to\C'$ we have: $$\sum_{a\in\Po}\|a-\sigma(a)\|_1^2\ge \left( \alpha +(1 - \alpha)\frac{(z-y+1)^2}{(z-y)^2}-\varepsilon\right)n.$$
\end{itemize}
\end{theorem}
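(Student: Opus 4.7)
The plan is to adapt the argument of Theorem~\ref{thm:kmeanHam} (continuous Hamming $k$-means) to the squared $\ell_1$ metric. The construction is identical: starting from an instance $(U,E,k)$ of $(\alpha,z,y)$-\jcd, map each $z$-set $T\in E$ to its $\{0,1\}$-valued indicator vector $\tilde{A}_E(T)$ via the AG-code blocking of Section~\ref{sec:condDisc}, and let $\Po$ be the resulting collection of 0-1 points. Completeness is immediate: using $\tilde{A}_F(S_1),\ldots,\tilde{A}_F(S_k)$ as centers (where $S_1,\ldots,S_k$ is a cover of $E$), each point sits at $\ell_1$ distance $(z-y)\ell$ from its assigned center, so the squared-$\ell_1$ cost is at most $(z-y)^2\ell^2|\Po|$, giving the desired upper bound after absorbing constants.

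The soundness direction is the crux and is more delicate than the $\ell_1$ $k$-median analysis of Theorem~\ref{thm:kmedl1}, because an $\ell_1^2$-optimal center for a cluster of 0-1 points need not itself be 0-1---already for the one-dimensional cluster $\{0,1\}$ the unique minimizer of $c^2+(1-c)^2$ is $c=1/2$. My plan is to first project each center $c_j$ coordinatewise onto $[0,1]^d$ (which only decreases pointwise $\ell_1$ distances), then exploit the 0-1 identity $\|p-c\|_1 = |p|_1+|c|_1-2\langle p,c\rangle$ valid for $p\in\{0,1\}^d$ and $c\in[0,1]^d$ to rewrite each cluster's cost as $\sum_{p\in C_j}\bigl(|p|_1+|c_j|_1-2\langle p,c_j\rangle\bigr)^2$. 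Applying the elementary bound $\sum_p\langle p,c_j\rangle^2 \geq |C_j|\langle\mu_j,c_j\rangle^2$ (with $\mu_j$ the cluster mean) and optimizing the resulting one-variable quadratic in $|c_j|_1$ should pin the effective center near the 0-1 indicator of a single $y$-set $S_j$, reducing the per-cluster lower bound to the one obtained in the Hamming $k$-means analysis of Theorem~\ref{thm:kmeanHam}.

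With this reduction, the $y$-sets $\{S_j\}_{j=1}^k$ extracted from the clusters form a candidate solution for the \jcd instance; by soundness at most $\alpha|E|$ of the $z$-sets are covered, so at least $(1-\alpha)|E|$ points must contribute at least $((z-y+1)\ell)^2$ rather than the covered value $((z-y)\ell)^2$, yielding the target ratio $\alpha+(1-\alpha)(z-y+1)^2/(z-y)^2$. The density condition $|E|=\omega(k\,n^{z-y-1})$ built into \jcd is invoked exactly as in Theorem~\ref{thm:kmeanl2} to control the contribution of ``small'' clusters whose atypical structure could otherwise distort the per-cluster calculation.

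The main obstacle I anticipate is the fractional-center optimization: bounding the savings from real-valued $c_j$ by an $o(1)$ multiplicative term in the aggregate cost. For a single cluster these savings can be constant (the $\{0,1\}\subset\mathbb{R}$ example shows this), so the argument must exploit the fact that large coherent clusters of structurally related $z$-sets have highly concentrated per-coordinate means $\mu_j$, forcing the optimum of the quadratic in $c_j$ to lie at or near a 0-1 vector. Concentration together with the $\omega(n^{z-y-1})$ lower bound on the size of non-degenerate clusters should let an averaging argument absorb the fractional slack into the $\eps$ error term, so that the effective soundness ratio inherited from Theorem~\ref{thm:kmeanHam} is preserved up to subconstant loss.
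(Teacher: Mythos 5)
The paper's proof takes a fundamentally different and simpler route than what you propose, and the step you flag as your ``main obstacle'' is indeed a genuine gap that your plan does not resolve. The paper never attempts to characterize the optimal real-valued center or argue that it lies near a $0$-$1$ vector. Instead, it fixes a cluster $C$ and an \emph{arbitrary} center $c\in\R^U$, discards the points of $C$ at $\ell_1$-distance at least $z-y+1-\eps$ from $c$, and observes that for the remaining points the triangle inequality forces $\|\tau(S)-\tau(T)\|_1 < 2(z-y+1)$, hence $|S\cap T|\ge y$ for every pair $S,T$ in what remains. This is a structural constraint on the cluster itself that holds no matter what $c$ is. Fixing any $S\in C$, every other surviving $T$ then contains one of the $\binom{z}{y}$ $y$-subsets of $S$, and (by the counting argument borrowed from Theorem~\ref{thm:kmedl2}) all but one of those $y$-sets can cover only $O(n^{z-y-1})$ members of $C$; similarly, only $O(n^{z-y-1})$ points can be at $\ell_1$-distance below $z-y$. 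The density condition from \jcd then finishes exactly as in Theorem~\ref{thm:kmeanl2}. At no point is the center projected, Fourier-decomposed, or compared to a $0$-$1$ candidate.

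Your approach, by contrast, hinges on showing that the fractional-center savings in $\sum_p\|p-c\|_1^2$ are $o(1)$ in aggregate. That is not something you derive; you explicitly defer it to ``concentration'' of the coordinatewise cluster mean. But the cost $\bigl(\sum_i|p_i-c_i|\bigr)^2$ is not coordinatewise separable (unlike $\ell_1$ median or $\ell_2^2$ means), so there is no local argument that pins each coordinate of the optimal center to $\{0,1\}$, and your own one-dimensional example already shows a constant-factor advantage for a single tiny cluster. Nothing in the \jcd density assumption prevents the adversary from exploiting correlated fractional coordinates across a large cluster; you would need an explicit bound there, and you do not supply one. The bound you target, $\alpha + (1-\alpha)(z-y+1)^2/(z-y)^2$, algebraically equals the Hamming bound $1 + (1-\alpha)(2(z-y)+1)/(z-y)^2$ from Theorem~\ref{thm:kmeanHam} --- but the paper establishes that equality by an argument that is indifferent to whether the center is $0$-$1$ or fractional, precisely because it never looks at the center, only at pairwise distances among covered points. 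Replacing your center-characterization plan with that triangle-inequality observation would close the gap; without it, your soundness argument is incomplete. (A minor separate point: the paper constructs $\Po$ from plain indicator vectors $\tau(S)$ in dimension $|U|$ and only afterwards invokes the embedding from Theorem~\ref{thm:kmeanellp} to compress to $O(\log n)$ dimensions, rather than running the AG-code construction up front as you describe; this does not affect correctness but makes the soundness analysis cleaner.)
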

\begin{proof}
Given an instance $(U, E, k)$ of $(\alpha, z, y)$-\jcd, for any $S \subseteq U$, let $\tau(S)$ be the indicator vector of $S$. We analyze the completeness and soundness of this simple reduction. 
Since every point is in the boolean hypercube, the embedding of~\ref{thm:kmeanellp} ensures that one can reduce the dimension to $O(\log n)$ with losing an arbitrarily small constant in the inapproximability factor. 

In the completeness case, every point is connected to a center at distance $z-y$. 
For the soundness case, fix one cluster $C$ and its center $c \in \R^U$.
Fix $\eps > 0$, and remove all points from $C$ whose $\ell_1$ distance to $c$ is at least $(z - y + 1 - \eps)$.

First, note that for every $S, T \in C$, $|S \cap T| \geq y$. Otherwise, $\| \tau(S) - \tau(T) \|_1 \geq 2(z-y+1)$, so there is no center that are at distance strictly less than $z-y+1$ from both $\tau(S)$ and $\tau(T)$. 

Fix any $S \in C$. Then every $T \in C$ is covered by a $y$-set in $\mathcal{F} = \{ S' \subseteq U : |S'| = y\mbox{ and } S' \subseteq S \}$. 
Furthermore, argument similar to the proof of Theorem~\ref{thm:kmedl2} show except possibly one, every $y$-set can cover at most $O(n^{z-y-1})$ sets in $C$,
and the number of points that are covered at distance strictly less than $z-y$ is also at most $O(n^{z-y-1})$. 

Therefore, whenever $|C| = \omega(n^{z-y-1})$, there exists $S' \in \binom{U}{y}$ that covers all but a subconstant fraction of sets in $C$. 
Since $|E| = \omega(k n^{y-z-1})$, we can use an argument similar to Theorem~\ref{thm:kmeanl2} to show that in the soundness case, the fraction of points that are covered by a center at distance at most $z-y+1 - \eps$ is at most $\alpha + o(1)$, and at most an $o(1)$ fraction of points are covered at distance at most $z - y - \eps$. This proves the theorem. 
\end{proof}

\subsection{Integrality Gaps}
\label{section:gaps}

In this subsection we present improved integrality gaps. 
Given an instance of discrete \kmed or \kmean with a set of candidate centers $\C$, a set of points $\Po$, 
and distances $\{ d_{p, c } \}_{p \in \Po, c \in \C}$, 
the basic LP relaxation for \kmed or \kmean is the following. (For \kmean, $d_{p,c}$ becomes a squared distance.)

\begin{align*}
\mbox{Minimize} & \quad \sum_{p \in \Po} \sum_{c \in \C} x_{p, c} \cdot d_{p, c} \\
\mbox{Subject to} & \quad \sum_{c \in C} x_{p, c} = 1 & \forall p \in \Po \\
& \quad x_{p, c} \leq y_{c}  & \forall p \in \Po, c \in \C \\
& \quad \sum_{c \in C} y_c \leq k  \\
& \quad x, y \geq 0. 
\end{align*}

The basic SDP relaxation, which replaces $x_{p, c}$ and $y_{c}$ by $\| v_{p,c}\|_2^2$ and $\| u_c \|_2^2$ for some vectors $\{ v_{p, c} \}$ and $\{ u_c \}$ with additional constraints, is the following. 

\begin{align}
\mbox{Minimize} & \quad \sum_{p \in \Po} \sum_{c \in \C} {\| v_{p, c} \|_2^2 \cdot d_{p, c}}  \nonumber \\ 
\mbox{Subject to} & \quad \langle v_0, v_0 \rangle = 1 \nonumber \\
& \quad \langle v_{p, c}, v_0 \rangle = \| v_{p, c} \|_2^2 & \forall p \in \Po, c \in \C \label{eq:vlength} \\
& \quad \langle u_{c}, v_0 \rangle = \| u_{c} \|_2^2 & \forall c \in \C \label{eq:ulength} \\
& \quad \langle v_{p, c}, u_{c} \rangle = \|v_{p,c}\|_2^2 & \forall p \in \Po, c \in \C \label{eq:uv} \\
& \quad \| \sum_{c \in C} v_{p, c} - v_0 \|_2^2 = 0 & \forall p \in \Po \label{eq:localconst} \\
& \quad \sum_{c \in C} \|  y_c \|_2^2 \leq k \nonumber
\end{align}

We give stronger integrality gap instances than the standard notion. 
The LP or SDP relaxation has a {\em robust} gap of $\alpha > 1$ 
if there exists a family of instances for infinitely many values of $k$
where for any $\eps > 0$, there exists $\delta > 0$ and $k_0 \in \NN$ such that 
for all instances in the family with $k \geq k_0$, 
the gap between the optimal fractional solution and the optimal integral solution is at least $\alpha - \eps$ 
even when the optimal integral solution is allowed to open $k + \delta k$ facilities. 

Our instances additionally satisfy {\em well-separated} the property that every pair of points in $\binom{\Po \cup \C}{2}$ are at distance at least $1$ far part,
where each instance is normalized so that $1$ is the average connection cost of each point in the SDP solution. 

The basic LP relaxation for metric \kmed has a gap of $2$ under the standard notion~\cite{jain2002new}, but this instance does not give a robust gap.
The best approximation algorithms for \kmed and \kmean in both general and Euclidean metrics bound the robust gap of the LP relaxation~\cite{BPRST15, ANSW16}, and to the best of our knowledge, for well-separated instances, no robust gap of the LP relaxation better than computational hardness results~\cite{GuK99, CK19} were previously known. 

\begin{theorem}
Fix any $\eps > 0$. 
For discrete \kmed in $\ell_1$ and discrete \kmean in $\ell_2$, there is a family of well-separated instances where
the SDP relaxation has a gap of at least $149/125 - \eps \approx 1.192 - \eps$, even when the integral solution opens $\Omega(k)$ more centers.
\label{theorem:gaps}
\end{theorem}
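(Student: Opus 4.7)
The plan is to combine the embedding of Section~\ref{sec:gadget} applied to the complete $4$-uniform Johnson Coverage instance with an asymptotic consequence of Reiher's clique density theorem~\cite{reiher2016clique}. Throughout I fix $z=4$ and $y=2$.

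Fix a large $n$ divisible by $5$, let $\Po = \binom{[n]}{4}$ be the points and $\C = \binom{[n]}{2}$ the candidate centers, and embed each $S$ as its indicator vector $\tau(S)\in\{0,1\}^n$. By Lemma~\ref{lem:cd1} and Corollary~\ref{cor:cd2}, $\|\tau(T)-\tau(e)\|_1 = \|\tau(T)-\tau(e)\|_2^2 = 2$ whenever $e\subseteq T$, and both are at least $4$ otherwise. Set the fractional budget $k = \binom{n}{2}/6$, and let the integral solution open up to $(1+\delta)k$ centers for an absolute constant $\delta\in(0,1/5)$ chosen later. Exploiting $S_n$-symmetry, take $y_e = 1/6$ for every edge and $x_{T,e} = 1/6$ whenever $e \subseteq T$; the LP objective is then $2|\Po|$. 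To lift this to a vector solution, fix a unit vector $v_0$ and set $u_e = (1/6)v_0 + \alpha_e$ and $v_{T,e} = (1/6)v_0 + \alpha_e$ with $\alpha_e \perp v_0$ and $\|\alpha_e\|^2 = 5/36$ in an auxiliary space $W$. Constraints~\eqref{eq:vlength}--\eqref{eq:uv} hold mechanically; note that Cauchy--Schwarz equality forces the orthogonal part of $v_{T,e}$ to coincide with that of $u_e$, which is why the same $\alpha_e$ appears in both. Constraint~\eqref{eq:localconst} then reduces to the requirement that $\sum_{e\subseteq T}\alpha_e = 0$ for every $T \in \Po$. A vector-valued $\alpha$-family of the prescribed per-edge norm satisfying this linear system is produced by decomposing $\R^{\binom{[n]}{2}}$ into its Johnson-scheme isotypic components and, when necessary, tensoring with a disjoint-union extension of the instance to make the space of admissible $\alpha$'s non-trivial, paralleling the SDP-gap gadget of~\cite{CK19}; positive semidefiniteness is verified eigenspace by eigenspace.

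For the integer lower bound, any solution opening $(1+\delta)k$ centers corresponds to a graph $G \subseteq \binom{[n]}{2}$ with $|E(G)| \leq (1+\delta)\binom{n}{2}/6$. A point $T$ contributes $2$ if some edge of $T$ lies in $G$ and at least $4$ otherwise, so the uncovered $T$'s are exactly the $K_4$-copies of $\overline G$, which has edge density at least $1-(1+\delta)/6 \geq 4/5 - O(\delta)$. Reiher's clique density theorem~\cite{reiher2016clique}, with the \Turan graph $T(n,5)$ as the asymptotic extremizer, forces the $K_4$-density of $\overline G$ to be at least $24/125 - o(1)$. Consequently the integer cost is at least $(2\cdot 101/125 + 4\cdot 24/125 - o(1))|\Po| = (298/125 - o(1))|\Po|$, producing the claimed multiplicative gap $149/125 - \eps$ for any $\eps > 0$. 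Rescaling all coordinates by $1/2$ preserves the gap, normalizes the per-point SDP cost to $1$, and leaves every pair of distinct points in $\Po \cup \C$ at distance at least $1$ (since any two distinct indicator vectors differ in at least two coordinates), yielding the required well-separated family.

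The main technical obstacle is the SDP feasibility: for a single complete instance the up-shadow operator from $\R^{\binom{[n]}{4}}$ into $\R^{\binom{[n]}{2}}$ can be surjective (a short eigenvalue computation in the Johnson scheme shows this for all $n \geq 6$), so realizing $\{\alpha_e\}$ of positive uniform norm forces the direct-sum extension outlined above. Since Reiher's theorem is asymptotic and depends only on the edge and $K_4$ densities, this combinatorial modification leaves the final gap of $149/125 - \eps$ intact.
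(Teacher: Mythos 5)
Your construction of the SDP solution does not work, and the direct-sum fix you propose does not repair it.

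You set $u_e=\tfrac16 v_0+\alpha_e$ and $v_{T,e}=\tfrac16 v_0+\alpha_e$ with $\alpha_e\perp v_0$, $\|\alpha_e\|^2=5/36$, and you correctly observe that constraint~\eqref{eq:localconst} then forces $\sum_{e\subseteq T}\alpha_e=0$ for every $T\in\binom{[n]}{4}$. But each coordinate of the vectors $\alpha_e$ must then be a scalar function $\alpha:\binom{[n]}{2}\to\R$ lying in the kernel of the up-map $\Phi:\R^{\binom{[n]}{2}}\to\R^{\binom{[n]}{4}}$, $\Phi(\alpha)_T=\sum_{e\subseteq T}\alpha_e$. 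This $\Phi$ is the composition of the Boolean-lattice raising operators $U_{2,3}$ and $U_{3,4}$, both of which are injective once the level is below $n/2$; hence $\Phi$ is injective for $n\ge 8$ and the only solution is $\alpha_e\equiv 0$, contradicting $\|\alpha_e\|^2=5/36>0$. You acknowledge this obstacle, but the disjoint-union extension cannot resolve it: the linear constraints $\sum_{e\subseteq T}\alpha_e=0$ decouple across copies of $K_n$, so the kernel of the global constraint operator is the direct sum of the per-copy kernels, all still trivial. There is no way to extract a nonzero uniform-norm $\alpha$-family this way, so the claimed feasible SDP solution with $\sum_e\|u_e\|^2=\binom{n}{2}/6$ does not exist.

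The paper circumvents exactly this by \emph{not} setting $u_e$ proportional to $v_{T,e}$. In the paper's construction the orthogonal part of $v_{T,e}$ is a linear combination of $w_f$ for \emph{all six} edges $f\subseteq T$ (coefficient $\tfrac{t}{(t+1)^{3/2}}$ on $w_e$, $-\tfrac{1}{(t+1)^{3/2}}$ on the other five), so $\sum_{e\subseteq T}v_{T,e}-v_0$ cancels identically precisely when $t=5$ -- no nontrivial kernel of $\Phi$ is needed. The price is $\|u_e\|_2^2=1/t=1/5$, not $1/6$, so the fractional budget is $k=\binom{n}{2}/5$. This is why Reiher's theorem with $t=5$ and the resulting uncovered $K_4$-density of $24/125$ is the correct one, yielding the gap $149/125$. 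In your version, had the $k=\binom{n}{2}/6$ SDP solution existed, the complement $\overline G$ would have density roughly $5/6$ and you would have invoked Reiher at $t=6$, giving uncovered density $5/18$ and a strictly larger gap $23/18\approx 1.278$; the fact that you land on $149/125$ by applying the $t=5$ bound to a $5/6$-dense complement is an internal inconsistency that signals the parameter $1/6$ was never actually attainable.

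Also a minor point: your statement ``$\|\tau(T)-\tau(e)\|_1=\|\tau(T)-\tau(e)\|_2^2=2$ when $e\subseteq T$, both at least $4$ otherwise'' has the right form, but you should be more careful when $|T\cap e|=1$ versus $|T\cap e|=0$; the bound $\ge 4$ is still correct in both cases, so this does not affect the argument, but the gap analysis only uses the dichotomy ``$=2$ versus $\ge 4$,'' which is fine.
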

\begin{proof}
Consider the complete graph $K_n = ([n], E)$ with $E = \binom{[n]}{2}$.  
Let $\tau : 2^{E} \to \R^{n}$ where for any $p \subseteq E$, $\tau(p)$ is defined to be the characteristic vector of $\cup_{e \in p} e$. 
The set points $\Po$ is defined to be $\Po := \{ \tau( p ) : p \mbox{ forms a 4-clique in }K_n \}$.
The set $C$ of candidate centers, for each edge $e \in E$, has 
a center $\tau(e)$. 
To simplify notation, we use $p$ (resp. $e$) to also denote $\tau(p)$ (resp. $\tau(e)$) when $p$ (resp. $e$) is used as part of the clustering instance. 

We say that an edge $e \in E$ {\em covers} a $4$-clique $p$ if $e \in p$; 
in this case a center $e \in \C$ also {\em covers} a point $p \in \Po$.
Note that $\| e - p \|_0 = \| e - p \|_1 = 2$ if $e$ covers $p$ and at least $4$ otherwise. 
Since each $4$-clique has 6 edges and each edge yields $m$ centers, note that each point $p \in \Po$ is covered by exactly $6$ centers. 
Therefore, there is an LP solution that picks  $\nicefrac{1}{6}$ of each $e \in C$ and fractionally covers every $p \in \Po$.
We prove that even for the SDP, there is such a solution that picks $\nicefrac{1}{5}$ of each $e \in C$. 
\begin{claim}
There exists a feasible SDP solution where every $p \in \Po$ is fractionally connected to only centers at distance at most $2$, and $\| u_{e} \|_2^2 = \nicefrac{1}{5}$ for every $e$. 
\end{claim}
\begin{proof}
Let $t = 5$. 
Let $v_0$ and $\{ w_{e}, w'_{e} \}_{e \in E}$ be pairwise orthogonal unit vectors. We explicitly construct the vectors for the SDP relaxation.
\begin{itemize}
\item For each $e \in E$, $u_{e} = \frac{v_0}{t} + \big(\frac{(t-1)\sqrt{t+1}}{t^2} \big) w_{e} + \big( \frac{\sqrt{t-1}}{t^2}\big) w'_{e}$. 
\item For each $e \in E$, and a $4$-clique $p$, if $e \in p$, 
\[
v_{p, e} = \frac{1}{t+1} \cdot v_0 + \frac{t}{(t+1)^{3/2}} \cdot w_{e} - \sum_{f \in p \setminus \{ e \}} \frac{1}{(t+1)^{3/2}} \cdot w_{f}.\]
Otherwise $v_{p, e} = 0$. 
\end{itemize}
Note that every point $p \in \Po$ is only connected to $e$ that covers $p$. 
We check each constraint of the SDP. The first constraint $\langle v_0, v_0 \rangle = 1$ is true by definition. 
Since 
\[
\| u_{e} \|_2^2 = \frac{1}{t^2} + \frac{(t-1)^2(t+1)}{t^4} + \frac{t-1}{t^4} = 
\frac{1}{t^2} + \frac{t-1}{t^2} = \frac{1}{t} = \langle v_0, u_{e} \rangle,
\]
it satisfies~\eqref{eq:ulength}. 
Since for $e \in p$, 
\[
\| v_{p, e} \|_2^2 = \frac{1}{(t+1)^2} + \frac{t^2}{(t+1)^3} + \frac{t}{(t+1)^3} = 
\frac{1}{t + 1} = \langle v_0, v_{p, e} \rangle,
\]
it satisfies~\eqref{eq:vlength}. Furthermore, for each $p \in \Po$, 
$\sum_{e \in p} v_{p, e} = v_0$, since the coefficient of $w_{e}$ in the sum for any $e \in p$  is $\frac{t}{(t+1)^{3/2}}$ (from $v_{p, e}$) minus $t$ times $\frac{1}{(t+1)^{3/2}}$ (from every other $v_{p, f}$), which is $0$. It satisfies~\eqref{eq:localconst}. 
Finally,~\eqref{eq:uv} can be checked as for every $4$-clique $p$, $e \in p$,
\[
\langle u_{e}, v_{p, e} \rangle = \frac{1}{t}\cdot \frac{1}{t+1} + \frac{(t-1)\sqrt{t+1}}{t^2} \cdot \frac{t}{(t+1)^{3/2}} 
= \frac{1}{t(t+1)} + \frac{t-1}{t(t+1)} = \frac{1}{t+1} = \| v_{p, e} \|_2^2. 
\]
Since the solution satisfies every SDP constraint, each $p \in \Po$ is only connected to a center covering $p$, and $\| u_{e} \|_2^2 = 1/t$ for every $e$, the claim is proved. 
\end{proof}

Therefore the optimal relaxation value is at most $2\cdot \binom{n}{4}$ when $k = \binom{n}{2} / 5$.
Now we consider if we pick at most $k$ edges $E$ integrally, how many $4$-cliques can be covered by them. 
Equivalently, we ask if we pick at least $\binom{n}{2} - k$ edges, how many $4$-cliques are completely contained by them. 
The clique density theorem of Reiher~\cite{reiher2016clique} answers this question, 
proving that if we pick at least an $(1 - \nicefrac{1}{t})$ fraction of edges for some integer $t \geq 4$, the number of $4$-cliques completely contained in them is at least 
that of complete $t$-partite graph with each partition having the same size. 
Note that in the complete $t$-partite graph, the probability that a random $4$-tuple becomes a clique is roughly $1 \cdot \frac{t - 1}{t} \cdot \frac{t - 2}{t} \cdot \frac{t - 3}{t}$. 

\begin{theorem}[\cite{reiher2016clique}]
Let $t \geq 4$ be an integer. In every graph on $n$ vertices with at least $\frac{(t - 1)}{t} \cdot \binom{n}{2}$ edges,
the number of 4-cliques is at least 
\[
\binom{n}{4} \cdot \frac{t(t-1)(t-2)(t-3)}{t^4}.
\]
\end{theorem}

Applying the above theorem with $t = 5$ shows that if we pick $k$ edges, the fraction of 4-cliques covered by them is at most 
$1 - \nicefrac{(5\cdot4\cdot3\cdot2)}{5^4} = \nicefrac{24}{125}$. 
Therefore, while every point is connected at distance $2$ the SDP solution, in the integral solution at least a $\nicefrac{24}{125}$ fraction of the points are connected at distance at least $4$. The gap is at least $(2 + 2(24/125) / 2) = 149/125 \approx 1.192$. 
\end{proof}

\section{NP-Hardness of Approximatig 3-Hypergraph Vertex Coverage problem}\label{sec:nphard}

In this section, we prove the following theorem showing that for any $\eps > 0$, $(7/8  + \eps, 3, 1)$-Johnson Coverage Problem is NP-hard for randomized reductions (even in the dense case). 
By the results in Sections~\ref{sec:condDisc}~and~\ref{sec:condCont}, we obtain the inapproximability results for clustering problems stated in Theorem~\ref{thm:np-informal} and Table~\ref{table}. 

\begin{theorem}
For any $\eps > 0$, given a simple 3-hypergraph $\calh = (V, H)$ with $n = |V|$, it is NP-hard to distinguish between the following two cases:
\label{thm:np-hard}
\begin{itemize}
\item {\bf Completeness:} There exists $S \subseteq V$ with $|S| = n/2$ that intersects every hyperedge. 
\item {\bf Soundness:} Any subset $S \subseteq V$ with $|S| \leq n/2$ intersects at most a $(7/8 + \eps)$ fraction of hyperedges. 
\end{itemize}
Furthermore, with randomized reductions, the above hardness holds when $|H| = \omega(n^2)$. 
\label{thm:hvc}
\end{theorem}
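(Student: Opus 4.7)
The plan is to start from the Multilayered PCP of Dinur--Guruswami--Khot--Regev~\cite{DGKR05}, which underlies the state-of-the-art $(k{-}1{-}\eps)$ NP-hardness for $k$-Hypergraph Vertex Cover. For $k=3$ the construction composes the multilayered Label Cover with a biased Long Code test of bias $1/2$; the resulting FGLSS-style $3$-hypergraph has vertices indexed by (layer, Label Cover query, Long Code coordinate) triples, and hyperedges correspond to accepting configurations of the verifier's three queries landing in three distinct layers. The completeness case is inherited essentially unchanged: the honest prover's bias-$1/2$ Long Code yields a canonical ``positive'' set $S^{*}\subseteq V$ of size $n/2$ that meets every accepting configuration, and hence covers all of $H$. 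A minor padding step ensures that $|S^{*}|$ is exactly $n/2$ rather than $(1/2 + O(\eps))n$.

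The central technical task is to strengthen the soundness from DGKR's ``every vertex cover has size $\geq (1-\eps)n$'' to the sharp fractional statement ``every $S$ with $|S|=n/2$ covers at most $(7/8+\eps)|H|$ hyperedges,'' i.e.\ at least a $(1/8-\eps)$ fraction of hyperedges lie entirely outside $S$, matching the $(1/2)^3$ random benchmark. The approach is to fix a candidate $S$ with $|S|=n/2$, let $f_\ell$ denote the indicator of $V\setminus S$ restricted to the vertices of layer $\ell$, and write the uncovered fraction as a triple correlation $\Ex_{\ell_1,\ell_2,\ell_3}\Ex[f_{\ell_1} f_{\ell_2} f_{\ell_3}]$ under the verifier's distribution. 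Since the DGKR verifier, conditioned on the underlying Label Cover constraint, induces an (essentially) biased product distribution across the three layers, an invariance-principle / plurality-decoding argument applied layer-by-layer shows that if this correlation dips noticeably below $(1/2)^3$ for a non-negligible fraction of layer-triples, one can extract a labeling satisfying a noticeable fraction of Label Cover constraints, contradicting the multilayered soundness. The multilayered structure is essential here because it forces $S$ to be balanced across most triples of layers, foreclosing lopsided choices of $S$ that might otherwise beat the random benchmark.

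Finally, to meet the density requirement $|H|=\omega(n^2)$ needed by the continuous clustering reductions of Section~\ref{sec:condCont}, I apply a randomized densification: adjoin $\Theta(n)$ fresh ``dummy'' vertices and sample $\omega(n^2)$ random $3$-hyperedges among them. In completeness, augment $S^{*}$ by a uniformly random half of the dummies; standard Chernoff bounds guarantee that all but $o(|H|)$ dummy hyperedges are covered, and a greedy fix-up using $o(n)$ extra vertices handles the rest, so $|S| = n/2 + o(n)$. In soundness, symmetric Chernoff estimates ensure that no balanced subset of dummies can cover more than a $(7/8+\eps)$ fraction of the dummy hyperedges, preserving the gap. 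The principal obstacle throughout is the refined PCP analysis of the second paragraph: pushing DGKR's vertex-cover lower bound to the exact fractional $(1/2)^3$ threshold demands delicate invariance/Fourier-analytic control over balanced cuts in the biased Long Code, and any slack there directly degrades the final $7/8$ constant.
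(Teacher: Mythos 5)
Your high-level plan is correct: the paper does start from the multilayered PCP of Dinur, Guruswami, Khot and Regev, composes it with H\aa stad's 3-query inner verifier, and proves a new fractional soundness bound of $7/8+\eps$ for balanced sets, which is indeed the crux. However, there are two genuine gaps.

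First, your description of the 3-hypergraph has the verifier's three queries landing in three \emph{distinct} layers, with uncovered fraction written as a triple correlation $\Ex_{\ell_1,\ell_2,\ell_3}\Ex[f_{\ell_1} f_{\ell_2} f_{\ell_3}]$. That is not what the construction does, and the difference is load-bearing. The paper samples a \emph{pair} of layers $(i,j)$ and an edge $(v_i,v_j)\in E_{i,j}$, then makes one query to $v_i$'s table and \emph{two} queries to $v_j$'s table; the uncovered fraction becomes $\Ex_{(i,j),(v_i,v_j)}\Ex_{x,y,z}[f(x)g(y)g(z)]$, exactly H\aa stad's Max-$3$-SAT test. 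The ``one vs.\ two'' structure is essential both to the H\aa stad-style Fourier decomposition of $\Ex[f(x)g(y)g(z)]$ (where the two $g$-queries share a Fourier expansion and give the $\wh g(B)^2$ weight) and to the Jensen/convexity step that lower-bounds the uncovered mass by $\bigl(\Ex_i[\alpha_i]\bigr)^3-o(1)$ via the quantity $\Ex_{(i,j)}[\alpha_i\alpha_j^2]$. A bona fide three-distinct-layer test would require a different inner verifier, and it is not clear the analysis carries over. The multilayered structure is used to defeat the natural $2$-coloring of the bipartite hypergraph that a single-layer H\aa stad reduction would produce; it is not used to force ``balance across triples of layers.''

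Second, your densification step does not work. You propose adjoining $\Theta(n)$ dummy vertices with $\omega(n^2)$ random $3$-hyperedges among them, placing a random half of the dummies into $S^*$, and greedily fixing up the rest with $o(n)$ extra vertices. But for random dummy hyperedges with no relation to the Label Cover instance, a random half of the dummies leaves a $\approx 1/8$ (not $o(1)$) fraction of the dummy hyperedges entirely uncovered, i.e.\ $\Theta(|H_{\text{dummy}}|)=\omega(n^2)$ hyperedges. Since each dummy vertex lies in only $O(|H_{\text{dummy}}|/n)$ hyperedges, any fix-up set covering $\Theta(|H_{\text{dummy}}|)$ new hyperedges must have size $\Omega(n)$, destroying the exact $n/2$ budget in completeness. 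Worse, if the dummy hyperedges dominate $|H|$ (which they must for $|H|=\omega(n^2)$ to hold), the whole instance degenerates and completeness fails outright. The paper instead densifies by \emph{blowing up} each vertex $v$ into $b$ copies $\{v\}\times[b]$ and, for each existing hyperedge $(u,v,w)$, sampling many hyperedges $((u,x),(v,y),(w,z))$ with random $x,y,z\in[b]$; this preserves the Label Cover structure, so $S^*\times[b]$ still covers every produced hyperedge in completeness, while a concentration argument (with a final deduplication to keep the hypergraph simple) preserves soundness.
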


We first prove Theorem~\ref{thm:hvc} without the lower bound on $|H|$; 
Section~\ref{subsec:hvc_construction} presents the construction and proves the completeness of the reduction, 
and Section~\ref{sec:hvc_soundness},~\ref{sec:hvc_proof_1}, and~\ref{sec:hvc_proof_2} analyze its soundness.
Finally, Section~\ref{subsec:dense} shows how to ensure $|H| = \omega(n^2)$.

\subsection{Overview and Construction}
\label{subsec:hvc_construction}
Consider the setting of Theorem~\ref{thm:hvc}.
A random set of $|V|/2$ elements will intersect each hyperedge with probability $7/8$, so the theorem says that it is hard to do even slightly better than the random solution.
It is similar to the notion of {\em approximate resistance} mainly studied for constraint satisfaction problems. 
Indeed, given an instance of Max 3-SAT with variables $\{x_1, \dots, x_n \}$ and clauses $\{ (\ell_{i,1} \cup \ell_{i,2} \cup \ell_{i,3}) \}_{i \in [m]}$ where each $\ell_{i, j}$ denotes a literal $x_k$ or $\overline{x_k}$, consider the simple reduction of creating a vertex for each literal and a hyperedge $(\ell_{i,1}, \ell_{i,2}, \ell_{i,3})$ for each clause. 
This reduction almost proves the above theorem, except that the soundness property only holds for $S$ that satisfies $|S \cap \{ x_i, \overline{x_i} \}| \leq 1$ for each $i \in [n]$. 
However, the resulting hypergraph produced by this reduction combined with \hastad's celebrated hardness for Max 3-SAT~\cite{H01} is always bipartite due to the underlying bipartite structure of the outer verifier, so there is a vertex cover of size at most $|V| / 2$ even in the soundness case. 

We bypass the above problem by plugging in \hastad's inner verifier to the outer verifier constructed in Guruswami et al.~\cite{DGKR05} and Khot~\cite{Khot02}. 
This outer verifier is called a {\em multilayered PCP} and used for proving hardness of the covering version of our problem called $k$-Hypergraph Vertex Cover. 
We give a new analysis for the multilayered PCP and combine the analysis for \hastad's inner verifier to bound the number of uncovered hyperedges for all $S$ with $|S| \leq |V|/2$. 

We now formally present the reduction. We first describe multilayered PCPs that we use. 

\begin{definition}
An $\ell$-layered PCP $\calm$ consists of
\begin{itemize}
\item An $\ell$-partite graph $G = (V, E)$ where $V = \cup_{i=1}^{\ell} V_i$. Let $E_{i,j} = E \cap (V_i \times V_j)$. 
\item Sets of alphabets $\Sigma_1, \dots, \Sigma_{\ell}$. 
\item For each edge $e = (v_i, v_j) \in E_{i,j}$, a surjective projection $\pi_e : \Sigma_j \to \Sigma_i$. 
\end{itemize}
Given an assignment $(\sigma_i : V_i \to \Sigma_i)_{i \in [\ell]}$, an edge $e = (v_i, v_j) \in E_{i,j}$ is {\em satisfied} if $\pi_e(\sigma_j(v_j)) = \sigma_i(v_i)$. 
There are additional properties that $\calm$ can satisfy.
\begin{itemize}
\item $\eta$-smoothness: For any $i < j$, $v_j \in V$, and $x, y \in \Sigma_j$, $\Pr_{(v_i, v_j) \in E_{i,j}} [\pi_{(v_i, v_j)}(x) = \pi_{(v_i, v_j)}(y)] \leq \eta$. 
\item Path-regularity: Call a sequence $p = (v_1, \dots, v_{\ell})$ {\em full path} if $(v_i, v_{i+1}) \in E_{i, i+1}$ for every $1 \leq i < \ell$,
and let $\calp$ be the distribution of full paths obtained by (1) sampling a random vertex $v_1 \in V_1$ and (2) for $i = 2, \dots, \ell$, sampling $v_i$ from the neighbors of $v_{i-1}$ in $E_{i-1, i}$. 
$\calm$ is called {\em path-regular} if for any $i < j$, sampling $p = (v_1, \dots, v_{\ell})$ from $\calp$ and taking $(v_i, v_j)$ is the same as sampling uniformly at random from $E_{i,j}$. 
\end{itemize}
\end{definition}

\begin{theorem} [\cite{DGKR05, Khot02}]
For any $\tau, \eta > 0$ and $\ell \in \NN$, given an $\ell$-layered PCP $\calm$ with $\eta$-smoothness and path-regularity, it is NP-hard to distinguish between the following cases.
\begin{itemize}
\item {\bf Completeness: } There exists an assignment that satisfies every edge $e \in E$.
\item {\bf Soundness: } For any $i < j$, no assignment can satisfy more than an $\tau$ fraction of edges in $E_{i,j}$. 
\end{itemize}
\label{thm:dgkr}
\end{theorem}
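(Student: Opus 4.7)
The plan is to deduce Theorem~\ref{thm:dgkr} by adapting the multilayered tuple construction of Dinur--Guruswami--Khot--Regev~\cite{DGKR05} (built on Khot's smooth Label Cover~\cite{Khot02}) on top of a gap instance of bipartite Label Cover obtained from the PCP theorem together with Raz's parallel repetition. Let $\call = (U \cup W, F, \{\pi_f\}_{f \in F})$ denote such a Label Cover instance: the completeness is $1$ and the soundness is $\gamma = \gamma(t)$, where $t$ is the number of parallel repetitions, with $\gamma(t) \to 0$ as $t \to \infty$. First apply Khot's smoothing transformation, which replaces each $W$-vertex by an $s$-fold blowup so that, from any $w \in W$, any two distinct labels collide under a random projection with probability at most $\eta$. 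This produces an $\eta$-smooth bipartite Label Cover instance with the same completeness/soundness gap.

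Given $\ell$ and parameters to be fixed at the end, choose an increasing sequence $k_1 < k_2 < \cdots < k_\ell$ of positive integers and define the $\ell$-layered PCP $\calm$ as follows. Layer $V_i$ consists of pairs $((w_1, \dots, w_{k_i}), u)$, where each $w_j$ is a $W$-vertex that is a neighbor of the $U$-vertex $u$; the alphabet $\Sigma_i$ is $\Sigma_W^{k_i}$, with the $U$-label implicitly determined via the projections $\pi_f$. For $i < j$, edges in $E_{i,j}$ are obtained by taking a vertex $v_j = ((w_1, \dots, w_{k_j}), u) \in V_j$ and choosing a uniformly random $k_i$-size sub-tuple of $(w_1, \dots, w_{k_j})$ together with the same $u$; the associated projection $\pi_{(v_i, v_j)}$ is the corresponding coordinate restriction. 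The distribution on full paths $\calp$ is generated by sampling $v_\ell \in V_\ell$ uniformly and then iteratively taking uniform random sub-tuples; by symmetry, the induced marginal on any pair $(v_i, v_j)$ coincides with the uniform distribution on $E_{i,j}$, which gives path-regularity. Smoothness of $\calm$ reduces coordinate-by-coordinate to the smoothness of the underlying $\call$, so choosing $s$ large enough yields $\eta$-smoothness.

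Completeness is immediate: a labeling of $\call$ satisfying every constraint lifts to the assignment that labels each vertex in $V_i$ by the tuple of $W$-labels on its $W$-coordinates, and by construction this satisfies every edge of $\calm$. The core technical step is the soundness analysis, which is where I expect the main obstacle to lie. Suppose $(\sigma_i)_i$ satisfies a fraction $\tau$ of edges in $E_{i,j}$ for some fixed $i < j$; the goal is to decode from $(\sigma_i, \sigma_j)$ a randomized strategy for $\call$ whose value exceeds $\gamma(t)$, contradicting Raz's theorem once $t$ is chosen large enough. The natural decoding is: for $w \in W$, sample a random $v_j \in V_j$ that contains $w$ at a uniformly random coordinate and output the corresponding entry of $\sigma_j(v_j)$; for $u \in U$, use $\sigma_j$ on a random $v_j$ incident to $u$ to read off the implicit $U$-label. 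The hard part is to show that this strategy satisfies an $\Omega_{k_i, k_j, \eta}(\tau)$ fraction of base constraints: one uses path-regularity to rewrite the sampling of a satisfied edge in $E_{i,j}$ as a two-step process that first picks a pair of related base constraints in $\call$ and then expands them into tuples, and uses $\eta$-smoothness to rule out the failure mode where two distinct $W$-labels accidentally collapse under a common projection (so that $\sigma_j$ is forced to encode a consistent base assignment up to small loss). Choosing $t$ large enough that $\gamma(t)$ is well below the extracted lower bound then closes the reduction and yields the stated $(1, \tau)$-gap.
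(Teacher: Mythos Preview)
The paper does not prove this theorem; it is quoted from \cite{DGKR05,Khot02} and used as a black box, so there is no in-paper argument to compare against.

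Your sketch, however, has a genuine flaw: the multilayered instance you describe has no soundness. In your $\calm$, every vertex in every layer carries a \emph{single} $u\in U$, and every edge preserves this $u$; hence $\calm$ decomposes into disjoint blocks indexed by $u$, and within a block the constraints only check that a tuple of $W$-labels restricts correctly under coordinate projection. This can always be satisfied perfectly, irrespective of the value of $\call$: for each $u$ pick any $b_u\in\Sigma_U$, for each neighbor $w$ of $u$ pick any $a_{u,w}\in\pi_{(u,w)}^{-1}(b_u)$ (possible since each $\pi_f$ is surjective), and label $((w_1,\ldots,w_{k_i}),u)$ by $(a_{u,w_1},\ldots,a_{u,w_{k_i}})$. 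Every edge of $\calm$ is then satisfied, so the gap collapses. Relatedly, your smoothness justification is off: your projections are pure coordinate restrictions with $u$ fixed and never invoke the base maps $\pi_f$, so smoothness of $\call$ is irrelevant to smoothness of $\calm$ as you defined it. The construction in \cite{DGKR05} avoids exactly this by having layer-$i$ vertices be tuples that \emph{mix} $U$-type and $W$-type coordinates (the proportion shifting with $i$), with the projection $\pi_{(v_i,v_j)}$ applying the base maps $\pi_f$ to turn certain $W$-coordinates of $v_j$ into $U$-coordinates of $v_i$. It is this coordinatewise use of $\pi_f$ that both transfers the hardness of $\call$ to $\calm$ and makes $\eta$-smoothness of $\calm$ genuinely reduce to $\eta$-smoothness of $\call$.
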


Given an $\ell$-layered PCP $\calm$ described as above, we construct the reduction from to Johnson Coverage.
For simplicity of presentation, the produced instance will be vertex-weighted and edge-weighted, so that the problem becomes ``choose a set of vertices of total weight at most $k$ to maximize the total weight of covered edges.''
Vertex weights can be easily removed by duplicating vertices according to weights and creating edges between duplicated vertices with appropriate weights. 
Edge weights will be handled in Section~\ref{subsec:dense}.

\begin{itemize}
\item Let $C_i := \{ \pm 1 \}^{\Sigma_i}$ and $U_i := V_i \times C_i$. 
The resulting hypergraph will be denoted by $\calh = (U, H)$ where $U = \cup_{i=1}^{\ell} (V_i \times C_i)$. The weight of vertex $(v, x) \in V_i \times C_i$ is 
\[
w(v, x) :=\frac{1}{\ell} \cdot \frac{1}{|V_i|} \cdot \frac{1}{|C_i|}.
\]
Note that the sum of all vertex weights is $1$. 

\item Let $\cald_I$ be the distribution where $i \in [\ell]$ is sampled with probability $(\ell-i)^2 / (6\ell(\ell - 1)(2\ell - 1))$, and $\cald$ be the distribution over $(i, j) \in [\ell]^2$ where $i$ is sampled from $\cald_I$ and $j$ is sampled uniformly from $\{ i+1, \dots, \ell \}$. For each $i < j$, we create a set of hyperedges $H_{i,j}$ that have one vertex in $U_i$ and two vertices in $U_j$. 
Fix each $e = (v_i, v_j) \in E_{i,j}$ and a set of three vertices $t \subseteq (\{ v_i \} \times C_i) \cup (\{ v_j \} \times C_j)$. The weight $w(t)$ is $($the probability that $(i, j)$ is sampled from $\cald) \cdot (1/|E_{i,j}|) \cdot ($the probability that $t$ is sampled from the following procedure$)$. The reduction is parameterized by $\delta > 0$ determined later. 
\begin{itemize}
\item For each $a \in \Sigma_i$, sample $x_a \in \{ \pm 1 \}$.
\item For each $b \in \Sigma_j$, 
\begin{itemize}
\item Sample $y_b \in \{ \pm 1 \}$. 
\item If $x_{\pi(b)} = -1$, let $z_b = y_b$ with probability $1-\delta$ and $z_b = -y_b$ otherwise.
\item If $x_{\pi(b)} = 1$, let $z_b = -y_b$. 
\end{itemize}
\item Output $\{ (v_i, x), (v_j, y), (v_j, z) \}$. 
\end{itemize}
Note that the sum of all hyperedge weights is also $1$. 
\end{itemize}

\paragraph{Completeness.}
If $\calm$ admits an assignment $(\sigma_i : V_i \to \Sigma_i)_{i \in [\ell]}$ that satisfies every edge $e \in E$, 
let $S := \{ (v_i, x) : v_i \in V_i, x_{\sigma_i(v_i)} = -1 \}$. 
Fix any $e = (v_i, v_j) \in E_{i,j}$ and consider the above sampling procedure to sample $x \in \{ \pm 1 \}^{\Sigma_i}$ and $y \in \{ \pm 1 \}^{\Sigma_j}$ when $b = \sigma_j(v_j)$. 
Since $\pi_e(\sigma_j(v_j)) = \sigma_i(v_i)$, at least one of $x_{\sigma_i(v_i)}$, $y_{\sigma_j(v_j)}$, $z_{\sigma_j(v_j)}$ must be $-1$ always. This proves that $S$ intersects every hyperedge with nonzero weight.

\subsection{Soundness.}
\label{sec:hvc_soundness}
In the soundness case, we want to prove that any subset of weight at most $1 / 2$ intersects hyperedges of total weight at most $7/8 + o(1)$.
We prove the equivalent statement that for any $S \subseteq V$ of weight {\em greater than} $1 / 2$ contains hyperedges of total weight approximately {\em at least} $1/8 - o(1)$.

Fix a set $S \subseteq V$ with $w(S) \geq 1 / 2$. Let $S_i := S \cap V_i$. Let $F = \{ e \in H : e \in S \}$ and $F_{i,j} := F \cap H_{i,j}$. Our goal is to show $w(F)$ is approximately at least $1/8 - o(1)$. 

Given a vertex $v \in V_i$, let $C_v := \{ v \} \times C_i \subseteq U$ and 
\[
\alpha_v := \ell |V_i| \cdot \bigg( \sum_{v \in (S \cap C_v)} w(v) \bigg)
\]
be the normalized weight of $S$ in $C_v$. 
Given vertices $v_i \in V_i$ and $v_j \in V_j$ with $i < j$, let 
$D_{i,j} = \Pr_{(i', j') \sim \cald}[i = i', j = j']$ and 
\[
\beta_{v_i, v_j} := \frac{1}{D_{i,j}} \cdot |E_{i,j}| \cdot \bigg( \sum_{e \in (F \cap H(C_{v_i} \cup C_{v_j}))} w(e) \bigg)
\]
be the normalized weight of $F$ in $H(C_{v_i} \cup C_{v_j})$, where given $T \subseteq V$, $H(T)$ is defined as $\{ e \in H: e \subseteq T \}$. 
Note that all $\alpha_v, \beta_{v_i, v_j}$ are in $[0, 1]$. Furthermore, 
\[
\Ex_{i \in [\ell]} \Ex_{v \in V_i} [\alpha_v] = w(S),
\]
and 
\[
\Ex_{(i,j) \sim \cald} \Ex_{(v_i, v_j) \in E_{i,j}} [\beta_{v_i, v_j}] = w(F).
\]
Let $\alpha_i := \Ex_{v \in V_i} [\alpha_v]$ and $\beta_{i,j} := \Ex_{(v_i, v_j) \in E_{i,j}} [\beta_{v_i, v_j}]$.
Finally, for each full path $p = (v_1, \dots, v_{\ell})$, let $\alpha_{p, i} := \alpha_{v_i}$ and $\beta_{p, i,j} := \beta_{v_i, v_j}$. 

Call a triple $(p, i, j)$ {\em good} if $\beta_{p, i,j} < \alpha_{p, i} \alpha_{p, j}^2 - \eps$. 
The following lemma says that we are done if few triples are good. 

\begin{lemma}
If $\Pr_{p \in \calp, (i,j) \in \cald}[ (p, i, j)\mbox{ is good}] \leq \eps$, then $w(F) \geq 1/8 - 3(\sqrt{\eps} + 1/\ell)$. 
\label{lem:np-soundness-1}
\end{lemma}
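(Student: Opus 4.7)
The plan is to write $w(F) = \Ex_{p \sim \calp,\,(i,j) \sim \cald}[\beta_{p,i,j}]$ by path-regularity, trade the good-triple assumption for a bound of the form $w(F) \geq \Ex[\alpha_{p,i}\alpha_{p,j}^2] - O(\eps)$, and then show that $\Ex[\alpha_{p,i}\alpha_{p,j}^2] \geq 1/8 - O(1/\ell)$ by a Jensen-and-telescoping computation that uses the fact that $\cald$ was designed precisely so that its weights act as the discrete analogue of the integration-by-parts identity $3\int_0^1 f(s)F(s)^2\,ds = F(0)^3$ with $F(s) = \int_s^1 f(t)\,dt$.

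For the first step, since the good event has probability at most $\eps$ and $\alpha_{p,i}\alpha_{p,j}^2 \in [0,1]$,
\[
w(F) \;\geq\; \Ex\bigl[(\alpha_{p,i}\alpha_{p,j}^2 - \eps)\,\1_{\text{not good}}\bigr] \;\geq\; \Ex[\alpha_{p,i}\alpha_{p,j}^2] - \Pr[\text{good}] - \eps \;\geq\; \Ex[\alpha_{p,i}\alpha_{p,j}^2] - 2\eps.
\]
For the core estimate I would fix a path $p$, set $a_i := \alpha_{p,i}$ and $M_i := \sum_{j \geq i}a_j$, and use that $\cald(i,j) \propto (\ell - i)$ with normalizer $Z = \sum_i (\ell - i)^2 = \ell(\ell-1)(2\ell-1)/6 \leq \ell^3/3$. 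Applying Jensen to the inner sum $\Ex_{j > i}[a_j^2] \geq (\Ex_{j>i}[a_j])^2 = M_{i+1}^2/(\ell-i)^2$ gives
\[
\Ex_{(i,j)\sim\cald}[a_i a_j^2] \;=\; \frac{1}{Z}\sum_i (\ell - i)\,a_i \sum_{j>i}a_j^2 \;\geq\; \frac{1}{Z}\sum_i a_i\,M_{i+1}^2.
\]
The algebraic identity $3(M_i - M_{i+1})M_{i+1}^2 = (M_i^3 - M_{i+1}^3) - (M_i - M_{i+1})^2(M_i + 2M_{i+1})$ telescopes this to
\[
\sum_i a_i M_{i+1}^2 = \sum_i (M_i - M_{i+1})M_{i+1}^2 \;\geq\; \frac{M_1^3}{3} - M_1\sum_i a_i^2 \;\geq\; \frac{M_1^3}{3} - M_1^2,
\]
where the last inequality uses $a_i \in [0,1]$ so $\sum_i a_i^2 \leq M_1$. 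Writing $\bar{a}_p := M_1/\ell$ and dividing by $Z \leq \ell^3/3$ yields $\Ex_{(i,j)\sim\cald}[a_i a_j^2] \geq \bar{a}_p^{\,3} - 3/\ell$.

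Finally I would average over paths. Biregularity (implicit in the DGKR-style construction underlying Theorem~\ref{thm:dgkr}) forces the marginal of $\calp$ on each $V_i$ to be uniform, so $\Ex_p[\bar{a}_p] = \tfrac{1}{\ell}\sum_i \alpha_i = w(S) \geq 1/2$. Applying Jensen to $x \mapsto x^3$ gives $\Ex_p[\bar{a}_p^{\,3}] \geq w(S)^3 \geq 1/8$, whence
\[
w(F) \;\geq\; \Ex_p\Ex_{(i,j)\sim\cald}[\alpha_{p,i}\alpha_{p,j}^2] - 2\eps \;\geq\; \tfrac{1}{8} - 3/\ell - 2\eps,
\]
which is stronger than the claimed $\tfrac{1}{8} - 3(\sqrt{\eps} + 1/\ell)$. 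The main obstacle is the telescoping step: the distribution $\cald$ has to be the one with weights $\cald_I(i) \propto (\ell - i)^2$ for the identity to collapse all cross-terms into a single $M_1^3/3$, and the discretization error must be controlled using the crude but sufficient bound $\sum_i a_i^2 \leq M_1$; everything else amounts to bookkeeping around the path-regularity of $\calm$.
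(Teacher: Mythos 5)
Your proof is correct (for $\ell\ge 3$, which is the regime of interest) and in fact yields the stronger bound $w(F)\geq 1/8 - 3/\ell - 2\eps$, but it departs from the paper's proof in two substantive ways. First, the paper passes through a typical/atypical decomposition of paths via Markov's inequality (calling $p$ atypical if the conditional probability of a good triple exceeds $\sqrt{\eps}$), and this is where the $3\sqrt{\eps}$ loss comes from; you skip this entirely by bounding the global expectation $\Ex_{p,(i,j)}[\beta_{p,i,j}]\ge\Ex[\alpha_{p,i}\alpha_{p,j}^2]-2\eps$ directly, which is both simpler and tighter. Second, for the core estimate on $\Ex_{(i,j)\sim\cald}[\alpha_{p,i}\alpha_{p,j}^2]$, the paper applies Cauchy--Schwarz to symmetrize into $\Ex_{i\sim\cald_I,\,j,k>i}[\alpha_i\alpha_j\alpha_k]$ and then compares monomial-by-monomial against the uniform triple distribution, showing the $\cald$-weight of each distinct ordered triple dominates $6/\ell^3$ and losing $3/\ell$ on the diagonal; whereas you apply Jensen once to collapse the inner sum to $M_{i+1}^2/(\ell-i)$ and then telescope via the identity $3(M_i-M_{i+1})M_{i+1}^2 = (M_i^3-M_{i+1}^3) - (M_i-M_{i+1})^2(M_i+2M_{i+1})$. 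Your Abel-summation route gives a cleaner explanation for \emph{why} the weights $\cald_I(i)\propto(\ell-i)^2$ were chosen (they make the discrete analogue of $3\int fF^2 = F(0)^3$ work out), while the paper's monomial comparison is more elementary in flavor but requires the reader to track the combinatorics of ordered versus unordered triples. One small note on your write-up: the step ``dividing by $Z\le\ell^3/3$'' needs a one-line case split --- when $M_1^3/3 - M_1^2 < 0$ (i.e.\ $\bar{a}_p<3/\ell$) the conclusion $\Ex[a_ia_j^2]\ge\bar{a}_p^3-3/\ell$ instead follows trivially from nonnegativity --- but this is a cosmetic fix, not a gap.
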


Therefore, it remains to consider the case that the condition of Lemma~\ref{lem:np-soundness-1} does not hold. 
Then, there exists $i < j$ such that 
$\Pr_{p \in \calp}[ (p, i, j)\mbox{ is good}] \geq \eps.$
The following lemma, essentially from \hastad's analysis for Max 3-SAT~\cite{H01}, states that it cannot happen if there is no good assignment for the multilayered PCP instance $\calm$. 
For completeness, we reproduce a proof in Section~\ref{sec:hastad}

\begin{lemma}
Fix any $i < j$. 
If 
$\Pr_{p \in \calp}[ (p, i, j)\mbox{ is good}] \geq \eps$,
then there is an assignment for $\calm$ that satisfies at least an
$(\eps \delta^{4} / 2) \cdot (\eps/2 - 2\delta - 2\eta\delta^{-4})^2$
fraction of edges in $E_{i,j}$. 
\label{lem:np-soundness-2}
\end{lemma}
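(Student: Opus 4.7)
My plan is to prove Lemma~\ref{lem:np-soundness-2} by a Fourier-analytic H\aa stad-style argument that treats each edge $(v_i,v_j)\in E_{i,j}$ together with the hyperedge sampling procedure exactly as the $3$-query long code test used for Max 3-SAT. First, for each vertex $v\in V_\ell$ with $\ell\in\{i,j\}$, let $f_v:\{\pm1\}^{\Sigma_\ell}\to\{0,1\}$ be the indicator of $S\cap C_v$, so that $\Ex[f_v]=\alpha_v$ and $\sum_A\hat f_v(A)^2=\alpha_v$. By path-regularity it suffices to work over a uniform edge $e=(v_i,v_j)\in E_{i,j}$, and by definition
\[
\beta_{v_i,v_j}\;=\;\Ex_{x,y,z}\bigl[f_{v_i}(x)f_{v_j}(y)f_{v_j}(z)\bigr],
\]
where $(x,y,z)$ is drawn from the hyperedge construction associated to $e$.

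The next step is to expand $f_{v_i}, f_{v_j}$ in the Walsh--Fourier basis and compute $\Ex[\chi_A(x)\chi_B(y)\chi_C(z)]$ from the explicit distribution of $z_b$ conditioned on $x_{\pi_e(b)}$ and $y_b$. Writing $z_b=y_b\xi_b$ with $\Ex[\xi_b\mid x]=-(1-\delta)x_{\pi_e(b)}-\delta$, a direct calculation shows that only the diagonal terms $B=C$ survive, and
\[
\Ex[\chi_A(x)\chi_B(y)\chi_B(z)]\;=\;(-1)^{|B|}\sum_{\substack{B'\subseteq B\\ \pi_e^{\mathrm{odd}}(B')=A}}(1-\delta)^{|B'|}\delta^{|B|-|B'|},
\]
where $\pi_e^{\mathrm{odd}}(B')$ is the set of $a\in\Sigma_i$ hit an odd number of times by $\pi_e|_{B'}$. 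Isolating the $(A,B)=(\emptyset,\emptyset)$ contribution yields the ``product'' value $\alpha_{v_i}\alpha_{v_j}^2$, while the $A=\emptyset,B\ne\emptyset$ terms contribute at most $O(\delta)$ in magnitude on edges free of $\pi_e$-collisions inside $B$. Hence the good hypothesis $\beta_{v_i,v_j}<\alpha_{v_i}\alpha_{v_j}^2-\eps$ forces the $A\ne\emptyset$ correction to have magnitude at least $\eps-O(\delta)$ on an $\eps$-fraction of edges in $E_{i,j}$.

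The remaining work is a standard decoding step. For each $v\in V_j$, I would draw $B\subseteq\Sigma_j$ with probability $\hat f_v(B)^2/\alpha_v$ and output a uniformly random element $\sigma_j(v)\in B$, and analogously for $V_i$. Truncating the Fourier support of $f_{v_j}$ at level $|B|\le 1/\delta^2$ introduces only an additive error of $(1-\delta)^{1/\delta^2}\le\delta$ in the correction. For every surviving Fourier pair $(A,B)$ with $A\ne\emptyset$ there is a nonempty $B'\subseteq B$ with $\pi_e^{\mathrm{odd}}(B')=A$, and sampling $c\in B$ uniformly hits $B'$ with probability at least $|B'|/|B|\ge\delta^2$. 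Smoothness guarantees $|\pi_e(B)|=|B|$ on all but an $\eta|B|^2\le\eta\delta^{-4}$-fraction of edges, and on these collision-free edges $\pi_e(c)$ is a valid label in $A$, yielding a consistent decoding $\pi_e(\sigma_j(v_j))=\sigma_i(v_i)$. Averaging the quadratic-in-$\hat f$ decoding probability against the $\eps$-sized bias via Cauchy--Schwarz over $E_{i,j}$ should yield a satisfied-edge lower bound of the form $(\eps\delta^4/2)(\eps/2-2\delta-2\eta\delta^{-4})^2$, matching the stated bound.

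The main obstacle, and what distinguishes this from the cleaner Max 3-SAT long code test, will be simultaneously handling three error sources: the truncation error $O(\delta)$ from high Fourier levels, the smoothness error $\eta/\delta^{O(1)}$ from $\pi_e$-collisions inside $B$ (which also affect the $A=\emptyset$ bound), and the arity loss $\delta^{O(1)}$ when sampling a consistent coordinate from $B$. A clean route is to restrict to $B'=B$ with $|B|\le 1/\delta^2$---forcing $\pi_e$ to act as a bijection from $B$ onto $A$ whenever smoothness prevents collisions---then bound the $A=\emptyset$ terms crudely, and only then apply Cauchy--Schwarz, so that the three error sources appear multiplicatively rather than being additively inflated inside the final square.
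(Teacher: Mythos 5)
Your plan is essentially the same H\aa stad-style argument the paper uses: reduce $\beta_{v_i,v_j}$ to $\Ex[f(x)g(y)g(z)]$, expand in the Fourier basis, observe that only diagonal $B=C$ terms survive, bound the $A=\emptyset$ and high-degree contributions using smoothness so that a good edge forces a noticeable cross-term with $A\ne\emptyset$ and $1\le|B|\le\delta^{-2}$, and then decode by sampling a Fourier level with probability proportional to its squared weight and a random element within it. The only substantive difference is notational: you compute $\Ex[\chi_A(x)\chi_B(y)\chi_B(z)]$ via the identity $z_b=y_b\xi_b$ and a sum over $B'\subseteq B$ with $\pi_e^{\mathrm{odd}}(B')=A$, whereas the paper groups $B$ by fibers of $\pi_e$ and writes the expectation as $\prod_{a\in A}q_{s_a}\prod_{a\in\pi(B)\setminus A}p_{s_a}$ with $p_s^2+q_s^2\le 1-\delta$; these are equivalent bookkeeping choices, and the paper's factored form makes the bound $\sum_A\Ex[\cdot]^2\le(1-\delta)^{|\pi(B)|}$ slightly more transparent. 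Two small cautions if you flesh this out: (i) the truncation error from $|B|>\delta^{-2}$ is not a pure truncation bound $(1-\delta)^{1/\delta^2}$ but must be coupled with smoothness, since without it $|\pi(B)|$ could be tiny even when $|B|$ is large; and (ii) the final Cauchy--Schwarz step that converts the bias into a quadratic decoding guarantee needs to be applied over $u$ for each fixed good $v$, as the paper does, so that the three error sources you list combine additively inside the square rather than multiplicatively inflating it.
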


First take small enough $\eps > 0$ and large enough $\ell \in \NN$ will ensure $w(F) \geq 1/8 - 3(\sqrt{\eps} + 1/\ell)$ is almost at least $1/8$.
Choosing $\delta = \eps^2/4$, $\eta = \delta^6$ will ensure that the guarantee in Lemma~\ref{lem:np-soundness-2} is at least $\eps^c$ for some absolute constant $c \in \NN$. 
By taking $\tau$ in Theorem~\ref{thm:dgkr} smaller than that, we can ensure that $\Pr_{p \in \calp}[ (p, i, j)\mbox{ is good}] \leq \eps$ and $w(F) \geq 1/8 - 3(\sqrt{\eps} + 1/\ell)$ in the soundness case,
proving Theorem~\ref{thm:hvc}.

\subsection{Proof of Lemma~\ref{lem:np-soundness-1}}
\label{sec:hvc_proof_1}

\begin{proof}
Recall that for any $i < j$, sampling $p = (v_1, \dots, v_{\ell}) \in \calp$ and choosing $(v_i, v_j)$ is the same as sampling $(v_i, v_j) \in E_{i, j}$ uniformly at random. Therefore, 
\begin{align*}
w(F) & =
\Ex_{(i,j) \sim \cald} \Ex_{(v_i, v_j) \in E_{i,j}} [\beta_{v_i, v_j}] \\
& = \Ex_{(i,j) \sim \cald} \Ex_{p \in \calp} [\beta_{p, i, j}] \\
&= \Ex_{p \in \calp} \Ex_{(i,j) \sim \cald}  [\beta_{p, i, j}].
\end{align*}
Say $p$ is {\em atypical} 
$\Pr_{(i,j) \in \cald} [ (p, i, j)\mbox{ is good}] \geq \sqrt{\eps}$. 
Since $\Pr_{p \in \calp, (i,j) \in \cald}[ (p, i, j)\mbox{ is good}] \leq \eps$, 
\[
\Pr_{p \in \calp}[p \mbox{ is atypical}] \leq \sqrt{\eps}.
\]
Fix a typical $p$. Then 
\[
\Ex_{(i,j) \sim \cald}  [\beta_{p, i, j}] \geq 
\Ex_{(i,j) \sim \cald}  [\alpha_{i,p} \alpha_{j,p}^2 - \eps] - \sqrt{\eps},
\]
since we can apply the lower bound $\beta_{p,i,j}$ by $\alpha_{i,p}\alpha_{j,p}^2 - \eps$ with probability at least $1 - \sqrt{\eps}$ and $0$ otherwise. 

Now we analyze $\Ex_{(i,j) \sim \cald}  [\alpha_{i,p} \alpha_{j,p}^2]$. 
Recalling the definition of $\cald$ and applying Cauchy-Schwarz, 
\begin{equation}
\Ex_{(i,j) \sim \cald}  [\alpha_{i,p} \alpha_{j,p}^2] \geq 
\Ex_{i \sim \cald_I, j \in \{ i+1, \dots, \ell \} }  [\alpha_{i,p} \alpha_{j,p}^2] \geq
\Ex_{i \sim \cald_I, j,k \in \{ i+1, \dots, \ell \} }  [\alpha_{i,p} \alpha_{j,p} \alpha_{k, p}].
\label{eq:np-dist1}
\end{equation}
We compare the RHS of~\eqref{eq:np-dist1} to 
\begin{equation}
\big(\Ex_{i} [\alpha_{i,p}]\big)^3 = 
\Ex_{i, j,k \in [l] }  [\alpha_{i,p} \alpha_{j,p} \alpha_{k, p}].
\label{eq:np-dist2}
\end{equation}
If we fix $i < j < k$, 
the probability that the monomial $\alpha_{i,p} \alpha_{j,p} \alpha_{k,p}$ contributes to the expectation, after incorporating permutations between $i,j,k$, is 
$
6/\ell^3
$
in~\eqref{eq:np-dist2} and 
\[
\frac{(\ell-i)^2}{\ell(\ell-1)(2\ell-1)/6} \cdot \frac{2 }{(\ell - i)^2} = \frac{2 \cdot 6}{\ell(\ell-1)(2\ell - 1)}
\]
in~\eqref{eq:np-dist1}, which is greater than $6/\ell^3$. Since $\Pr[i=j \mbox{ or } j=k \mbox{ or } k=i] \leq \frac{3}{\ell}$ when we sample $i, j, k \in [\ell]$ independently, 
\[
\Ex_{i \sim \cald_I, j,k \in \{ i+1, \dots, \ell \} }  [\alpha_{i,p} \alpha_{j,p} \alpha_{k, p}]
\geq 
\Ex_{i, j,k \in [l] }  [\alpha_{i,p} \alpha_{j,p} \alpha_{k, p}] - \frac{3}{\ell} = \big(\Ex_{i} [\alpha_{i,p}]\big)^3 - \frac{3}{\ell}. 
\]
Therefore, if $p$ is typical, then 
\begin{equation*}
\Ex_{(i,j) \sim \cald}  [\beta_{p, i, j}] \geq \bigg( \Ex_{i \in [\ell]} [\alpha_i] \bigg)^3 - \frac{3}{\ell} - 2\sqrt{\eps}.
\end{equation*}
Using $\Ex_p \Ex_i [\alpha_{p, i}] \geq 1/2$ and Jensen's inequality, 
\begin{align*}
w(F) & =\Ex_{p \in \calp} \Ex_{(i,j) \sim \cald}  [\beta_{p, i, j}] \\ 
&\geq \Ex_{p \in \calp} \bigg[ \big( \Ex_{i \in [\ell]} [\alpha_i] \big)^3 \bigg] - \frac{3}{\ell} - 2\sqrt{\eps} - \Pr_{p \in \calp}[p \mbox{ is atypical}] \\
&\geq \bigg( \Ex_{p \in \calp} \Ex_{i \in [\ell]} [\alpha_i]  \bigg) ^3  - \frac{3}{\ell} - 3\sqrt{\eps} \\
&\geq \frac{1}{8}  - \frac{3}{\ell} - 3\sqrt{\eps}.
\end{align*}
\end{proof}

\subsection{Proof of Lemma~\ref{lem:np-soundness-2}}
\label{sec:hvc_proof_2}
\label{sec:hastad}
\begin{proof}
Fix $i < j$ given in the condition of the lemma. 
Call an edge $e = (v_i, v_j) \in E_{i,j}$ good if $\beta_{v_i, v_j} < \alpha_{v_i} \alpha_{v_j}^2 - \eps$. 
For $v_i \in V_i$, let $f_{v_i} : C_i \to \{ 0, 1 \}$ be the indicator function of $S \cap (\{ v_i \} \times C_i)$, and 
Let $v_j \in V_j$, let $g_{v_j} : C_j \to \{ 0, 1 \}$ be the indicator function of $S \cap (\{ v_j \} \times C_j)$. 

The path regularity and the promise of the lemma implies that at least an $\eps$ fraction of $e \in E_{i,j}$ is good. 
Fix such an edge $e = (v_i, v_j)$. For notational simplicity, let $\pi = \pi_e$, $L = \Sigma_i$, $R = \Sigma_j$,
$f := f_{v_i}$ and $g := g_{v_j}$. 

We use the standard notations in analysis of boolean functions. See~\cite{H01, Odonnell14} for references. 
For two functions $f_1, f_2 : C_i \to \R$, let $\langle f_1, f_2 \rangle := \Ex_{x \in \{ \pm 1 \}^{L}} [f_1(x) f_2(x)]$ be the inner product between $f_1$ and $f_2$. 
For $A \subseteq L$, let $\chi_A : C_i \to \R$ defined as $\chi_A(x) = \prod_{a \in A} x_a$.
It is well known that $\{ \chi_A \}_{A \subseteq L}$ forms an orthonormal basis, so that 
$f$ can be written as $\sum_{A \subseteq L} \wh{f}(A) \chi_A$ with $\wh{f}(A) = \langle f, \chi_A \rangle$. 
Define $\{ \chi_B \}_{B \in \subseteq R}$ similarly and write $g$ as $\sum_{B \subseteq R} \wh{g}(B) \chi_B$. 

Now note that $\alpha_i = \Ex_x[f(x)] = \wh{f}(\emptyset)$, 
$\alpha_j = \Ex_y[g(y)] = \wh{g}(\emptyset)$, and $\beta_{i,j} = \Ex_{x, y, z} [f(x) g(y) g(z)]$ where $x, y, z$ were jointly sampled the reduction given $e$. 
Expanding Fourier decompositions for $f$ and $g$, 
\begin{align}
\Ex_{x, y, z} [f(x) g(y) g(z)] 
&= \sum_{A \subseteq L, B \subseteq R, C \subseteq R} \wh{f}(A) \wh{g}(B) \wh{g}(C) \Ex[\chi_A(x) \chi_B(y) \chi_C(z)] \nonumber \\
&= \sum_{B \subseteq R, A \subseteq \pi(B)} \wh{f}(A) \wh{g}(B)^2 \Ex[\chi_A(x) \chi_B(y) \chi_B(z)] \nonumber \\
&=\sum_{B \subseteq R} \wh{g}(B)^2 \sum_{A \subseteq \pi(B)} \wh{f}(A) \Ex[\chi_A(x) \chi_B(y) \chi_B(z)].
\label{eq:fourier}
\end{align}
The second equality holds because if $b \in B \setminus C$, then $\chi_B(y)$ contains $y_b$ and it is independent from any other variable appearing in $\chi_A(x) \chi_B(y) \chi_C(z)$.
Similarly, the existence of $c \in C \setminus B$ or $a \in A \setminus (B \cup C)$ will make $\chi_A(x) \chi_B(y) \chi_C(z)$ vanish. 

Suppose $B \subseteq \pi^{-1}(a)$ for some $a \in L$. 
For each $b \in B$, $\Ex[y_b z_b] = -1$ if $x_a = 1$ and $(1 - 2\delta)$ otherwise, so 
\begin{equation*}
\Ex[\chi_B(y) \chi_B(z)] = \frac{1}{2} \big( (-1)^{|B|} + (1-2\delta)^{|B|} \big),
\end{equation*}
and 
\begin{equation*}
\Ex[x_a \chi_B(y) \chi_B(z)] = \frac{1}{2} \big( (-1)^{|B|} - (1-2\delta)^{|B|} \big).
\end{equation*}
Therefore, if we consider $\Ex[\chi_A(x) \chi_B(y) \chi_B(z)]$ for general $A \subseteq B$, 
letting $s_a := |B \cap \pi^{-1}(a)|$, 
$p_s := ( (-1)^{s} + (1-2\delta)^{s} ) /2$, 
$q_s := ( (-1)^{s} - (1-2\delta)^{s} ) /2$, 
it is equal to 
\begin{equation}
\bigg( \prod_{a \in A} q_{s_a} \bigg) \cdot
\bigg( \prod_{a \in \pi(B) \setminus A} p_{s_a} \bigg).
\end{equation}
Note that $p_s^2 + q_s^2 \leq 1 - \delta$ for any $s \geq 1$. Then for fixed $B$, 
\begin{align*}
\sum_{A \subseteq \pi(B)} \big( \Ex[\chi_A(x) \chi_B(y) \chi_B(z)] \big)^2
& \leq \sum_{A \subseteq \pi(B)} \Ex[( \chi_A(x) \chi_B(y) \chi_B(z))^2]  \\
& = \sum_{A \subseteq \pi(B)} \bigg( \prod_{a \in A} q_{s_a}^2 \prod_{a \in \pi(B) \setminus A} p_{s_a}^2 \bigg) \\
& = \prod_{a \in \pi(B)} (p_{s_a}^2 + q_{s_a}^2) \\
& \leq (1-\delta)^{|\pi(B)|}.
\end{align*}

Finally, we analyze~\eqref{eq:fourier}. 
When $B = \emptyset$, we get $\wh{f}(\emptyset)\wh{g}(\emptyset)^2 = \alpha_i \alpha_j^2$. 
Say $B$ {\em big} if $|B| > \delta^{-2}$ and small if $1 \leq |B| \leq \delta^{-2}$. 
Fix large $B$ and $v \in V_j$, and consider a random edge $(u, v) \in E_{i,j}$.
Since $\calm$ is $\eta$-smooth, the probability that $|\pi(B)| \geq \delta^{-2}$ is at least $1 - \eta \delta^{-4}$, so using $(1-\delta)^{1 / 2\delta^2} \leq \delta$, 
\begin{align*}
\Ex_{(u, v) \in E_{i,j}} (1-\delta)^{|\pi(\beta)| / 2} \leq \delta + \eta \delta^{-4}. 
\end{align*}
Therefore, for any fixed $v \in V_j$, we can bound~\eqref{eq:fourier} for big $B$ as: \allowdisplaybreaks

\begin{align*}
&\Ex_{(u, v) \in E_{i,j}} \bigg[ \bigg| \sum_{B \mbox{ big}} \wh{g}(B)^2 \sum_{A \subseteq \pi(B)} \wh{f}(A) \Ex[\chi_A(x) \chi_B(y) \chi_B(z)] \bigg| \bigg]  \\
\leq \quad & \Ex_{(u, v) \in E_{i,j}} \bigg[ \sum_{B \mbox{ big}} \wh{g}(B)^2 \bigg( \sum_{A \subseteq \pi(B)} \wh{f}(A)^2  \bigg)^{1/2} \bigg( \sum_{A \subseteq \pi(B)} \Ex[\chi_A(x) \chi_B(y) \chi_B(z)]^2 \bigg)^{1/2} \bigg] \\
\leq \quad &\Ex_{(u, v) \in E_{i,j}} \bigg[ \sum_{B \mbox{ big}} \wh{g}(B)^2 (1 - \delta)^{|\pi(\beta)|/2} \bigg] \leq \delta + \eta \delta^{-4} . 
\end{align*}
Similarly, we can bound~\eqref{eq:fourier} for small $B$ and $A = \emptyset$. 
With probability at least $1 - \eta |\pi(B)|^2 \geq 1 - \eta \delta^{-4}$ we have $|\pi(\beta)| = |\beta|$, and if this happens, 
$\Ex[|\chi_B(y) \chi_B(z)|] \leq |p_1| = \delta$. Therefore, 
\begin{align*}
&\Ex_{(u, v) \in E_{i,j}} \bigg[ \bigg| \sum_{B \mbox{ small}} \wh{g}(B)^2 \wh{f}(\emptyset) \Ex[\chi_B(y) \chi_B(z)] \bigg| \bigg]  \\
&\Ex_{(u, v) \in E_{i,j}} \bigg[ \bigg| \sum_{B \mbox{ small}} \Ex[\chi_B(y) \chi_B(z)] \bigg| \bigg] \\ 
& \eta \delta^{-4} + \delta.
\end{align*}
Finally, for small $B$ and $\emptyset \subsetneq A \subseteq \pi(B)$, we bound~\eqref{eq:fourier} as 
\begin{align*}
& \bigg| \sum_{B \mbox{ small}} \wh{g}(B)^2 \sum_{\emptyset \subsetneq A \subseteq \pi(B)} \wh{f}(A) \Ex[\chi_A(x) \chi_B(y) \chi_B(z)] \bigg| \\
\leq \quad &
\bigg( \sum_{B \mbox{ small}} \wh{g}(B)^2 \sum_{\emptyset \subsetneq A \subseteq \pi(B)} \wh{f}(A)^2 \bigg)^{1/2}
\bigg( \sum_{B \mbox{ small}} \wh{g}(B)^2 \sum_{\emptyset \subsetneq A \subseteq \pi(B)} \Ex[\chi_A(x) \chi_B(y) \chi_B(z)]^2 \bigg)^{1/2} \\
\leq \quad & \bigg( \sum_{B \mbox{ small}} \wh{g}(B)^2 \sum_{\emptyset \subsetneq A \subseteq \pi(B)} \wh{f}(A)^2 \bigg)^{1/2}.
\end{align*}

Since at least an $\eps$ fraction of $e \in E_{i,j}$ is good, at least an $\eps/2$ fraction of $v \in V_j$ satisfies that at least an $\eps/2$ fraction of $(u, v) \in E_{i, j}$ is good. 
Call such $v$ {\em good}. If $v$ is good, 
\begin{align*}
& \eps /2 \leq \Ex_{(u, v) \in E_{i,j}} \bigg[ \bigg| \beta_{u,v} - \alpha_u \alpha_v^2 \bigg| \bigg] 
\leq 2\delta + 2\eta \delta^{-4} + 
\Ex_{(u, v) \in E_{i,j}} \bigg[ \bigg( \sum_{B \mbox{ small}} \wh{g}(B)^2 \sum_{\emptyset \subsetneq A \subseteq \pi(B)} \wh{f}(A)^2 \bigg)^{1/2} \bigg].
\end{align*}

Consider the randomized assignment where all $v \in V_j$ first chooses a set $B \subseteq R$ with probability $\wh{g}(B)^2$ and gets random $b \in B$. 
(Similarly, $u \in V_i$ chooses a set $A \subseteq L$ with probability $\wh{f}(A)^2$ and gets random $a \in A$.) 
Since the sum of squared Fourier coefficients is at most $1$ for every $f$ and $g$, this is a well-defined strategy. For good $v$, it will satisfy at least a
\[
\frac{(\eps/2 - 2\delta - 2\eta\delta^{-4})^2}{|A||B|} \leq 
\delta^{4} (\eps/2 - 2\delta - 2\eta\delta^{-4})^2 
\]
fraction of constraints incident on $v$ in expectation. Therefore, there is an assignment between that satisfies at least an
$(\eps \delta^{4}/2) \cdot (\eps/2 - 2\delta - 2\eta\delta^{-4})^2$ fraction of edges in $E_{i, j}$. 
\end{proof}

\subsection{Make instances dense}
\label{subsec:dense}
In this section, we show how to convert hard instances to ensure $|H| = \omega(|V|^2)$ while preserving hardness, finishing the proof of Theorem~\ref{thm:hvc}. 
From the previous discussion, given an edge-weighted $3$-hypergraph $\calh = (V, H)$ with $n = |V|$, $m = |H|$, and $k = n/2$
(without loss of generality, assume that the sum of weights is $1$), 
it is NP-hard to distinguish whether (1) there exists $S \subseteq V$ that intersects every hyperedge or (2) 
every $S \subseteq V$ with $|S| \leq n/2$ intersects hyperedges of weight at most $(7/8+\eps)$, for any constant $\eps > 0$. 

Let $b = \max(n, m)^{\beta}, c = b^{2.5}$ where $\beta$ is a constant chosen later. Our reduction creates a new hypergraph $\calh' = (V', H')$ where
\begin{itemize}
\item $V' = V \times [b]$. 
\item For each hyperedge $(u, v, w) \in H$ with weight $w(u,v,w)$, for $\lfloor c \cdot w(u,v,w) \rfloor $ times independently, 
\begin{itemize}
\item Sample $x, y, z \in [b]$ independently. 
\item Add hyperedge $((u,x), (v,y), (w,z))$ to $H'$. 
\end{itemize}
\item If any hyperedge is added more than once, delete all occurrences of the hyperedge. 
\end{itemize}
By the last step, $\calh'$ is simple. The number of hyperedges added before the last step is at least 
\[
\sum_{e \in H} \lfloor c \cdot w(e) \rfloor
\geq \big( \sum_{e \in H} c \cdot w(e) \big) - m = c - m.
\]
Fix $(u, v, w) \in H$ and let $(x, y, z)$ be the triple sampled in the $i$th iteration for $(u, v, w)$ such that $((u,x), (v,y), (w,z))$ is added to $H'$. 
The probability that the same triple $(x, y, z)$ is chosen in another iteration so that $((u,x), (v,y), (w,z))$ is deleted in the last step is at most $c / b^3$. 
Therefore, the expected number of hyperedges deleted in the last step of the reduction is at most $c \cdot (c/b^3)$, and 
the with probability at least $0.9$, it is at most $c \cdot (10c / b^3)$. 

\paragraph{Completeness.} If $S \subseteq V$ intersects every hyperedge in $H$ with nonzero weight, $S' = S \times [b]$ does the same for $H'$. 

\paragraph{Soundness.} For soundness, we upper bound the number of hyperedges before the last deletion step intersected by $S'$ with $|S'| \leq |V'| / 2$, because this is only an overestimate.  
Fix $e = (u, v, w) \in E$ and 
consider the hyperedges created when considering $e$ and let $c_e = \lfloor c \cdot w(e) \rfloor$ be the number of them. 
Fix $X, Y, Z \in [b]$ and $\alpha_u = |X|/b$, $\alpha_v = |Y|/b$, $\alpha_w = |Z|/b$. The expected number of hyperedges not intersecting $(\{ x \} \times X) \cup (\{ y \} \times Y) \cup (\{ z \} \times Z)$ is exactly 
$c_e (1-\alpha_u)(1-\alpha_v)(1-\alpha_w)$. 
By Hoeffding's bound, the probability that it is less than the expected value minus $t$ is at most $\exp(-2t^2 / c_e)$. 
By union bound, the probability that this happens for any choice of $u, v, w, X, Y, Z$ is at most 
\[
m \cdot 2^{3b} \cdot \exp(-2t^2 / c_e) \leq m \cdot \exp(3b - 2t^2 / c).
\]
By taking $t = b^{1.8}$ ensures $2t^2 / c = 2b^{1.1}$ so that the above probability is at most $m \exp(-b^{1.1})$. Since $b$ will be greater than $m$, this probability is $o(1)$. 
Therefore, with probability $1 - o(1)$, for any $(u, v, w) \in H$ and $X, Y, Z \subseteq [b]$, 
the number of hyperedges created from $(u, v, w)$ not intersecting $(\{ x \} \times X) \cup (\{ y \} \times Y) \cup (\{ z \} \times Z)$ is at least 
\[
c_e (1-\alpha_u)(1-\alpha_v)(1-\alpha_w) - t.
\]
Then for any $S' \subseteq V$ with $|S'| = |V'| / 2$, let $\alpha_v := | (\{ v \} \times [b]) \cap S' | / b$ for $v \in V$, so that $\Ex_v[\alpha_v] =1/2$. 
Then the total number of edges not intersecting $S'$ is at least 
\begin{align*}
&\bigg( \sum_{e = (u, v, w) \in H} c_e (1-\alpha_u)(1-\alpha_v)(1-\alpha_w) \bigg) - mt\\
&\geq 
c \cdot \bigg( \sum_{e = (u, v, w) \in H} w_e (1-\alpha_u)(1-\alpha_v)(1-\alpha_w) \bigg) - m(t + 1).
\end{align*}
Note that the first term of the RHS is exactly $c$ times the expected weight of edges of $H$ not intersecting $S$, where $S \subseteq V$ is a random subset that includes $v \in V$ with probability $\alpha_v$ independently. By invoking the soundness condition for $\calh = (V, H)$, the RHS is at least $c(1/8 - \eps) - m(t+1)$. 

\paragraph{Finishing up.}
Therefore, with probability at least $0.9 - o(1)$, $\calh'$ has at least $c - m - 10c^2 / b^3$ hyperedges. 
In the completeness case, there exists $S$ with $|S| \leq |V'| / 2$ that intersects every hyperedge,
and in the soundness case, every $S$ with $|S| \leq |V'| / 2$ does not intersect at least $c(1/8 - \eps) - m(t+1)$ hyperedges. 
Recall the parameter setting $c = b^{2.5}$, $t = b^{1.8}$ and $b = \max(n, m)^{\beta}$. 
Setting $\beta \geq 2$ will ensure $m(t+1) = o(c)$, so that $\calh'$ has $(1-o(1))c$ hyperedges and 
the gap is preserved to be $7/8 + \eps + o(1)$. The number of vertices $|V'| = nb \leq b^{1 + 1/\beta}$ and the number of hyperedges 
$|H'| \geq (1 - o(1))c \geq \Omega(b^{2.5})$, so setting $\beta = 5$ will ensure that $|H'| = \omega(|V'|^{2})$.

\section{Open Problems}\label{sec:open}

In this section, we list some open problems   related to hardness of approximation of clustering objectives. In this regard, the most important and also immediate question is whether \JCH is true?

\begin{open}\label{open:JCH}
Is the Johnson Coverage Hypothesis true?
\end{open}

Another important question is whether there is a `black-box' way to ensure that the hard instance of \JCH has large number of clusters. We point the reader to 
Section~\ref{subsec:dense} which seems to be a first step in this direction.

\begin{open}\label{open:JCHD}
Does \JCH imply \JCHD?
\end{open}

Next, we move to discuss open questions with a more combinatorial-geometric flavor. We showed in Lemma~\ref{lem:cd2} that $g_2(J(q,t+1,t))\ge \sqrt{1+\frac{1}{\sqrt{t^2+t}-t}}$. As $t$ increases, the value of $\sqrt{1+\frac{1}{\sqrt{t^2+t}-t}}$ converges to $\sqrt{3}$. However, the naive upper bound from triangle inequality (Proposition~\ref{prop:ubcd}) states that  
$\gamma_2\le 3$. For small values of $t$ we can indeed obtain an improved value: it is easy to see that $g_2(J(3,2,1))= 2$   by placing the six points on the vertices of a regular hexagon in the plane. On the other hand, we suspect that $\gamma_2\le 2$, and confirming such a claim would also be interesting.    

\begin{open}\label{open:ell2}
Is  $\gamma_2\ge 2$?\end{open}

We provided a few lower bounds on $\gamma_p$ in Section~\ref{sec:gadget}, but we are still very far from having good bounds on it. 

\begin{open}\label{open:ellp}
Is there a closed form expression for $\gamma_p$ (in terms of $p$)?\end{open}

We now shift our attention to understanding various clustering objectives. First, we highlight that there is no inapproximability results for \kmed and \kmean in $\ell_\infty$-metric in the continuous case in $O(\log n)$-dimensions.  The main obstacle that  we are not able to overcome is  that  many natural embedding techniques create fake centers in the $\ell_\infty$-metric. We emphasize the requirement of $O(\log n)$-dimensions because  in higher  dimensions (i.e., $\poly(n)$ dimensions), we were recently able to prove very strong inapproximability for \kmean and \kmed without candidate centers in $\ell_\infty$-metric \cite{CKL21}.  

\begin{open}\label{open:ellinf}
What is the hardness of approximation for \kmean and \kmed in the $\ell_\infty$-metric in $O(\log n)$ dimensions?
\end{open}

In \cite{CKL21} we  highlighted that there are inherent differences between the continuous and discrete cases for clustering problems and maybe basing hardness of clustering problems in the continuous case on \JCH might not lead to a tight understanding. Elaborating, by starting from coloring problems (instead of covering problems), we obtained strong inapproximability results for clustering problems in the continuous case in the $\ell_\infty$-metric. It is natural to ask if this approach can be extended to other $\ell_p$-metrics.

\begin{open}\label{open:contell2}
Can we show inapproximability of \kmean in the continuous case  to a factor more than  $1+\nicefrac{1}{e}$?
\end{open}

Apart from \kmean and \kmed, another clustering objective of interest is \minsum (see Section~\ref{sec:prelims} for the definition). However, for \minsum in $\ell_p$-metrics (for finite $p$), only \APX-Hardness is known, leaving it open to prove stronger inapproximability results. 

\begin{open}\label{open:ellpminsum}
Does \JCH (or \JCHD) imply any non-trivial inapproximability results for \minsum in $\ell_p$-metrics?
\end{open}

Finally, we discuss a more technical challenge. Our analysis of continuous case of \kmed in the Euclidean metric in  Theorem~\ref{thm:kmedl2} is not tight and we raise the following question.

\begin{open}\label{open:contell2kmed}
Assuming \JCHD, can we show that \kmed in Euclidean metric is hard to approximate to a factor less than $1+\frac{\sqrt{2}-1}{e}$?
\end{open}

%
%
%
%

\subsection*{Acknowledgements}
We are truly grateful to Pasin Manurangsi for various detailed discussions that inspired many of the results in this paper.

Ce projet a b\'en\'efici\'e d'une aide de l'\'Etat g\'er\'ee
par l'Agence Nationale de la Recherche au titre du Programme
Appel à projets générique JCJC 2018 portant la r\'ef\'erence
suivante : ANR-18-CE40-0004-01.
Karthik C.\ S.\ was supported by Irit Dinur's ERC-CoG grant 772839, the  Israel Science Foundation (grant number 552/16), the Len Blavatnik, the Blavatnik Family foundation, Subhash Khot's Simons Investigator Award, and by a grant from the Simons Foundation, Grant Number 825876, Awardee Thu D. Nguyen. 
Euiwoong Lee was supported in part by the Simons Collaboration on Algorithms and Geometry.

\bibliographystyle{alpha}
\bibliography{references}

\appendix

\section{Obstacles that need to be Overcome to prove Johnson Coverage Hypothesis}\label{sec:labelcover}

In this section, we consider a few natural approaches to proving \JCH, and demonstrate central obstacles to proceed using this approach. 

It is well known that \mkc is hard to approximate to factor beyond $1-\nicefrac{1}{e}$ and one may wonder if there are any ``simple'' gap preserving reductions from \mkc to Johnson Coverage problem. In the below theorem, we show that if there is such a reduction then there should be a significant blowup in size of the witness.  In fact our result is stronger, as we show that even to reduce gap \mkc to exact Johnson Coverage problem (i.e., at the cost of losing the all the gap in \mkc), we still need to blow up in witness size.  

\begin{theorem}\label{thm:fpt}
Let $\alpha,\delta\ge 0$. Let $z$ be some fixed constant. 
Suppose there is an algorithm $\mathcal{A}$ that on given as input a  \mkc instance $(\mathcal{U},\mathcal{S},k)$, outputs a $(\alpha,z)$-Johnson Coverage instance $([n],E,k')$ such that the following holds. 
\begin{description}
\item[Size:]  $|\mathcal{U}|=|\mathcal{S}|^{O(1)}=n^{O(1)}$ and $k'=F(k)$ for some computable function $F$.
\item[Completeness:] If there are $k$ sets in $\mathcal{S}$ that cover $\mathcal{U}$ then there exists $\C:=\{S_1,\ldots ,S_{k'}\}\subseteq  \binom{[n]}{z-1}$ such that 
$$\con(\C):=\underset{i\in[k']}{\cup} \con(S_i,E)=E.$$
\item[Soundness:]  If no $k$ sets in $\mathcal{S}$ cover $(1-\delta)$ fraction of $\mathcal{U}$ then for every $\C:=\{S_1,\ldots ,S_{k'}\}\subseteq \binom{[n]}{z-1}$ we have ${\left|\con(\C)\right|}\le \alpha\cdot\left|E\right|$.
\item[Running Time:]  $\mathcal{A}$ runs in time $T(k)\cdot \poly(n)$ for computable function $T$.
\end{description}
Then the following consequences hold:
\begin{itemize}
\item If $\delta=0$ then W[2]=FPT.
\item If $\delta\le \nicefrac{1}{k}$ then W[1]=FPT.
\item If $\delta\le \nicefrac{1}{e}-\varepsilon$ for some constant $\varepsilon>0$ then \gapeth is false.
\end{itemize} 
\end{theorem}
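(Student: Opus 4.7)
The plan is to exploit the structural property that in any $(\alpha,z)$-Johnson Coverage instance with $z$ a fixed constant, every element $T\in E$ is contained in exactly $z$ candidate sets, namely the $z$ subsets of size $z-1$ obtained by removing one element of $T$. This bounded element-frequency lets one decide Johnson Coverage by a bounded-depth branching tree, yielding an FPT algorithm in $k'$; combined with the hypothesized reduction $\mathcal{A}$ (whose parameter blow-up $F$ is computable), this produces an FPT algorithm in $k$ for the gap-\mkc problem, and the three conclusions then follow by contradicting the appropriate parameterized hardness result in each regime of $\delta$.

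First I would describe and analyze the branching algorithm for $(\alpha,z)$-Johnson Coverage: pick any not-yet-covered element $T\in E$; branch on the $z$ possible $(z-1)$-subsets of $T$ as a candidate member of the solution; recurse with budget $k'-1$ after pruning $E$ of every element containing the chosen subset; return YES as soon as $E$ is emptied, and FAIL if the budget is exhausted without emptying $E$. The recursion tree has depth at most $k'$ and fan-out at most $z$, so the running time is $z^{k'}\cdot\poly(n)$. Correctness under the promise is immediate: in the YES case some size-$k'$ solution $\mathcal{C}$ exists and will be discovered along some root-to-leaf path, whereas in the NO case every size-$k'$ choice satisfies $|\con(\mathcal{C})|\le\alpha|E|<|E|$ so no branch can empty~$E$.

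Next I would compose the branching algorithm with $\mathcal{A}$. Given an \mkc instance $(\mathcal{U},\mathcal{S},k)$, invoking $\mathcal{A}$ in time $T(k)\cdot\poly(n)$ produces a Johnson Coverage instance $([n],E,k')$ with $k'=F(k)$ and $n\le\poly(|\mathcal{U}|+|\mathcal{S}|)$; running the branching algorithm on top takes an additional $z^{F(k)}\cdot\poly(n)$ time. The overall running time is therefore $G(k)\cdot\poly(|\mathcal{U}|+|\mathcal{S}|)$ for some computable $G$, and thus FPT in $k$. By the completeness and soundness guarantees of $\mathcal{A}$, this composite algorithm decides the promise version of \mkc that distinguishes ``some $k$ sets cover $\mathcal{U}$'' from ``no $k$ sets cover a $(1-\delta)$ fraction of $\mathcal{U}$''.

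Each of the three conclusions then follows by invoking a known parameterized hardness of the corresponding gap version of \mkc: for $\delta=0$ the promise problem is parameterized Set Cover, which is $\mathsf{W}[2]$-hard; for $\delta\le 1/k$ distinguishing ``all covered'' from ``at least $|\mathcal{U}|/k$ elements uncovered'' is \W-hard; and for $\delta\le 1/e-\varepsilon$ an FPT $(1-1/e+\varepsilon)$-approximation for \mkc refutes \gapeth by the reduction of Manurangsi. I expect the only genuine subtlety to be pinpointing the correct off-the-shelf parameterized inapproximability result in the $\delta\le 1/k$ regime; the algorithmic and compositional steps themselves are routine once the bounded element-frequency observation is in place.
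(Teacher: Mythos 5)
Your proposal is correct and follows essentially the same route as the paper: the key insight that every $z$-set $T\in E$ can only be covered by one of its $z$ subsets of size $z-1$ gives a bounded-frequency branching algorithm running in time $z^{k'}\cdot\poly(n)$ (this is exactly the paper's Lemma~\ref{lem:JCHFPT}), and composing it with the hypothesized reduction $\mathcal{A}$ yields an FPT algorithm for gap-\mkc, contradicting the respective parameterized hardness results (W[2]-hardness of Set Cover, W[1]-hardness of \mkc with gap $1/k$ from Theorem~\ref{thm:FPTmkc}, and the \gapeth-based hardness from Theorem~\ref{thm:gapeth}) in the three regimes of $\delta$.
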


In order to understand the consequences, we remark the following:
\begin{remark}Note that $W[2]\neq FPT$ is a weaker assumption than $W[1]\neq FPT$, which is in turn a weaker assumption than $\mathsf{ETH}$ and consequently \gapeth.
\end{remark}

The proof of Theorem~\ref{thm:fpt} follows from Lemma~\ref{lem:JCHFPT}, Theorem~\ref{thm:FPTmkc}, Theorem~\ref{thm:gapeth}, and a result of \cite{DF95}.

\begin{lemma}\label{lem:JCHFPT}
There is an algorithm running in time $z^k\cdot n^{O(1)}$ that can decide any $(0,z)$-Johnson Coverage instance.
\end{lemma}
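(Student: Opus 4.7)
The plan is to give a bounded-search-tree branching algorithm. The key observation is that in a YES instance, every $T \in E$ must be covered by some $S_i$ with $S_i \subset T$ and $|S_i| = z-1$, and since $|T| = z$, there are exactly $z$ candidate $(z-1)$-subsets of $T$. So there are only $z$ "ways" to cover any particular $T$, which is the branching factor we exploit.

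The algorithm proceeds as follows. Maintain a current instance $(E, k')$ and a partial solution $\C$ (initially $\C = \emptyset$, $k' = k$). If $E = \emptyset$, return YES. If $k' = 0$ but $E \neq \emptyset$, return NO. Otherwise pick an arbitrary $T \in E$ and branch on the $z$ possible choices of $S \in \binom{T}{z-1}$ to add to $\C$; in each branch, set $E \leftarrow \{T' \in E : S \not\subset T'\}$ and $k' \leftarrow k' - 1$, and recurse. Return YES if any branch returns YES, otherwise NO.

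For correctness, completeness is clear: if a valid cover $\{S_1,\ldots,S_k\}$ exists, then for any $T \in E$ at least one $S_i$ must be a subset of $T$, so the correct branch is among the $z$ considered, and induction on $k$ finishes. Soundness is also immediate because every branch commits to a legitimate $(z-1)$-set and only removes $z$-sets that are genuinely covered by it, so a YES output always corresponds to a valid cover of size at most $k$. The recursion tree has depth at most $k$ and branching factor $z$, giving at most $z^k$ leaves. Each recursive call does polynomial work (scanning $E$ to identify covered sets), so the total running time is $z^k \cdot n^{O(1)}$.

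There is no serious obstacle here; the only subtle point is to verify that the algorithm never needs to consider $(z-1)$-sets that are not contained in some $T \in E$, which follows from the observation that such a set contributes nothing to coverage and therefore can always be avoided without loss of generality in a minimum-size cover.
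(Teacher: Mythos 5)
Your proof is correct and takes essentially the same bounded-search-tree approach as the paper: pick an arbitrary uncovered $z$-set, branch on its $z$ possible $(z-1)$-subsets, recurse to depth $k$. The paper even explicitly phrases it as ``the same idea as the FPT algorithm for vertex cover,'' and your write-up simply spells out the correctness invariants in a bit more detail.
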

\begin{proof} We essentially follow the same idea as the FPT algorithm for vertex cover. 
Let $([n],E,k)$ be a $(0,z)$-Johnson Coverage instance. Pick an arbitrary set $S\in E$. There are $z$ possible subsets in $\binom{[n]}{z-1}$, say $T_1,\ldots T_z$, that can cover $S$. We branch and consider all $z$ possibilities. Suppose we branch and pick $T_i$. Then we remove all subsets in $E$ that are covered by $T_i$, and repeat by picking another arbitrary subset  $S'$ in $E$. We stop the algorithm after the branching tree is of height $k$.  If there are $k$ subsets in $\binom{[n]}{z-1}$ that cover all subsets in $E$, then in one of the branching, we would have found it. The size of the branching tree is at most $z^k$. and at each step, updating $E$ can be done in linear time. 
\end{proof}

\begin{theorem}[Essentially \cite{KLM19}]\label{thm:FPTmkc}
Given an instance $(\mathcal{U},\mathcal{S},k)$ of \mkc, it is W[1]-Hard (when parameterized by $k$) to distinguish between the following two cases:
 \begin{description}
\item[Completeness:] There are $k$ sets in $\mathcal{S}$ that cover $\mathcal{U}$.
\item[Soundness:]  No $k$ sets in $\mathcal{S}$ cover $\left(1-\frac{1}{k}\right)$ fraction of $\mathcal{U}$.
\end{description}
\end{theorem}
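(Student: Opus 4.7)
The plan is to give a parameterized gap-producing reduction from \textsc{Multicolored} $k$-\textsc{Clique}, which is known to be $\mathsf{W[1]}$-hard when parameterized by $k$. Starting from an instance $G=(V_1\dot\cup\cdots\dot\cup V_k, E)$ with $|V_i|=n$, the goal is to build a \mkc instance $(\mathcal{U},\mathcal{S},k')$ with $k'=k$ so that a multicolored $k$-clique in $G$ corresponds to a $k$-set cover of $\mathcal{U}$, while the absence of such a clique forces every $k$-subfamily of $\mathcal{S}$ to leave at least a $1/k$ fraction of $\mathcal{U}$ uncovered.

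The universe will be split into $k$ equal ``consistency'' regions $\mathcal{U}_1,\ldots,\mathcal{U}_k$, where $\mathcal{U}_i$ encodes the commitment that exactly one vertex from $V_i$ is picked and that it is adjacent to the choices made in every other color class. Concretely, I plan to use the ``threshold/partition'' style gadgets of the \textsc{KLM19} framework: for each color class $i$, $\mathcal{U}_i$ is a collection of ``witness slots'' indexed by a large enough combinatorial object (e.g. tuples of neighbors in the remaining $k-1$ classes) so that (i) for any vertex $v\in V_i$ the set $S_v$ covers all of $\mathcal{U}_i$, and (ii) for $v\in V_i$, the intersection $S_v\cap \mathcal{U}_j$ for $j\neq i$ is precisely the set of slots whose $i$-coordinate is compatible with $v$ under the edge relation of $G$. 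Designing the slots correctly—this is the subtle step—guarantees that the union $S_{v_1}\cup\cdots\cup S_{v_k}$ covers $\mathcal{U}_j$ fully if and only if the $v_i$'s are pairwise adjacent (and in particular one vertex per class is selected).

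For the completeness analysis, if $(v_1,\ldots,v_k)$ is a multicolored clique, then each $\mathcal{U}_i$ is fully covered by $S_{v_i}$ and every cross-constraint is satisfied by the edge $\{v_i,v_j\}\in E$, so $\bigcup_{i\in[k]}S_{v_i}=\mathcal{U}$. For the soundness analysis, any selection of $k$ sets $S_{u_1},\ldots,S_{u_k}$ that covers more than a $(1-1/k)$ fraction of $\mathcal{U}$ must cover every $\mathcal{U}_i$ except possibly one up to a tiny slack; a pigeonhole together with the ``one vertex per color class'' design forces the $u_\ell$'s to hit every color class, and the cross-region coverage then forces pairwise adjacency, yielding a multicolored $k$-clique in $G$.

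The main obstacle will be making the $(1-1/k)$ gap tight: a naive product-style universe gives a much smaller gap (of order $1/\binom{k}{2}$ or $1/n^2$), because missing a single edge of the clique only kills one ``edge-slot.'' To amplify the loss to a full $1/k$ fraction, the witness slots inside each $\mathcal{U}_i$ need to be arranged so that failing any one consistency constraint contaminates an entire region of the universe proportional to $|\mathcal{U}|/k$; this is exactly where the partition-system gadget of \cite{KLM19} (a parameterized analogue of Feige's partition system, tuned so that only $k$ ``balanced'' sets can jointly cover it) is crucial, and where our reduction departs from the trivial edge-by-edge construction.
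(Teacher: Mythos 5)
Your proposed reduction diverges from the paper's proof in its starting point, and this is where the real trouble lies. The paper does not reduce from Multicolored $k$-Clique directly: it first invokes the $\mathsf{W[1]}$-hardness of the \emph{gap} MaxCover problem from~\cite{KLM19} (where one can distinguish MaxCover$(\Gamma)=1$ from MaxCover$(\Gamma)\le\varepsilon$ for arbitrarily small $\varepsilon$ with supernodes of size depending only on $k$ and $\varepsilon$), and then applies Feige's partition-system encoding to turn the $\varepsilon$-gap of MaxCover into the $(1-1/k)$-gap of Max $k$-Coverage. That first step already encapsulates the nontrivial gap amplification. Your proposal tries to go directly from Multicolored $k$-Clique, which has no gap at all in its soundness promise: a NO instance may still have $k$-tuples missing only a single edge. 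You gesture at ``the partition-system gadget of~\cite{KLM19}'' to amplify the loss, but a partition system cannot manufacture a gap from nothing — it can only blow up a pre-existing soundness gap (the $\varepsilon$ of MaxCover) into the $(1-1/k)$ factor. Without first passing through a gapped intermediate problem, the claimed $(1-1/k)$ soundness is not achievable by the route you describe.

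Separately, your condition (i) is internally inconsistent with the rest of the construction. You require that for every $v\in V_i$, the set $S_v$ covers \emph{all} of $\mathcal{U}_i$. But then any selection $(v_1,\ldots,v_k)$ that picks exactly one vertex per color class covers $\mathcal{U}_1\cup\cdots\cup\mathcal{U}_k=\mathcal{U}$ in its entirety, irrespective of whether those vertices are pairwise adjacent; the ``if and only if'' you assert for the union covering $\mathcal{U}_j$ fully is actually only ``if.'' An almost-clique missing a single edge would therefore be accepted as a YES instance, so the reduction is unsound even before one worries about the size of the gap. To fix this in the spirit of the paper, each $\mathcal{U}_i$ must be a genuine partition system in which a single set $S_v$ contributes only one balanced block, and full coverage of $\mathcal{U}_i$ is possible only when $k$ \emph{mutually consistent} blocks are chosen — and, crucially, the inconsistency must already be guaranteed on a large ($1-\varepsilon$) fraction of the super-nodes by the underlying MaxCover soundness, which is exactly what~\cite{KLM19} provides and what a raw Multicolored $k$-Clique instance does not.
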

\begin{proof}[Proof Sketch]
We use the notation and terminology of MaxCover problem given in \cite{KLM19}.  Our starting point is the following result.
\begin{theorem}[\cite{KLM19}]
For every $\varepsilon>0$, given an instance $\Gamma$ of MaxCover as input, it is $W[1]$-Hard to distinguish between the following two cases:
 \begin{description}
\item[Completeness:] MaxCover($\Gamma$) = 1.
\item[Soundness:]  MaxCover($\Gamma$) $\le\ \varepsilon$.
\end{description}
Moreover, this holds even when the size of each left super node is a constant only depending on $k$ and $\varepsilon$.
\end{theorem}
Then we can simply apply Feige's \cite{F98} proof framework to conclude the corollary; details follow. Let the input instance to Maxcover be $\Gamma:=(U:=U_1\cup \cdots \cup U_q,W:=W_1\cup \cdots \cup W_k, E)$. We build an instance $(\mathcal{U},\mathcal{S},k)$ of \mkc as follows:
$$
\mathcal{U}:=\{(i,f)\mid i\in [t], f:U_i\to [k]\}, \mathcal{S}:=\{S_{j,w}\mid j\in[k], w\in W_j\}, \text{ and }
$$
$$
(i,f)\in S_{j,w} \Leftrightarrow \exists u\in U_i \text{ such that }f(u)=j\text{ and }(u,w)\in E.
$$

Its easy to see that a labeling of $W$ corresponds to a $k$-tuple of sets in $\mathcal{S}$. In the completeness case, a labeling of $W$ which yields MaxCover$(\Gamma)=1$ also corresponds to $k$ sets in $\mathcal{S}$ that cover all of $\mathcal{U}$. In the soundness case, we claim that there are at least $(1-\varepsilon)\cdot q$ many universe elements in $\mathcal{U}$ that are not covered by any $k$ sets in $\mathcal{S}$. Note that $|\mathcal{U}|=\sum_{i\in [q]}k^{|U_i|}=O_{k,\varepsilon}(q)$.  With the right choice of $\varepsilon$, the theorem statement follows.
\end{proof}

\begin{theorem}[\cite{Cohen-AddadG0LL19,M20}]\label{thm:gapeth}
Assuming \gapeth, for every $\varepsilon>0$, there is no algorithm running in time $n^{o(k)}$ which given an instance $([n],\mathcal{S},k)$ of \mkc, can distinguish between the following two cases:
 \begin{description}
\item[Completeness:] There are $k$ sets in $\mathcal{S}$ that cover $[n]$.
\item[Soundness:]  No $k$ sets in $\mathcal{S}$ cover $\left(1-\frac{1}{e}+\varepsilon\right)$ fraction of $[n]$.
\end{description}
\end{theorem}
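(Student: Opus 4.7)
}
The plan is to chain together two well-known ingredients: a Gap-ETH-hard starting point with an $n^{o(k)}$ lower bound for a Label-Cover-style problem with small (constant) soundness, and Feige's partition-system reduction that lifts a constant-soundness covering gap to the optimal $(1-1/e)$ threshold. The point is that Feige's reduction is ``FPT-preserving'': it takes a MaxCover instance with $k$ left super-nodes to a \mkc instance with parameter exactly $k$ and a universe of size polynomial in the original instance (as long as each left super-node has constant size).

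First I would invoke the Gap-ETH-based lower bound for MaxCover: building on the W[1]-hardness template of Theorem~\ref{thm:FPTmkc} (essentially~\cite{KLM19}), one shows (as in \cite{Cohen-AddadG0LL19,M20}) that under Gap-ETH, for every small constant $\delta>0$, there is no $N^{o(k)}$-time algorithm distinguishing MaxCover instances of value $1$ from value $\le \delta$, where $N$ is the size of the instance. Moreover, this hardness can be arranged so that each left super-node $U_i$ has size $|U_i|=O_{k,\delta}(1)$ (indeed, as small as $O(1/\delta)$ after the usual gap-amplification via parallel repetition). The key is that the size $N$ of the MaxCover instance stays polynomial in the number of variables of the underlying 3-SAT instance supplied by Gap-ETH, so that a $N^{o(k)}$ algorithm would translate to a $2^{o(n)}$ algorithm on gapped 3-SAT.

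Next I would apply Feige's reduction verbatim (as already reproduced in the proof of Theorem~\ref{thm:FPTmkc}). Given MaxCover instance $\Gamma=(U_1\cup\cdots\cup U_q,\,W_1\cup\cdots\cup W_k,E)$, build the \mkc instance
\[
\mathcal{U}=\{(i,f)\mid i\in[q],\ f:U_i\to[k]\},\qquad \mathcal{S}=\{S_{j,w}\mid j\in[k],\ w\in W_j\},
\]
with $(i,f)\in S_{j,w}$ iff there exists $u\in U_i$ with $f(u)=j$ and $(u,w)\in E$. Because $|U_i|=O_{k,\delta}(1)$, we have $|\mathcal{U}|=\sum_i k^{|U_i|}=\poly(N)$, and the parameter of the new instance is exactly $k$. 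In the completeness case, a satisfying labeling yields $k$ sets that cover all of $\mathcal{U}$. In the soundness case, Feige's analysis (partition-system / random-assignment argument) shows that choosing $k$ sets $S_{j_1,w_1},\ldots,S_{j_k,w_k}$ covers a $(1-1/e+O(\delta))$-fraction of $\mathcal{U}$ only if the implied labeling satisfies more than a $\delta$-fraction of $\Gamma$, contradiction. Choosing $\delta$ small enough relative to $\varepsilon$ gives the desired $(1-1/e+\varepsilon)$ soundness gap.

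Putting it together, any $n^{o(k)}$ algorithm for \mkc with gap $(1-1/e+\varepsilon)$ would solve the constructed MaxCover instance in time $\poly(N)^{o(k)}=N^{o(k)}$, contradicting the Gap-ETH-based lower bound from the first step. The main obstacle in this plan is purely bookkeeping: one must carefully verify that the $N^{o(k)}$-lower bound for MaxCover under Gap-ETH is available with (i) arbitrarily small constant soundness and (ii) left super-nodes of constant size depending only on $k$ and the target gap; both are delivered by the line of work \cite{Cohen-AddadG0LL19, M20}, but piecing together the precise parameters (particularly that parallel repetition / gap amplification does not blow up the parameter $k$) is the delicate step.
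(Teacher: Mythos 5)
The paper does not give its own proof of Theorem~\ref{thm:gapeth}; it is stated as a citation to \cite{Cohen-AddadG0LL19,M20}. So the comparison here is against those references rather than an in-paper argument, and your plan matches the strategy used there: start from a \gapeth-based $N^{o(k)}$ lower bound for MaxCover with arbitrarily small constant soundness and constant-size left super-nodes, then push it through the parameter-preserving Feige-style reduction that the paper already writes out in the proof sketch of Theorem~\ref{thm:FPTmkc}. Your observations that the new parameter is exactly $k$, the universe stays polynomial because $|U_i|$ is bounded, and any $n^{o(k)}$-time algorithm for the produced \mkc instance would refute the MaxCover lower bound, are all the right bookkeeping.

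The one place your sketch stays at the level of a pointer rather than a proof is the soundness calculation. The argument written out in the proof of Theorem~\ref{thm:FPTmkc} only delivers a $(1-\nicefrac{1}{k})$-type gap, not $(1-\nicefrac{1}{e}+\varepsilon)$: it shows at least $(1-\varepsilon)q$ universe elements are missed out of $O_{k,\varepsilon}(q)$ in total, which is a fixed constant fraction but not the $\nicefrac{1}{e}$ threshold. To hit $(1-\nicefrac{1}{e}+\varepsilon)$ you need the genuine partition-system (or ``resistant set'' in the language of Theorem~\ref{thm:maxvc1}) analysis: any $k$ chosen sets induce $k$ labels distributed over the groups, and for each right vertex $v$ the partition system over the coordinates $f : U_i \to [k]$ guarantees that if at most one distinct label ``hits'' $v$ per group (which is what low MaxCover value forces for most $v$), then at most a $1-(1-\nicefrac{1}{k})^k\to 1-\nicefrac{1}{e}$ fraction of the copies $(v,f)$ are covered. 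You invoke this as ``Feige's analysis'' but do not carry out the counting; that counting is the technical heart of the proofs in \cite{Cohen-AddadG0LL19,M20}, so it would need to be filled in for the plan to become a complete proof. Apart from that, your route is the same one the cited works take.
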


\begin{proof}[Proof of Theorem~\ref{thm:fpt}]
Suppose there is an algorithm $\mathcal{A}$ as claimed in the theorem statement. Let  $(\mathcal{U},\mathcal{S},k)$ be a  \mkc instance. We run $\mathcal{A}$ to obtain  a $(\alpha,z)$-Johnson Coverage instance $([n],E,k')$. We then run the algorithm in Lemma~\ref{lem:JCHFPT} to distinguish if $([n],E,k')$ is part of the completeness case or the soundness case. Thus we obtained an algorithm for deciding $(\mathcal{U},\mathcal{S},k)$ that runs in time $T'(k)\cdot \poly(|\mathcal{U}|)$ for some computable function $T'$. Depending on the value of $\delta$, this contradicts either W[2]-Hardness of \mkc \cite{DF95}, Theorem~\ref{thm:FPTmkc}, or Theorem~\ref{thm:gapeth}.
\end{proof}

It's worth noting that Theorem~\ref{thm:fpt} only holds for constant $z$. In fact, if  it suffices to only show \gapeth is false as a consequence in Theorem~\ref{thm:fpt} then we can allow $z$ to be $n^{o(1)}$. On the other hand, we could  completely get away with any kind of restriction on $z$ by setting $\delta>0$ in Theorem~\ref{thm:fpt} and invoking the result in \cite{badanidiyuru2012approximating} instead of Lemma~\ref{lem:JCHFPT}.

Next, we show that in order to prove \JCH, we would require to prove \NP-hardness of a plausibly highly structured Label Cover instance. 
We first recaptiluate here proof of the $(1-\nicefrac{1}{e})$-factor inapproximability of \mkc  shown by Feige \cite{F98}. We present the proof outline below in terms of label cover (as in \cite{M15,DS14}) instead of multi-prover proof systems (as in \cite{LY94,F98}).

We formally define the label cover problem and state its hardness of approximation result that follows from the application of the parallel repetition theorem \cite{R98,DS14} to the PCP theorem \cite{AS98,ALMSS98}. Below we state a restricted bounded degree and bounded alphabet size version of gap label cover problem.

\begin{definition}[Label Cover problem\footnote{The label cover problem as defined here is known in literature as the label cover problem with projection property or as the projection game problem, but we drop the word `projection' here for brevity.}]
Let $\varepsilon>0$, $d,\alpha\in\mathbb{N}$. Let $\Sigma_U,\Sigma_V$ be two finite sets. The input to a $(\eps,d,\alpha)$-label cover problem $\Pi$ is a bipartite graph $G(U\cup V,E)$ and a set of projection functions $\pi=\{\pi_e:\Sigma_U\to \Sigma_V\mid e\in E \}$ such that the following holds:
\begin{itemize}
\item $|\Sigma_U|,|\Sigma_V|\le \alpha$.
\item for all $u\in U\cup V$, we have degree of $u$ is at most $d$.
\end{itemize} 
For every assignment $\sigma:=(\sigma_U:U\to\Sigma_U,\sigma_V:V\to\Sigma_V)$ to $\Pi$, we define $\sat(\Pi,\sigma)$ as follows:
$$
\sat(\Pi,\sigma):=\underset{e:=(u,v)\sim E}{\mathbb{E}}[\pi_{e}(\sigma_U(u))=\sigma_V(v)].
$$ 
The goal of the $(\eps,d,
\alpha)$-label cover problem is to distinguish between the following two cases.
\begin{itemize}
\item \textbf{\emph{Completeness}}: There exists an assignment $\sigma$ to $\Pi$ such that $\sat(\Pi,\sigma)=1$.
\item \textbf{\emph{Soundness}}: For every assignment $\sigma$ to $\Pi$ we have that $\sat(\Pi,\sigma)\le \varepsilon$.
\end{itemize}
\end{definition}

An immediate consequence of the PCP theorem is that it is \NP-hard to decide an instance $\Pi(G,\pi)$  of $(1-\eps,d,\alpha)$-label cover problem for some constants $\eps>0, d,\alpha\in\mathbb{N}$. By applying the parallel repetition theorem to the gap instances arising from the PCP theorem, and followed by a strengthening of the soundness guarantee due to Moshkovitz \cite{M15} we get the following.

\begin{theorem}[Bounded Label Cover Inapproximability \cite{AS98,ALMSS98,R98,M15}]\label{thm:labcov}
For every constant $\varepsilon>0$, there exist constants $d:=d(\varepsilon)\in\mathbb{N}$ and $\alpha:=\alpha(\varepsilon)\in\mathbb{N}$ such that it is \NP-hard to decide an instance $\Pi(G,\pi)$  of $(\eps,d,\alpha)$-label cover problem. Moreover, this holds even for the following soundness guarantee: for every assignment $\sigma_U:U\to\Sigma_U$ we have the following:
$$
\wsat(\sigma_U):=\frac{|\{v\in V\mid \exists u_1\neq u_2\in N(v), \pi_{u_1,v}(\sigma(u_1))=\pi_{u_2,v}(\sigma(u_2))\}|}{|V|}\le \varepsilon
$$
\end{theorem}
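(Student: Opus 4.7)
The plan is to combine the PCP theorem with Raz's parallel repetition theorem to obtain the standard label cover inapproximability, and then upgrade to the weak satisfaction soundness statement via a direct ``extend-and-satisfy'' argument. First, I would invoke the degree-bounded version of the PCP theorem (e.g., via Dinur's construction, or alternatively any standard PCP combined with a generic degree-reduction via expander replacement) to produce a base label cover instance with completeness $1$, soundness $1 - \gamma_0$ for some absolute constant $\gamma_0 > 0$, constant alphabet size $\alpha_0$, and constant degree $d_0$. Applying parallel repetition $t$ times yields a label cover instance with completeness $1$, soundness $(1 - \gamma_0^c)^t$ for some absolute constant $c$ (by Raz's theorem), alphabet size $\alpha_0^t$, and degree $d_0^t$ (since each vertex in the $t$-fold product has degree equal to the $t$-th power of the original degree). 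All three parameters remain constants depending only on $\varepsilon$ once $t$ is fixed.

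The key technical step is converting standard soundness into weak satisfaction soundness. Fix any $\sigma_U : U \to \Sigma_U$ and let $W \subseteq V$ be its weak satisfaction witness set, i.e., the set of $v \in V$ for which there exists a distinct pair $u_1, u_2 \in N(v)$ with $\pi_{u_1,v}(\sigma_U(u_1)) = \pi_{u_2,v}(\sigma_U(u_2))$. For each $v \in W$, choose an arbitrary such witnessing pair and set $\sigma_V(v)$ equal to the common projection value; define $\sigma_V$ arbitrarily on $V \setminus W$. Then both edges $(u_1, v)$ and $(u_2, v)$ are satisfied by $(\sigma_U, \sigma_V)$, so the fraction of satisfied edges is at least $2|W| / |E|$. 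Since the $V$-side degree is bounded by $d := d_0^t$, this is at least $2\,\wsat(\sigma_U)/d$. Contrapositively, if the standard soundness of the repeated instance is $\rho$, then $\wsat(\sigma_U) \le d \rho / 2$ for every $\sigma_U$.

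Finally, one chooses $t$ so that $d_0^t \cdot (1 - \gamma_0^c)^t / 2 \le \varepsilon$. This requires $\log(1/(1-\gamma_0^c)) > \log d_0$, so if the base parameters do not already satisfy this, I would first amplify the base soundness by a constant number of initial repetitions (which only increases $\alpha_0$ and $d_0$ by constant factors) until the above inequality holds, and then tune the outer $t$. Setting $d := d_0^t$ and $\alpha := \alpha_0^t$ produces the desired $(\varepsilon, d, \alpha)$-label cover instance with the weak satisfaction soundness guarantee. The main obstacle in executing the plan is the calibration between degree growth $d_0^t$ and soundness decay $(1 - \gamma_0^c)^t$ under repetition; this is resolved by the preliminary soundness amplification above, though it forces $d$ to grow as $\varepsilon \to 0$ and precludes a uniform bound on the degree, matching what the theorem actually claims. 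A secondary technical point, the requirement that the underlying instance be (approximately) $V$-regular so that $|E| \le d |V|$, can be ensured by a standard regularization gadget applied before repetition without affecting any of the parameters qualitatively.
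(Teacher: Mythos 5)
The paper does not give its own proof of this statement; it is cited, with the base inapproximability from [AS98, ALMSS98, R98] and the weak-satisfaction strengthening explicitly attributed to Moshkovitz [M15]. Your proposal tries to rederive that strengthening via an ``extend-and-satisfy'' reduction, and that reduction step itself is sound: if $\sigma_U$ achieves $\wsat(\sigma_U)=\mu$, assigning each colliding $v$ the common projected value of one witnessing pair satisfies at least $2\mu|V|$ edges, so standard soundness $\rho$ forces $\mu\leq d\rho/2$ whenever the $V$-side degree is at most $d$.

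The gap is in the calibration, and the proposed fix by ``preliminary repetitions'' does not repair it. Parallel repetition multiplies the $V$-side degree and (the logarithm of) the soundness by essentially the same factor: after $t$ repetitions one has $d\approx d_0^t$ and $\rho\leq(1-c\gamma_0^a)^t$ for absolute constants $a,c>0$ (Raz, Rao, or stronger projection-game bounds), so $d\rho\approx\bigl((1-c\gamma_0^a)d_0\bigr)^t$. This tends to $0$ only if $(1-c\gamma_0^a)d_0<1$, which is a property of the \emph{base} game that further repetition cannot create: repeating $r$ times first and then $t$ more simply yields $\bigl((1-c\gamma_0^a)d_0\bigr)^{rt}$, the same ratio. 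For the base projection games coming out of the PCP theorem (constraint-graph gap roughly $1/2$, hence base value around $3/4$, and $V$-side degree at least $2$), every known parallel-repetition bound gives $(1-c\gamma_0^a)d_0\geq 1$, so $d\rho$ never drops below $1$ and $\wsat\leq\varepsilon$ is never forced. The claim that $r$ initial repetitions ``only increase $\alpha_0$ and $d_0$ by constant factors'' is also off: the degree becomes $d_0^r$, which is precisely why the ratio is preserved rather than improved. The factor-of-$d$ loss you pay by reducing $\wsat$ to standard soundness is exactly the obstruction Moshkovitz's result [M15] is invoked to avoid — it bounds the agreement/weak-satisfaction soundness directly rather than via standard soundness — and parallel repetition alone cannot sidestep it.
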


We define a gadget relevant to \mkc:

\begin{definition}
Let $d,q\in\mathbb{N}$ and $S\subseteq [d]^{q}$. We say that $S$ is $(d,q)$-resistant if the following holds: for every $H_1,\cdots, H_d$ be $d$ axis-parallel hyperplanes such that no two are mutually parallel, there is a point in $S$ not contained in any of the hyperplanes. 
\end{definition}

\begin{theorem}[Essentially Feige~\cite{F98}]\label{thm:maxvc1}
 Let $\Pi(G,\pi)$ be a hard instance of $(\eps,d,\alpha)$-label cover problem as in Theorem~\ref{thm:labcov} and let $S$ be a $(d,|\Sigma_V|)$-resistant set.  Consider the following set-system $(\mathcal{U},\mathcal{S})$:
 $$
\mathcal{U}:=\{(v,s)\mid v \in V, s\in S\}, \mathcal{S}:=\{S_{u,a}\mid u\in U, a\in \Sigma_U\}, \text{ and }
$$
$$
(v,s)\in S_{u,a} \Leftrightarrow  (u,v)\in E\text{ and }s_{\pi_{(u,v)}(a)}=v.
$$
Then, it is \NP-Hard to distinguish between the following:
\begin{description}
\item[Completeness:] There are $|U|$ sets in $\mathcal{S}$ that cover $\mathcal{U}$.
\item[Soundness:]  No $|U|$ sets in $\mathcal{S}$ cover $\left(1-\delta\right)$ fraction of $\mathcal{U}$ for some positive constant $\delta$ only depending on $\varepsilon$, $|\Sigma_U|$, $|\Sigma_V|$,  and $d$.
\end{description}
\end{theorem}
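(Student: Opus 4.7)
For completeness, the plan is direct: given a satisfying assignment $(\sigma_U, \sigma_V)$ to $\Pi$, take $\mathcal{C} = \{S_{u, \sigma_U(u)}\}_{u \in U}$, a family of $|U|$ sets. Fix any $(v, s) \in \mathcal{U}$ and identify $N(v)$ with $[d]$ via the canonical indexing used in the definition of the reduction. Every neighbor $u$ of $v$ satisfies $\pi_{(u,v)}(\sigma_U(u)) = \sigma_V(v)$, so the membership condition for $S_{u, \sigma_U(u)}$ collapses to ``$s_{\sigma_V(v)}$ equals the index of $u$''. Since $s_{\sigma_V(v)} \in [d]$ and $v$ has at most $d$ neighbors, exactly one of them matches, covering $(v, s)$.

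For soundness, suppose toward contradiction that a family $\mathcal{C}$ of $|U|$ sets covers more than a $(1 - \delta)$ fraction of $\mathcal{U}$, for a small constant $\delta$ chosen at the end. The plan is to extract from $\mathcal{C}$ a random assignment $\sigma_U$ with $\Ex[\wsat(\sigma_U)] \geq \varepsilon$, contradicting Theorem~\ref{thm:labcov}. Setting $A_u := \{a : S_{u,a} \in \mathcal{C}\}$, so that $\sum_u |A_u| = |U|$, I would draw $\sigma_U(u)$ uniformly at random from $A_u$ when this set is nonempty (arbitrary default otherwise). Defining $p_v$ as the fraction of $s \in S$ with $(v,s)$ covered by $\mathcal{C}$, Markov's inequality yields that at least a $(1 - \delta |S|)$ fraction of $v \in V$ are \emph{fully covered} (that is, $p_v = 1$). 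For each such $v$, the family of covering hyperplanes $\{H_{u, a} : s_{\pi_{(u,v)}(a)} = j_u\}_{u \in N(v),\, a \in A_u}$ (where $j_u \in [d]$ is the canonical index of $u$ in $N(v)$) covers all of $S$, so the $(d, |\Sigma_V|)$-resistance of $S$ forbids it from containing $d$ pairwise non-parallel members. Consequently the set of distinct coordinates $\{\pi_{(u,v)}(a)\}_{u \in N(v),\, a \in A_u} \subseteq \Sigma_V$ has size at most $d - 1$, which forces either (a) some neighbor $u$ of $v$ has $A_u = \emptyset$, or (b) two distinct neighbors $u \neq u' \in N(v)$ collide via $\pi_{(u,v)}(a) = \pi_{(u',v)}(a')$ for some $a \in A_u$, $a' \in A_{u'}$. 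Case (b) produces a weak satisfaction at $v$ under the random $\sigma_U$ with probability at least $1/|\Sigma_U|^2$, so summing over fully covered $v$ would give $\Ex[\wsat(\sigma_U)] \geq \varepsilon$, provided that case (a) vertices account for only a negligible fraction of $V$.

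The main obstacle will be precisely controlling case (a), since the ``dead'' set $U_0 := \{u : A_u = \emptyset\}$ can a priori be nearly all of $U$ (if $\mathcal{C}$ concentrates many labels on few left vertices, with $|A_u|$ up to $|\Sigma_U|$). The plan is to exploit the (quasi-)biregular structure of the hard Label Cover instance produced by the PCP theorem and parallel repetition, which gives $|U|/|V|$ bounded in terms of the degrees (themselves $\leq d$), so that the number of $v \in V$ with at least one dead neighbor is at most $d \cdot |U_0|/|V| = O(d^2) \cdot |U_0|/|U|$; combined with a careful preprocessing of $\mathcal{C}$ that caps $|A_u|$ at a constant $H = H(\delta)$ per $u$ while losing only a bounded fraction of coverage, this forces $|U_0|/|U|$ to be at most $1 - 1/H$. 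Balancing these losses against the resistance-derived gain yields a constant $\delta = \delta(\varepsilon, d, |\Sigma_U|, |\Sigma_V|) > 0$ as claimed. Pushing $\delta$ toward Feige's $1 - 1/e$ threshold would require replacing the ``at least one uncovered point'' guarantee in the resistance definition with a quantitatively stronger partition-system-style bound, which is beyond the scope of this simplified statement.
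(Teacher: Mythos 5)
This theorem is stated in the paper without proof and attributed to Feige, so I evaluate your sketch against the standard Feige-style argument. Your completeness argument is correct, and the high-level soundness plan (decode $\sigma_U(u)$ uniformly from $A_u$, exhibit a collision for most fully-covered $v$, and trade off against dead vertices) is the right one. However, the step where you invoke resistance does not go through: you assert that because the hyperplanes $\{H_{u,a}\}_{u\in N(v),\,a\in A_u}$ cover $S$, resistance ``forbids it from containing $d$ pairwise non-parallel members,'' hence the coordinate set $\{\pi_{(u,v)}(a)\}$ has size at most $d-1$. This inverts the resistance property. $(d,q)$-resistance only guarantees that a \emph{chosen} set of $d$ pairwise non-parallel hyperplanes misses some $s^*\in S$; it does not say that a \emph{covering} family cannot contain $d$ pairwise non-parallel hyperplanes, since $s^*$ may be covered by one of the other hyperplanes in the family. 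Concretely, if some $|A_u|>1$, a single $u$ contributes several hyperplanes that all fix to the same value $j_u$ but at possibly distinct coordinates $\pi_{(u,v)}(a)$ for $a\in A_u$; these hyperplanes are pairwise non-parallel among themselves, so the coordinate set can have size $d$, $d+1$, or more with no collision across distinct $u\neq u'$, and resistance as defined gives no contradiction. Your pigeonhole then fails to produce a collision exactly in this regime.

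The inference is valid only when each $u\in N(v)$ contributes at most one coordinate (e.g., $|A_u|\le 1$ for all $u\in N(v)$): then there are at most $d$ non-parallel hyperplanes, resistance produces an uncovered $s^*$, and full coverage forces either a dead neighbor or a collision. But capping to $|A_u|\le 1$ is far more aggressive than the $|A_u|\le H$ preprocessing you outline and can lose a constant fraction of coverage in an uncontrolled way, so it cannot simply be imposed a priori. Feige's actual soundness argument instead relies on a quantitatively stronger partition-system guarantee (bounding the fraction of $S$ coverable as a function of the number of parts, allowing repeated partitions and repeated part indices), combined with the budget identity $\sum_u |A_u|=|U|$ and the (bi)regularity of the Label Cover graph to balance heavy left vertices against dead ones. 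With only the weak ``some uncovered point'' form of $(d,q)$-resistance, the soundness analysis as you have sketched it does not close; your caveat about case (a) is honest but is not where the real gap is — the gap is in the claimed consequence of resistance in case (b).
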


In fact by using $S$ to be the entire space $[d]^{|\Sigma_V|}$ and with a more tighther analysis, Feige showed the tight $1-1/e$ inapproximability of \mkc. He even noted that a random subset $S$ of small cardinality (roughtly $\tilde{O}(d|\Sigma_V|)$) suffices. We show below that there is no such subset $S$ for which we could use Feige's framework to obtain \JCH.

\begin{theorem}\label{thm:jchgadget}
 Let $\Pi(G,\pi)$ be a hard instance of $(\eps,d,\alpha)$-label cover problem as in Theorem~\ref{thm:labcov} and let $S$ be a $(d,|\Sigma_V|)$-resistant set.  Consider the following set-system $(\mathcal{U},\mathcal{S})$:
 $$
\mathcal{U}:=\{(v,s)\mid v \in V, s\in S\}, \mathcal{S}:=\{S_{u,a}\mid u\in U, a\in \Sigma_U\}, \text{ and }
$$
$$
(v,s)\in S_{u,a} \Leftrightarrow  (u,v)\in E\text{ and }s_{\pi_{(u,v)}(a)}=v.
$$
Suppose $(\mathcal{U},\mathcal{S})$ is an instance of $(0,z)$-Johnson Coverage problem (for some constant $z$) then \gapeth is false.
\end{theorem}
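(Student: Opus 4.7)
The plan is to extract, from the assumed Johnson Coverage structure on $(\mathcal{U},\mathcal{S})$, an FPT algorithm for the tight $(1-1/e+\eps)$-gap \mkc problem, thereby contradicting Theorem~\ref{thm:gapeth}.

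First I would instantiate Feige's original argument from \cite{F98} by choosing $S \subseteq [d]^{|\Sigma_V|}$ to be a random subset of size $\poly(d, |\Sigma_V|)$. With high probability such an $S$ is $(d, |\Sigma_V|)$-resistant, and Feige's analysis then guarantees that the resulting \mkc instance $(\mathcal{U}, \mathcal{S}, k := |U|)$ exhibits the tight gap $(1, 1 - 1/e + \eps')$ for any desired constant $\eps' > 0$: in the completeness case of $\Pi$ there are $|U|$ sets of $\mathcal{S}$ perfectly covering $\mathcal{U}$, and in the soundness case no $|U|$ sets cover more than a $(1 - 1/e + \eps')$ fraction of $\mathcal{U}$. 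Crucially, $k = |U|$ grows with the input size of $\Pi$ while $d, \alpha, |\Sigma_U|, |\Sigma_V|$ are absolute constants depending only on $\eps$, and $|\mathcal{U}| + |\mathcal{S}|$ is polynomial in the input size of $\Pi$.

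By the hypothesis, this same $(\mathcal{U}, \mathcal{S})$ is also a $(0, z)$-Johnson Coverage instance for a constant $z$: the elements of $\mathcal{U}$ are identified with $z$-subsets and the sets in $\mathcal{S}$ with $(z-1)$-subsets of some ground set, with membership corresponding to set containment. Applying Lemma~\ref{lem:JCHFPT} to this identification, one can decide in time $z^{|U|} \cdot \poly(|\mathcal{U}| + |\mathcal{S}|) = 2^{O(k)} \cdot n^{O(1)}$ whether $|U|$ sets of $\mathcal{S}$ cover $\mathcal{U}$ completely. Composing with Feige's gap shows that this algorithm already distinguishes the completeness case of $\Pi$ (where a perfect cover exists) from the soundness case (where even an incomplete $(1-1/e+\eps')$-cover is ruled out, hence a perfect one certainly is too). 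Thus one obtains an FPT-in-$k$ algorithm for the $(1, 1-1/e+\eps')$-gap \mkc problem on a universe of size $n$.

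But this directly contradicts Theorem~\ref{thm:gapeth}: for sufficiently large $n$ one has $n^{\Omega(k)} \gg 2^{O(k)} \cdot n^{O(1)}$, so no such algorithm can exist under \gapeth. The only point requiring care is ensuring that a single random $S$ simultaneously witnesses $(d,|\Sigma_V|)$-resistance and Feige's tight $(1-1/e)$-soundness; this is a routine union bound already present in \cite{F98}. The core new observation driving the contradiction is simply that the Johnson Coverage structure is inherently FPT by Lemma~\ref{lem:JCHFPT}, which is incompatible with any reduction (in particular Feige's) that carries over the $n^{\Omega(k)}$ Gap-ETH hardness of \mkc.
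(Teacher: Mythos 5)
Your overall strategy — extract an FPT algorithm from the assumed Johnson Coverage structure via Lemma~\ref{lem:JCHFPT}, and then derive a parameterized contradiction — is exactly the paper's plan. However, you invoke the wrong hardness result, and this is a genuine logical gap rather than a matter of presentation.

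You conclude by saying that the composed algorithm ``already distinguishes the completeness case of $\Pi$ from the soundness case,'' and then jump to the claim that this yields ``an FPT-in-$k$ algorithm for the $(1,1-1/e+\eps')$-gap \mkc problem,'' which you use to contradict Theorem~\ref{thm:gapeth}. But what you actually built decides the promise problem only on the \emph{specific} \mkc instances produced by Feige's reduction from $\Pi$ with the particular resistant set $S$ of the hypothesis; equivalently, it decides the gap \emph{label cover} problem. That does not give an algorithm for arbitrary gap \mkc instances, which is what the contrapositive of Theorem~\ref{thm:gapeth} requires. Note the direction of Feige's reduction: it is label cover $\to$ \mkc, so it yields ``\mkc easy $\Rightarrow$ label cover easy,'' not ``label cover easy $\Rightarrow$ \mkc easy.'' You have established the hypothesis of the first implication for a restricted family of \mkc instances, but Theorem~\ref{thm:gapeth} cannot be contradicted without an algorithm for general \mkc, and no such algorithm follows. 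The paper avoids this by invoking Theorem~\ref{thm:fptlabcovtotal}, the parameterized hardness of gap label cover itself (with the $\wsat$-soundness guarantee matching Theorem~\ref{thm:labcov}), against which the $z^{|U|}\cdot\poly(n)$ decision procedure for $\Pi$ obtained from Lemma~\ref{lem:JCHFPT} plus Theorem~\ref{thm:maxvc1} is an immediate contradiction.

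A secondary point: the theorem statement is about a given $(d,|\Sigma_V|)$-resistant $S$, so it is cleaner to argue for an arbitrary such $S$ rather than instantiating $S$ as a fresh random subset, since the hypothesis ``$(\mathcal{U},\mathcal{S})$ is a $(0,z)$-Johnson Coverage instance'' is attached to the given $S$ and may fail for a different random one. This is easily fixed, but the switch from Theorem~\ref{thm:gapeth} to Theorem~\ref{thm:fptlabcovtotal} is what is actually needed to close the argument.
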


The proof follows by noting the following lower bound on parameterized label cover problem and Lemma~\ref{lem:JCHFPT}.

\begin{theorem}[Parameterized Label Cover Inapproximability with total disagreement \cite{M20}]\label{thm:fptlabcovtotal}
Assuming \gapeth, for every constant $\varepsilon>0$, there exist constants $d:=d(\varepsilon)\in\mathbb{N}$ and $\alpha:=\alpha(\varepsilon)\in\mathbb{N}$ such that no algorithm running in time $\alpha^{o(k)}$ given as input a instance $\Pi(G,\pi)$  of $(\eps,d,\alpha)$-label cover problem (where $|U|=k$), can distinguish between the following two cases.
\begin{itemize}
\item \textbf{\emph{Completeness}}: There exists an assignment $\sigma$ to $\Pi$ such that $\sat(\Pi,\sigma)=1$.
\item \textbf{\emph{Soundness}}: For every assignment $\sigma_U:U\to\Sigma_U$ we have $
\wsat(\sigma_U)\le \varepsilon.$
\end{itemize}
\end{theorem}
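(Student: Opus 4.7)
The plan is to derive the claimed lower bound by a parameter-preserving reduction from the Gap-ETH-hard Max $k$-Coverage problem (Theorem~\ref{thm:gapeth}) to label cover with weak-sat soundness. This reverses the direction of Feige's classical reduction captured in Theorem~\ref{thm:maxvc1}, where the quantity $\wsat(\sigma_U)$ is essentially what Feige's soundness analysis extracts from a good covering; inverting the correspondence therefore converts an MKC lower bound into the desired label cover lower bound, provided we are careful to preserve the parameter $k=|U|$ throughout (so no naive parallel repetition, which would blow $k$ up to $k^h$).

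Given an MKC instance $([n],\mathcal{S},k)$ with the $(1,1-1/e+\varepsilon)$ gap guaranteed by Theorem~\ref{thm:gapeth}, I would build a label cover instance $\Pi=(G=(U\cup V,E),\pi)$ as follows. Set $U=[k]$ and $\Sigma_U=\mathcal{S}$, so an assignment $\sigma_U$ corresponds to choosing $k$ sets from $\mathcal{S}$. The right side $V$ is a collection of constant-arity test gadgets: for a parameter $d=d(\varepsilon)$, include a vertex $v=(e,i_1,\ldots,i_d)$ for each element $e\in[n]$ and each $d$-tuple of distinct indices $i_1<\cdots<i_d$ in $[k]$, with $v$ adjacent to $i_1,\ldots,i_d$. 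I would design $\Sigma_V$ and the projections $\pi_{(i_j,v)}:\Sigma_U\to\Sigma_V$ via a Feige-style partition-system encoding so that $v$ is weakly satisfied iff at least one of $\sigma_U(i_1),\ldots,\sigma_U(i_d)$ contains $e$: non-covering labels are sent to distinct per-neighbor dummy symbols (tagged by the identity of the neighbor, so that two non-covering labels never mutually agree), while covering labels are collapsed to a common ``cover'' symbol that produces the weak-sat agreement. In the completeness case, any $\sigma_U$ coming from an exact covering makes every $v$ weakly satisfied. In the soundness case, the $(1-1/e+\varepsilon)$-coverage bound combined with a straightforward averaging argument over the uniformly chosen tuple $(i_1,\ldots,i_d)$ yields $\wsat(\sigma_U)\le(1-1/e+\varepsilon)+o_d(1)$, which can be driven below any prescribed $\eps'>0$ by enlarging $d$.

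The resulting instance has $|U|=k$ exactly preserved, constant degree $d$, and alphabet $\Sigma_U$ of size $|\mathcal{S}|=\poly(n)$. A final alphabet-reduction step, composing $\Pi$ with a constant-query PCP inner verifier that tests agreement on long-code encodings of $\sigma_U(u)$, shrinks the alphabet to a constant $\alpha=\alpha(\varepsilon)$ while keeping $k$, $d$, and the weak-sat gap essentially unchanged. An $\alpha^{o(k)}$-time algorithm for the final label cover instance would then solve the original MKC instance in time $\poly(n)\cdot\alpha^{o(k)}=n^{o(k)}$, contradicting Theorem~\ref{thm:gapeth}. The hard part will be step~2: engineering the projections so that weak-satisfaction is in near-perfect correspondence with the predicate ``at least one neighbor covers $e$''. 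Because weak-sat is an equality of projected symbols, a monotone Boolean predicate such as OR is not directly expressible by a plain binary projection, and the partition-system encoding with neighbor-indexed dummies is needed to break this symmetry while keeping the completeness clean. Controlling the subsequent alphabet reduction so that it preserves both $k$ and the weak-sat soundness (rather than only standard $\sat$) is the other technical hurdle.
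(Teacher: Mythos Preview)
This theorem is not proved in the paper; it is quoted from \cite{M20} and used as a black box in the proof of Theorem~\ref{thm:jchgadget}, so there is no in-paper argument to compare against.

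Your proposal has several genuine gaps. First, completeness in the statement requires $\sat(\Pi,\sigma)=1$, i.e., a full two-sided assignment satisfying \emph{every} edge, whereas you only argue that $\wsat(\sigma_U)$ is large in the yes case. With your projections (covering labels collapse to a single ``cover'' symbol, non-covering labels go to neighbor-tagged dummies), the $d$ projections at a vertex $v=(e,i_1,\dots,i_d)$ coincide only when all $d$ chosen sets contain $e$, which a generic $k$-cover does not provide; hence $\sat=1$ fails. Second, even the weak-sat correspondence is off by one: $\wsat$ asks for two distinct neighbors whose projections \emph{agree}, so under your encoding $v$ is weakly satisfied iff at least \emph{two} of the $\sigma_U(i_j)$ contain $e$, not one. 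Third, your soundness bound $\wsat(\sigma_U)\le(1-1/e+\varepsilon)+o_d(1)$ cannot be ``driven below any prescribed $\eps'$ by enlarging $d$'': the term $1-1/e+\varepsilon\approx 0.63$ is independent of $d$, so $\wsat$ remains bounded away from $0$. Finally, standard alphabet reduction via PCP composition replaces each left vertex by a cloud of inner-verifier queries and does not preserve $|U|=k$; a composition that simultaneously keeps $k$ fixed, reduces the alphabet to a constant, and preserves $\wsat$-type soundness (rather than ordinary $\sat$) is precisely the nontrivial content being assumed away here.
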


\end{document}